\newif\ifarxiv
\setlist{nosep}
\newtheorem{theorem}{Theorem}
\newtheorem{lemma}[theorem]{Lemma}
\newtheorem{corollary}[theorem]{Corollary}
\newtheorem{definition}[theorem]{Definition}
\newtheorem{fact}[theorem]{Fact}
\newcommand{\cM}{\mathcal{M}}
\newcommand{\cA}{\mathcal{A}}
\newcommand{\ba}{\mathbf{a}}
\newcommand{\ind}{\mathbf{1}}
\newcommand{\eps}{\epsilon}
\newcommand{\N}{\mathbb{N}}
\newcommand{\Z}{\mathbb{Z}}
\newcommand{\R}{\mathbb{R}}
\newcommand{\bx}{\mathbf{x}}
\newcommand{\bbx}{\bar{x}}
\renewcommand{\phi}{\varphi}
\newcommand{\te}{\tilde{e}}
\newcommand{\tq}{\tilde{q}}
\newcommand{\tS}{\tilde{S}}
\newcommand{\bS}{\bar{S}}
\newcommand{\ds}{\mathbf{S}}
\newcommand{\ods}{\overline{\ds}}
\newcommand{\bA}{\mathbf{A}}
\newcommand{\cQ}{\mathcal{Q}}
\newcommand{\tO}{\tilde{O}}
\newcommand{\tr}{\tilde{r}}
\newcommand{\reach}{\mathrm{CountDistinct}}
\newcommand{\freq}[1]{\mathrm{CntOcc}^{#1}}
\newcommand{\freqhist}[1]{\mathrm{FREQUENCYHIST}^{#1}}
\title{Private Counting of Distinct and $k$-Occurring Items in Time Windows}
\author{
Badih Ghazi \\
Google Research \\
{\small \texttt{badihghazi@gmail.com}}
\and
Ravi Kumar \\
Google Research \\
{\small \texttt{ravi.k53@gmail.com}}
\and
Pasin Manurangsi \\
Google Research \\
{\small \texttt{pasin@google.com}}
\and
Jelani Nelson \\
UC Berkeley \& Google Research \\
{\small \texttt{minilek@alum.mit.edu}}}
\date{}
\newcommand{\google}{Google Research, Mountain View, CA, USA}
\author{Badih Ghazi}{\google}{badihghazi@google.com}{}{}
\author{Ravi Kumar}{\google}{ravi.k53@gmail.com}{}{}
\author{Jelani Nelson}{UC Berkeley, CA, USA \& \google}{minilek@alum.mit.edu}{}{}
\author{Pasin Manurangsi}{\google}{pasin@google.com}{}{}
\authorrunning{B. Ghazi, R. Kumar, J. Nelson, and P. Manurangsi} 
\keywords{Differential Privacy, Algorithms, Distinct Elements, Time Windows} %TODO mandatory; please add comma-separated list of keywords
\begin{document}

\maketitle

\begin{abstract}
In this work, we study the task of estimating the numbers of distinct and $k$-occurring items in a time window under the constraint of differential privacy (DP). We consider several variants depending on whether the queries are on general time windows (between times $t_1$ and $t_2$), or are restricted to being cumulative (between times $1$ and $t_2$), and depending on whether the DP neighboring relation is event-level or the more stringent item-level. We obtain nearly tight upper and lower bounds on the errors of DP algorithms for these problems.
En route, we obtain an event-level DP algorithm for estimating, at each time step, the number of distinct items seen over the last $W$ updates with error polylogarithmic in $W$; this answers an open question of Bolot et al. (ICDT 2013).

% \badih{Regarding the title, not sure (i) whether ``counting distinct elements'' would be more familiar to the NeurIPS community than ``reach'', and (ii) if our use of ``frequency'' will be confusing to the reviewers given that it collides with ``frequency estimation'', which is used in the DP community to mean another functionality, namely estimating the number of occurrences of a given universe element?}

% \badih{For the title, what about something like: ``Cumulative and Time-Window Private Counting of Distinct Elements and $k$-Heavy Hitters''?}

% \badih{Mention that we answer an open question of \cite{BolotFMNT13}.}

% \badih{Algorithms for counting $k$-??? elements.}

\end{abstract}

\ifarxiv
\thispagestyle{empty}
\addtocounter{page}{-1}
\newpage
\fi

\section{Introduction}
Counting distinct elements is a fundamental problem in computer science with numerous applications across different areas including data mining \cite{shukla1996storage, acharya1999aqua,heule2013hyperloglog,padmanabhan2003multi}, computational advertising~\cite{LeckenbyH98,CheongGK10}, computational biology \cite{breitwieser2018krakenuniq, baker2019dashing}, graph analysis \cite{palmer2002anf}, detecting Denial of Service
attacks and port scans \cite{akella2003detecting, estan2003bitmap}, query optimization \cite{selinger1989access, poosala1996improved}. %, and streaming algorithms \cite{kane2010optimal}.
Another related problem
%especially in computational advertising, 
is that of counting {\it $k$-occurring items} in which we wish to count the number of items that occur at least $k$ times in the data for some given parameter $k$.
%(this task is known as computing {\it $k$-frequency} in computational advertising). 
%Furthermore, these counting tasks capture natural use cases across other domains, such as releasing the number of people who have had COVID-19 at least $k$ times or the number of terms queried at least $k$ times in a search engine.
Estimating distinct and $k$-occurring elements over extended time periods capture natural use cases, 
where we wish to study these estimates over particular time windows.
%e.g., estimating the number of people who have had COVID-19 (at least once in the case of  distinct, and, say, exactly $k=2,3$ times in the case of $k$-occurring)  or the number of distinct queries in a search engine.

A concrete application stems from advertising where these problems correspond to the so-called reach and frequency%
\footnote{Here, {\it reach} is the number of individuals (or households) exposed to an ad campaign, whereas the {\it $i$th frequency} $f_i$ denotes the number of individuals exposed to the ad campaign exactly $i$ times \cite{enwiki:1021978492}.} 
of an ad campaign, and are considered two of the most useful metrics \cite{LeckenbyH98,CheongGK10}.  These estimates can then be used to train machine learning (ML) models. For example, it is common for advertisers to seek to optimize the reach and frequency of their campaigns subject to a fixed ad spend budget; they can train an ML model to forecast the reach and frequency histogram for any ad spend budget \cite{quantcast2022}. Constructing the training dataset for this task in turns entails estimating the reach and frequency for a given ad spend budget. Doing so runs the risk of leaking sensitive information about user activity on the publisher sites, which motivates the study of privately estimating reach and frequency \cite{ghazi2022multiparty}. 
%A similar ML use case can be unlocked through (private) estimates of $\reach$ and $\freq{}$ in other use cases as well: e.g., predicting the number of COVID cases or reinfections in different cities.

Increased user awareness and regulatory scrutiny have led to significant research on privacy-preserving algorithms. Differential privacy (DP) \cite{DworkMNS06, dwork2006our} has emerged as a widely popular method for quantifying the privacy of algorithms. Loosely, DP dictates that the output of the (randomized) algorithm remains statistically indistinguishable when a dataset is replaced by a  ``neighboring'' dataset (which differs on the contributions from a single user). 

\begin{definition}[Differential Privacy (DP)~{\cite{DworkMNS06,dwork2006our}}]
For $\eps, \delta \geq 0$, an algorithm $\cM$ is said to be $(\eps, \delta)$-differentially private (i.e. $(\eps, \delta)$-DP) if, for any neighboring input datasets $D, D'$ and any set $O$ of outputs, we have $\Pr[\cM(D) \in O] \leq e^{\epsilon} \cdot \Pr[\cM(D') \in O] + \delta$.
\end{definition}

%The statistical indistinguishability is parameterized by a pair $(\epsilon, \delta)$ of positive real numbers; the smaller their values are, the more private the algorithm is. 
For $\delta = 0$, the algorithm is said to be \emph{pure}-DP, for which we abbreviate $(\eps, \delta)$-DP as just $\eps$-DP. Otherwise, when $\delta > 0$, the algorithm is said to be \emph{approximate}-DP. 

For counting distinct elements, previous work on DP algorithms includes \cite{smith2020flajolet, chen2020distributed, ghazi2022multiparty}, all of which have focused on the case of static datasets, i.e., those on which the queries are evaluated once. However, in practice, datasets that are collected over an extended time period are more prevalent.
More precisely, at each time step $t$, a histogram $S^t$ of universe elements (contributed from multiple users) is added to the dataset. This setting---of estimating queries over a varying dataset and over an extended time period---is the focus of our work. 

%Estimating $\reach$ and $\freq{}$ over extended time periods capture natural use cases across many domains, such as releasing the number of distinct patients in a hospital, the number of people who have had COVID-19 (at least once in the case of $\reach$, and, e.g., exactly $i=1, 2, $ or $3$ times in the case of $\freq{}$)  or the number of distinct terms queried in a search engine.

\subsection{Our Contributions}
We first study the task of counting distinct elements (or ``reach'') over an extended time period. We consider three natural types of queries depending on the underlying time intervals: \emph{time-window} queries (spanning time steps $i$ through $j$ for all $1 \le i \le j \le T$, where $T$ is a given fixed \emph{time horizon} parameter), \emph{cumulative} queries (with the restriction that $i = 1$), and \emph{fixed-window} queries (with the restriction that $j-i = W-1$ for a fixed $W$). %We stress that, throughout the paper, we require the answers to \emph{all} queries to be differentially private. 
We moreover consider the two most natural DP neighboring relations: \emph{item-level} DP (where two datasets are neighboring if they differ on a single universe element's occurrences across \emph{all} time steps), and \emph{event-level} DP (where two datasets are neighboring if they differ on a single universe element's occurrence for a \emph{single} time step).
In addition to the count distinct problem, we also study the closely related task of estimating the number of universe elements that appear at least $k$ times in the input dataset; we call such an item {\em $k$-occurring}. Finally, we consider both the \emph{singleton} setting where a single item arrives at each time step, as well as the \emph{bundle} setting where no such restriction is enforced. For each combination of the previous choices, we prove nearly tight upper and lower bounds on the errors of both pure- and approximate-DP algorithms; see \Cref{ref:summary_table}.

% \badih{Feedback from Pasin: good to mention time-window, cumulative etc. Skip the neighboring notions and the singleton vs bundle setting. State the algorithm in the most general setting it works in. Mention the lower bound and that in fact it works in a more restricted setting that we will get to later. Then, naturally transition to a proof sketch for the subset (and skip the proof sketch for the rest of the combinations)}

\subsection{Organization}
In Section~\ref{sec:setting}, we formally define the setting considered in this paper. We state our results and provide a technical overview of the proofs in Section~\ref{sec:results}. %We discuss relevant related work in Section~\ref{sec:related_work}.
%We present our fixed-window and time-window event-level DP algorithms in Section~\ref{sec:fixed_time_event_algs}.
The next four sections are devoted to the proofs for the case of cumulative, fixed-window, and time-window settings, respectively.
We then conclude with future directions in Section~\ref{sec:conclusion}. 
\ifarxiv
We also discuss additional relevant related work in~\Cref{sec:related_work}.
\fi
%All the missing proofs appear in the Supplementary Material, which also contains additional discussion on related work. 

% \badih{For cumulative reach, the upper bound follows from Bolot et al. Even for cumulative frequency, it should follow as a corollary of one of the generic theorems at the end of their paper; Section 7.2 on `` Holistic Predicate Sum''. Specifically, an analogue of their Lemma 6 should hold for frequency.}

\section{Setting}\label{sec:setting}

Let $[U]$ be a universe of \emph{items}.%
\footnote{Throughout, we write $[m]$ for a non-negative integer $m$ as shorthand for $\{1, \dots, m\}$.}  Let $T$ be the number of time steps (aka, \emph{time horizon}). 
Our input dataset is $\ds = (S^1, \dots, S^T)$ where $S^t$ is a multiset of items at time step $t \in [T]$, i.e., $S^t_j$ denotes the number of occurrences of item $j$ at time step $t$. 
%At each time step $t \in [T]$, the input is a multiset $S^t \in \Z_{\geq 0}^{[U]}$, where $S^t_j$ denotes the number of occurences of item $j$ at time step $t$. 
%Given a function $f: \Z_{\geq 0}^{[d]} \to \R$, our goal is to compute $v_t := f(S_1 + \cdots S_t)$ for all $t \in \N$. (The specific functions of interest are defined in more details below.)
We write $S^{[t_1, t_2]}$ as shorthand for $S^{t_1} + \cdots + S^{t_2}$.

\subsection{Utility Definition}
%\label{sec:util-def}
%
All problems considered in this work can be viewed as a set $\cQ$ of \emph{queries}, where each query $q \in \cQ$ maps each dataset $\ds$ to a real number. The goal of the algorithm is then to output an estimate $\tq_{\bS}$ of $q(\ds)$. Following~\cite{BolotFMNT13}, we say that an algorithm satisfies \emph{$(\alpha, \beta)$-utility} if and only if for every query $q \in \cQ$, $\Pr[|\tq_{\bS} - q(\bS)| \le \alpha] \geq 1 - \beta$.

%we have $\Pr[|e^q_{i, j} - \te^q_{i, j}| \le \alpha] \geq 1 - \beta$. 
For simplicity of exposition, we sometimes only state the bound on $\alpha$ (referred to as the ``error'') for a small constant $\beta$ (e.g., $0.1$). It is simple to see that, in all the cases that we consider, we can get $\beta$ to be arbitrarily small at the cost of at most a $(\log (1/\beta))^{O(1)}$ multiplicative factor in $\alpha$.

\subsection{Cumulative, Time-Window, and Fixed-Window Queries}
Given a function $g: \Z_{\geq 0}^{[U]} \to \R$, we consider three problem versions.
The first is the \emph{time-window} version, which requires us to compute estimates $\te^g_{t_1, t_2}$ of $e^g_{t_1, t_2} := g(S^{[t_1, t_2]})$ for all $1 \leq t_1 \leq t_2 \leq T$. The second is the \emph{cumulative} version, which restricts to the case where $t_1 = 1$. The third is the \emph{fixed-window} version, which is the restriction of the time-window version to the case where $t_2 - t_1 = W - 1$ for some $W \in [T]$ (given beforehand). We view each tuple $(g, t_1, t_2)$ as a query for the utility definition above.
Moreover, the algorithm should be DP when $\te^g_{t_1,t_2}$'s are published simultaneously, for \emph{all} $t_1,t_2$'s of interest.

For simplicity, we assume that the algorithm gets to see all the data (i.e. $S^1, \dots, S^T$) before answering the queries. Another model (also considered in \cite{BolotFMNT13}) is the aforementioned continual release setting where $S^1, \dots, S^T$ arrives in order and we have to answer each query as soon as we have sufficient information to do so (i.e. the algorithm has to answer $e^{g}_{t_1, t_2}$ immediately after receiving $S^{t_2}$). As explained below, our algorithms are obtained through reductions to range query problems; since there are continual release algorithms for the latter~\cite{DworkNPR10,ChanSS11,DworkNRR15}, these imply continual release algorithms for the problems considered in our work too.
We omit the full statements and proofs for brevity.

\subsection{Neighboring Notions}
As we consider DP, we have to specify the neighboring notion under which our algorithm should be DP.  
There are two neighboring relations that we consider in this work:
\begin{itemize}
\item \textit{Item-level DP:} Two datasets $(S^1, \dots, S^T)$ and $(\bS^1, \dots, \bS^T)$ are neighbors iff there exists $u \in [U]$ for which $S^t_{u'} = \bS^t_{u'}$ for all $u' \ne u$ and all $t \in [T]$. In other words, one results from adding/removing a single item $u \in [U]$ to/from (possibly multiple) multisets.

\item \textit{Event-level DP:} Two datasets $(S^1, \dots, S^T)$ and $(\bS^1, \dots, \bS^T)$ are neighbors iff $\sum_{t \in [T]} \|S^t - \bS^t\|_1 \leq 1$. In other words, one results from adding/removing a single element $u \in [U]$ to/from a \emph{single} multiset.
\end{itemize}

We remark that our item-level DP notion coincides with \emph{user-level DP} as in, e.g.,~\cite{DworkNPRY10}.

\subsection{Counting Distinct and $k$-Occurring Elements}
The family of queries we consider is $\freq{\geq k}(S) := |\{u \in [U] \mid S_u \geq k\}|$, i.e., the number of items that appear at least $k$ times in $S$. %Similarly to above, we use $\freq{\geq k}$ to also denote the set of queries that only contains $\freq{\geq k}$.
%There are several (families of) queries that we consider:
%\begin{itemize}
%\item $\reach$ is the function $\reach(S) := |\{u \in [U] \mid S_u > 0\}|$, i.e., the number of distinct items in $S$. We overload the notation and use $\reach$ to also denote the set of queries that only contains $\reach$.
%\item $\freq{\geq k}$ is defined as $\freq{\geq k}(S) := |\{u \in [U] \mid S_u \geq k\}|$, i.e., the number of items that appear at least $k$ times in $S$. Similarly to above, we use $\freq{\geq k}$ to also denote the set of queries that only contains $\freq{\geq k}$.
%ß\end{itemize}

Clearly, $\freq{\geq 1}$ is counting the distinct elements. We also note that it is possible to define $k$-occurring queries that count the number of items that appear \emph{exactly} $k$ times in $S$. It turns out that the bounds in our definitions above carry over with only constant multiplicative overheads. 
\ifarxiv
We provide a more detailed discussion in~\Cref{sec:exact-freq}. 
\else
We omit the details in this version.
\fi

\subsection{Singleton vs Bundle Setting}
Several previous works in the literature (e.g., \cite{DworkNPR10,ChanSS11, BolotFMNT13}) have considered the setting where $\|S_t\|_1 \leq 1$ for all $t \in [T]$. We refer to this setting as the \emph{singleton} setting, whereas the setting where no such restriction is enforced will be referred to as the \emph{bundle} setting.

\section{Our Contributions}\label{sec:results}

\subsection{Highlights of our results}

We derive tight utility bounds in all cases up to polylogarithmic factors in $T$ and the dependency on $k$. %, where errors are measured additively.  
Our results are summarized in Table~\ref{ref:summary_table}. The highlights  of our results are:

\begin{enumerate}
\item {\bf Polylogarithmic errors for all event-level DP problems.} In the event-level DP, we give algorithms for fixed-window and time-window with errors $O(k \log^{1.5} W)$ and $O(k \log^3 T)$ respectively. Note that a cumulative algorithm with $O(\log^{1.5} T)$ error was already presented in~\cite{BolotFMNT13} and it works even in the item-level DP setting. On the other hand, the same work \cite{BolotFMNT13} raised as an open question getting an algorithm with error polylogarithmic in $W$ for the fixed window setting; our aforementioned algorithm answers this question in the affirmative.

\item {\bf Dependency on $k$.} The above upper bound for cumulative does not depend on $k$ whereas those for fixed/time-window grow with $k$. We show that this polynomial dependence on $k$ is necessary.

\item {\bf Lower bounds in other settings.} We show lower bounds of $T^{\Omega(1)}$  and $(T/W)^{\Omega(1)}$ for time-window and fixed-window in the item-level DP setting; we also show matching upper bounds. These lower bounds give separations between event-level and item-level DP.
\end{enumerate}

\setlength{\tabcolsep}{1pt}
\newcommand{\scEvent}{\scriptsize{Event}}
\newcommand{\scItem}{\scriptsize{Item}}
\newcommand{\scLevel}{\scriptsize{Level}}
\newcommand{\scpure}{\scriptsize{pure}}
\newcommand{\scapprox}{\scriptsize{approx}}
\newcommand{\scDP}{\scriptsize{DP}}
\newcommand{\scSingleton}{\scriptsize{Singleton}}
\newcommand{\scBundle}{\scriptsize{Bundle}}

\begin{table*}[htb]
\centering
{\scriptsize
\begin{tabular}{| c | c | c | c | c | c | c | c | c |}
\cline{4-9}
\multicolumn{3}{c |}{} & \multicolumn{2}{ c |}{Cumulative} & \multicolumn{2}{ c |}{Fixed-Window} & \multicolumn{2}{ c |}{Time-Window} \\
\cline{4-9}
\multicolumn{2}{c}{} & Bounds & Upper & Lower & Upper & Lower & Upper & Lower \\
\hline
 & \scpure & \scSingleton & & & & & $O(k \log^3 T)$ &  \\
\cline{3-3}
\scEvent & \scDP & \scBundle & & & $O(k \log^{1.5} W)$ & $\Omega(\sqrt{k})$ & (Cor.~\ref{cor:time-window-alg-pure-concrete}) & $\Omega(\sqrt{k} + \log T)$  \\
\cline{2-3} \cline{8-8}
\scLevel & \scapprox. & \scSingleton & & & (Cor.~\ref{cor:fixed-window-alg-event-level}) & (Lem.~\ref{lem:fixed-window-lb-singleton}, & $O_{\delta}(k \log^2 T)$ & (Cor.~\ref{cor:time-window-lb-polyk-concrete}, \\
\cline{3-3}
& \scDP & \scBundle & & & & Cor.~\ref{cor:fixed-window-lb-polyk-singleton-concrete}) & (Cor.~\ref{cor:time-window-alg-apx-concrete}) & 
Cor.~\ref{cor:time-window-lb-singleton-concrete}) \\
\cline{1-3} \cline{6-9}
 &  & \scSingleton &   &  & & & $\tilde{O}(T)$ & $\Omega(T)$ \\
 & \scpure & & $O(\log^{1.5} T)$ & $-$ & $\tilde{O}(k \cdot \frac{T}{W})$ & $\Omega(\sqrt{k} + \frac{T}{W})$ & (Cor.~\ref{cor:time-window-item-batch-pure}) & (Cor.~\ref{cor:time-window-item-batch-pure-lb}) \\
\cline{3-3} \cline{8-9}
 & \scDP & \scBundle & (Cor.~\ref{cor:cumulative-alg}) & (Lem.~\ref{lem:cum-lb-singleton}) & (Cor.~\ref{cor:fixed-window-alg-item-pure}) & (Cor.~\ref{cor:fixed-window-lb-polyk-singleton-concrete},~\ref{cor:lb-fixed-window-item-pure}) & $\tO_{\delta}(\sqrt{T})$ & $\Omega(\sqrt{T})$  \\
\scItem & & & \cite{BolotFMNT13} & & & & (Cor.~\ref{cor:time-window-item-batch-apx}) & (Cor.~\ref{cor:time-window-item-batch-apx-lb}) \\
\cline{2-3} \cline{6-7} \cline{8-9}
\scLevel & & \scSingleton & & & & & $O(\sqrt{T})$ & $\Omega(\sqrt{T})$ \\
 & \scapprox. & & & & $\tilde{O}_\delta(k \cdot \sqrt{\frac{T}{W}})$ & $\Omega(\sqrt{k} + \sqrt{\frac{T}{W}})$ & (Cor.~\ref{cor:time-window-item-singleton-pure}) & (Cor.~\ref{cor:time-window-item-singleton-pure-lb}) \\
\cline{3-3} \cline{8-9}
& \scDP & \scBundle & & & (Cor.~\ref{cor:fixed-window-alg-item-apx}) & (Cor.~\ref{cor:fixed-window-lb-polyk-singleton-concrete},~\ref{cor:lb-fixed-window-item-apx}) & $\tO_{\delta}(\sqrt[3]{T})$ & $\Omega(\sqrt[3]{T})$ \\
& & & & & & & (Cor.~\ref{cor:time-window-item-singleton-apx}) & (Cor.~\ref{cor:time-window-item-singleton-apx-lb}) \\
\hline
\end{tabular}
}
\caption{Summary of our results. We use $\tO(f)$ to hide factors polylogarithmic in $f$. For simplicity of the bounds, we assume that $\eps, \beta \in (0, 1]$ are constants, $T > W^2$, and $W > k^2$.
Note that we also provide reductions from 1d-range query to cumulative and time-window $\freq{\geq k}$ in event-level DP. However, no concrete bound is shown in the table because, to the best of our knowledge, no non-trivial bound for 1d-range query is known for the utility notion we use, although an $\Omega\left(\frac{\log M}{\eps}\right)$ lower bound is known for the stronger $\ell_\infty$ utility notion (see, e.g.,~\cite{vadhan2017complexity}).
}\label{ref:summary_table}
\end{table*}

We next proceed to describe the main ideas in our algorithms and reductions. For ease of presentation, we group the results together based on proof techniques and only explain the high-level approaches. All omitted details will be formalized later.

\subsection{Algorithmic Overview: Event-Level DP Setting}
\label{sec:tech-overview}

We start our algorithmic overview with the \emph{event-level} DP setting, which is our main contribution. Our results will establish reductions to the range query problem.

\begin{definition}[Range Query] \label{def:range-queries}
In the \emph{$d$-dimensional range query} problem, the input consists of $x^1, \dots, x^n \in \{0, \dots, M\}^d$. The goal is to output, for every $y^1, y^2 \in \{0, \dots, M\}^d$, $r_{y^1, y^2}(\bx) := |\{j \in [n] \mid y^1 \preceq x^j \preceq y^2\}|$, where $x \preceq y$ iff $x_i \leq y_i$ for all $i \in [d]$.
\end{definition}
Here DP is with respect to adding/removing a single $x^j$.  DP algorithms for range query with errors polylogarithmic in $M$ are known (see \Cref{sec:prelim-range-queries}.) Indeed, DP range query remain an active area of research (e.g.,~\cite{HenzingerU22,McMahanRT22}); by reducing to/from range query, any future improvements there can be immediately combined with our reductions to yield better bounds for our problems as well.
%
%Recall that in the $d$-dimensional range query problem. Each user $j$ has a point $x^j \in \{0, \dots, M\}^d$. The goal is then to output, for every $y^1, y^2 \in [M]^d$, the estimate of $r_{y^1, y^2}(\bx) := |\{j \in [n] \mid y^1 \preceq x^j \preceq y^2\}|$, where $\preceq$ denotes all-coordinate partial ordering. 

To describe the reductions, let us also denote by $t^u_\ell$ the time step that an item $u \in [U]$ is reached for the $\ell$th time\footnote{Note that in the actual algorithm, we need to handle the case where $S^t_u > 1$ and also handle the boundaries. For a formal treatment of the modification required for handling this, see \Cref{sec:prelims}.}.
The reductions in each case proceed as follows:

\subsubsection{Cumulative} The reduction was already implicit in~\cite{BolotFMNT13}, but we present it in our framework as a warm-up to the other results. It is simple to see that $u$ should be counted in the $(1, t)$ query (i.e., the cumulative query ending at time $t$) if and only if $t^u_k \leq t$. The reduction is now clear: for each item $u$, we create a point $t^u_k \in [T]$. Running the 1d-range query algorithm on these points then gives us estimates for the cumulative count distinct values.

\subsubsection{Time-window} First we focus on the $k = 1$ case. Observe that $u$ should \emph{not} be included in the $(t_1, t_2)$ query iff for some $\ell$, it holds that $t^u_\ell < t_1, t_2 < t^u_{\ell + 1}$. This leads to the following reduction to the 2d-range query problem. For each item $u \in [U]$, we create points $(t^u_{\ell} + 1, t^u_{\ell+1} - 1)$ for all $\ell$. To answer a $(t_1, t_2)$ time-window query, we use a range query to find the number of points whose\footnote{This corresponds to a range query in \Cref{def:range-queries} with $y^1 = (-\infty, t_2), y^2 = (t_1, \infty)$.} first coordinate is at most $t_1$ and second coordinate is at least $t_2$, which gives the number of items \emph{not} included in the desired time query.
%\jelani{This is written in a slightly different language from the defn of `range query' given in Definition 2; in that defn we count the number of input points ``in between'' two query points, but what's in this paragraph is a 2-sided range query. I guess the point is this does match Defn 2 by setting $y^1 = (-\infty, t_2), y^2 = (t_1, \infty)$.}
%\pasin{That's right; added a footnote to clarify this.} We then subtract this number from $U$ to get the answer.

It is important to take the ``complement'' perspective in the above reduction. Specifically, if we were to use the fact that $u$ should be included iff $t_1 \leq t^u_\ell \leq t_2$ for some $\ell$, this will lead to double counting because the condition can be satisfied by multiple values of $\ell$. However, the condition given in the previous paragraph is satisfied by (at most) a single value of $\ell$. 

For the $k > 1$ case, we can use a similar approach but change the condition to $t^u_\ell < t_1, t_2 < t^u_{\ell + k}$ (i.e., replacing $\ell + 1$ with $\ell + k$). However, this leads to double counting because it is possible that $t_{\ell + k} > t_2$ and $t_{\ell}, t_{\ell + 1}$ are both less than $t_1$. To solve this issue, we create another instance of 2d-range query with respect to the condition $t^u_{\ell + 1} < t_1, t_2 < t^u_{\ell + k}$. We then subtract the answer of the second instance from that of the first instance; it can be seen that this results in no double counting. Finally, note that a single event change results in at most $O(k)$ changes to the instances because at most $k+1$ conditions $t^u_\ell < t_1, t_2 < t^u_{\ell + k}$ are affected. By group privacy (\Cref{fact:group-dp}),
%\jelani{By this point the defn of `group privacy' hasn't been given yet.}
%\pasin{Added a forward pointer}, 
this means we may apply the $(O(\eps/k), O(\delta/k))$-DP 2d-range query algorithm and still get $(\eps, \delta)$-DP in our answer. %\badih{Should $\epsilon$ appear in the $O(\delta/k))$ term, when applying group privacy above?}
%\pasin{Not really, assuming that $\eps \leq 1$.}

\subsubsection{Fixed-window}  In the fixed-window case, we can do better than time-window by reducing to 1d- rather than 2d-range query. Again, let us start from the $k = 1$ case. Recall from the above that an item $u \in [U]$ should \emph{not} be included in the fixed-window query $(t, t + W - 1)$ iff $t^u_{\ell} < t, t + W - 1 < t^u_{\ell + 1}$. This condition can be further simplified as $t \in (t^u_\ell, t^u_{\ell + 1} - W]$. Therefore, we may create two instances of the 1d-range query problem: if $t^u_{\ell+1} - t^u_{\ell} > W$, add $t^u_\ell$ to the first instance and $t^u_{\ell + 1} - W$ to the second instance. To count the number of items $u$ such that $t \in (t^u_\ell, t^u_{\ell + 1} - W]$, we query for the points of value at most $t$ in each instance and then subtract the first answer from the second answer. This gives us the number of items that should not be included, which we subtract from $U$.

For $k>1$, we start from the condition $t \in (t^u_\ell, t^u_{\ell + k} - W]$. Naturally, this gives us the following algorithm: if $t^u_{\ell+k} - t^u_{\ell} > W$, add $t^u_\ell$ to the first instance and $t^u_{\ell + k} - W$ to the second instance. (Then, proceed as above.) Unfortunately, this again results in double counting when $t^u_{\ell + k} \geq t + W$ and $t^u_\ell, t^u_{\ell+1}$ are both less than $t$. To resolve this, we change the second point from $t^u_{\ell + k} - W$ to $\min\{t_{\ell + 1}, t^u_{\ell + k} - W\}$. %It can be verified that this resolves the double counting issue.

The above ideas are sufficient for a bound of the form $k \cdot \log^{O(1)} T$. To decrease $\log T$ to $\log W$, we make the following observation: since the window length is fixed to $W$, we may run the above algorithm for time steps $1, \dots, 2W$, then run it again for $W + 1, \dots, 3W$, then for $2W + 1, \dots, 4W$ and so on. This suffices to answer any fixed-window query of length $W$. Furthermore, observe that each event-level change only affects two runs, meaning that we only need each run to be $(\eps/2, \delta/2)$-DP. Since each run now has time horizon only $2W$, we have decreased $\log T$  to $\log W$ as desired.

\subsection{Algorithmic Overview: Item-Level DP Setting}

We next briefly describe how to extend each result to the item-level DP case. First, we remark that the above cumulative algorithm actually works also with item-level DP because each item contributes to at most one point in the 1d-range query instance. As for the time-window and fixed-window settings, they are solved by reducing to multiple calls of easier problems and using composition theorems:

\subsubsection{\bf Time-window} We run the cumulative algorithm starting at each time step. Via the basic or advanced composition theorems (\Cref{thm:basic-comp,thm:adv-comp}),
%\jelani{perhaps forward reference to where these are defined}
%\pasin{Added forward pointers}, 
it suffices for each cumulative algorithm to be $(\eps/T)$-DP and $O(\eps/\sqrt{T\log(1/\delta)})$-DP respectively, which yields the nearly tight bounds.

\subsubsection{\bf Fixed-window} This case is more subtle. We start by observing that the above event-level DP algorithm for $T = 2W$ in fact works even in the item-level DP setting: the reason is that the condition $t^u_{\ell + k} - t^u_{\ell} > W$ cannot occur for two different indices $\ell$ that are at least $k$ apart. (Otherwise, we would have $2W > 2W$, a contradiction.) This means that we can still solve each subinstance of time horizon $2W$ with error $k \cdot \log^{O(1)} W$. Since there are $O(T/W)$ subinstances (in the reduction described above), we can use the basic/advanced composition over them. This gives us the final $(T/W)^{O(1)} \cdot \log^{O(1)} W$ bounds.

Finally, we remark that we can sometimes obtain improvements in the singleton setting. We do so via a technique from~\cite{jain2021price} where the $T$ time steps are partitioned into $\lceil T/T' \rceil$ grouped time steps, each consisting of $\leq T'$ consecutive time steps. The main observation is that we may run the bundle algorithm on the $\lceil T/T' \rceil$ grouped time steps and incur an additional error of at most $T'$ due to the grouping. By picking $T'$ appropriately, this yields the bounds for our time-window algorithms in the singleton setting.

\subsection{Overview of the Lower Bounds}
At a high-level, we prove two types of lower bounds. First are the lower bounds from 1d-/2d-range query. These are done by essentially ``reversing'' our algorithms explained above. In fact, these reductions show that the cumulative, fixed-window, and time-window $\freq{\geq k}$ are equivalent to 1d-, 1d-, and 2d-range query, respectively, up to a multiplicative factor of $O(k)$ in the values of $\eps, \delta$.

The second type of reduction is from (variants of) the 1-way marginal problem. Recall that in the \emph{1-way marginal problem} each user receives $x^j \in \{0, 1\}^d$ and the goal is to compute $\bbx := \sum_{j \in [n]} x^j$. Lower bounds of $d^{\Omega(1)}$ are well-known for this problem~\cite{HardtT10,BunUV18}.

\subsubsection{Polynomial in $T$ and $T/W$ Lower Bounds} Let us start with the simpler reductions for lower bounds of the form $(T/W)^{\Omega(1)}$ and $(T)^{\Omega(1)}$ for fixed-window and time-window queries in item-level DP. %These are much simpler than the $k^{\Omega(1)}$ lower bound, as we are now in the item-level setting. 
For the ease of presentation, let us consider the time-window bundle setting with $k = 1$ as an example. In this case, we let $d = T, U = n$, and $S^j_i = x^j_i$ for all $j \in [n], i \in [d]$ (i.e., ``copying'' each vector $x^j$ into the occurrence pattern of item $j$). It can be immediately seen that $\freq{\geq 1}(S^i)$ is exactly equal to $\bbx_i$, which translates the lower bounds of $d^{\Omega(1)} = T^{\Omega(1)}$ from 1-way marginals to time-window bundle $\freq{\geq 1}$. 

We remark that our reductions bear  similarities to those in~\cite{jain2021price}, although, given that different problems are studied in that work, the reductions and arguments are not the same.

\subsubsection{Polynomial in $k$ Lower Bounds} Lower bounds of the form $k^{\Omega(1)}$ are also shown via reductions from (variants of) 1-way marginals with $d = \Theta(k)$. However, these are more challenging than the above as we have to handle the event-level DP setting. Notice that in the above reduction each user corresponds to a vector $x^j \in \{0, 1\}^d$ and changing this vector can cause the change in $S^j_i$ for all $i \in [T]$, which invalidates its use in the event-level DP setting. For event-level DP reductions, we have to somehow be able to encode the entire information of the $d$-dimensional vector $x^j \in \{0, 1\}^d$ while ensuring that changing this entire vector only causes a single event change.

The aforementioned challenges led us to the idea of creating multiple items for each possible value of $x^j$. In other words, we now create an item for each $(j, x) \in [n] \times \{0, 1\}^d$. Each of these items is ``dormant'', meaning that it does not contribute to $\freq{\geq k}$ queries at all. However, each item $(j, x)$ can be ``turned on'' by a single event change, after which it contributes to the queries in a pattern designated by the vector $x$. This is done by ensuring that for each $i \in [d]$ there are prespecified $t_i, t_i'$ such that $S^{[t_i, t'_i]}_{(j, x)}$ is equal to $k - 1 + x_i$. The event-level change to turn on $(j, x)$ is to increase $S_t^{(j, x)}$ such that $t \in [t_i, t'_i]$ for all $i \in [d]$. (Note that, since we are aiming for event-level change, we need to use a single $t$ that satisfies $t \in [t_i, t'_i]$ \emph{simultaneously} for all $i \in [d]$.) %\badih{need to add a sentence saying the intersection is not empty} 
After turning on $(j, x^j)$ for all $j \in [n]$, we thus have $\bbx_i = \freq{\geq k}(S^{[t_i, t'_i]})$, allowing us to compute the 1-way marginal. This concludes the main idea of the reduction.

While the sketch above suffices for the bundle setting, it is not enough for the singleton case. This is because we created as many as $2^d = 2^{O(k)}$ dormant items, which requires $T > 2^d$ in the singleton case. This is too large to yield any meaningful results. Instead, we turn to the so-called \emph{linear queries} problem, which may be viewed as a restriction of 1-way marginals where each $x_j$ belongs to a (predefined) small set of size $d^{O(1)}$. (See \Cref{subsec:linear-q} for a formal definition.) A classic result of~\cite{DinurN03} showed that, even in this restricted version, any DP algorithm has to incur an error of $d^{\Omega(1)}$. Since we now only have to create dormant items $(j, x)$ for $x$ that belongs to the set, we can now let $T$ be as small as $d^{O(1)}$.

\subsection{Related Work}
\label{sec:related_work}

%The most relevant previous work in this line of research is that of Bolot et al.~\cite{BolotFMNT13} who, building on top of the aforementioned algorithms from~\cite{DworkNPR10,ChanSS11}, provided an algorithm for cumulative count distinct with error polylogarithmic in $T$. %To the best of our knowledge, this is the only previous positive result on counting distinct elements in time windows.
%
%Moreover, Dwork et al. \cite{DworkNPR10} showed that a logarithmic error is needed. The binary tree mechanism was extended to the task of estimating weighted sums with exponentially decaying coefficients \cite{BolotFMNT13}, and sums of bounded real values \cite{perrier2018private}. The continuous release model of DP was later studied on graph data and statistics in \cite{song2018differentially,fichtenberger2021differentially}, and the aforementioned DP binary tree mechanism was used in the context of DP online learning \cite{jain2012differentially,guha2013nearly,agarwal2017price}. Very recently, Jain et al. \cite{jain2021price} show that tasks closely related to summation, namely requiring the selection of the largest of a set of sums (and previously studied by \cite{cardoso2022differentially}), have to incur an error polynomial in $T$ in the continuous release model, and are fundamentally harder than in the standard (bundle) setting. \badih{Add a sentence about why the continuous release model doesn't fully capture our setting.} 

A closely related line of work is the continual release model of DP, which was first introduced in the work of Dwork et al. \cite{DworkNPR10}, and Chan et al.\cite{ChanSS11}. For the binary summation task, they designed a DP mechanism in this model, the binary tree mechanism, which achieves an additive error polylogarithmic in $T$; Dwork et al.~\cite{DworkNPR10} also showed that a logarithmic error is needed. The binary tree mechanism was extended to the task of estimating weighted sums with exponentially decaying coefficients \cite{BolotFMNT13}, and sums of bounded real values \cite{perrier2018private}. The continuous release model of DP was later studied on graph data and statistics in \cite{song2018differentially,fichtenberger2021differentially}, and the aforementioned DP binary tree mechanism was used in the context of DP online learning \cite{jain2012differentially,guha2013nearly,agarwal2017price}. Very recently, Jain et al. \cite{jain2021price} show that tasks closely related to summation, namely requiring the selection of the largest of a set of sums (and previously studied by \cite{cardoso2022differentially}), have to incur an error polynomial in $T$ in the continuous release model, and are fundamentally harder than in the standard setting. We remark that this separation implies a polynomial separation between cumulative queries and a single query (for their problems) in our terminologies. This is a different behavior compared to the $\reach$ and $\freq{}$ problem studied in our work, as there is a polylogarithmic-error algorithm (\Cref{lem:cumulative-alg}).  In fact, we use Jain et al.'s~\cite{jain2021price} technique in our upper bound for the item-level setting. Our lower bound reductions bear some similarities to theirs, but there are some fundamental differences. For example, we crucially use the structure of $\freq{}$ when proving $k^{\Omega(1)}$ lower bounds.

% \badih{Add a sentence about why the continuous release model doesn't fully capture our setting.}

% \badih{Cite: ``The Price of Differential Privacy under Continual Observation''. They have a less natural problem than we do. Will need to check if we get a bigger separation.} 

$\reach$ and $\freq{}$ have also been studied in the more challenging pan-private model~\cite{DworkNPRY10,MirMNW11}, where the DP constraint is applied not only to the output but also to the internal memory of the algorithm (after any time step); hence, the utility guarantees are much weaker than the algorithms we present (and those in \cite{BolotFMNT13}). Mir et al.~\cite{MirMNW11} proves a strong lower bound in this model: any pan-private algorithm for $\reach$ must incur an additive error of at least $\Omega_\eps(\sqrt{U})$. Due to such a lower bound,~\cite{DworkNPRY10,MirMNW11} design several algorithms for $\reach$ and $\freq{}$ with guarantees that include both \emph{multiplicative} and additive errors. This is in contrast to our algorithms, which do not require any multiplicative errors and have additive error bounds that are completely independent of $U$. 

We also note that $\reach$ has been studied in the local, and shuffle settings of DP (see, \cite{balcer2021connecting, chen2020distributed}), although these results focus on the single-query (i.e. $T = 1$) setting.

\section{Preliminaries} \label{sec:prelims}

In the algorithms below, we will assume that the input is appended at the beginning and at the end with multisets $S^0, S^{T+1}$ such that $S^0_u = S^{T+1}_u = k$ for all $u \in [U]$. This helps simplify the notation in the reductions below. For each item $u \in [U]$, it is also convenient to define the sequence $0 = t^1_u \leq \cdots \leq t^{m_u}_u = T + 1$ of time steps $u$ is reached; this is the sequence where each $t \in \{0, 1, \dots, T + 1\}$ appears $S^t_u$ times.
We use $S^{\leq t}$ for $S^1 + \cdots + S^t$.

\subsection{Tools from Differential Privacy}

We list below several tools that we will need from the DP literature.
We refer the reader to the monographs \cite{dwork2014algorithmic, vadhan2017complexity} for a thorough overview of DP.

\subsubsection{Composition Theorems}
We will use the following composition theorems throughout this paper.

\begin{theorem}[Basic Composition] \label{thm:basic-comp}
For any $\eps, \delta > 0, k \in \N$,
an algorithm that is a result of running $k$ mechanisms, each of which is $(\eps/k, \delta/k)$-DP, is $(\eps, \delta)$-DP.
\end{theorem}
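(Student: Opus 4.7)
The plan is to prove this classical basic composition theorem by induction on $k$. The base case $k=1$ is immediate from the definition. For the inductive step, it suffices to establish the following two-mechanism version: if $\cM_1$ is $(\eps_1, \delta_1)$-DP and, for every transcript $y_1$ from $\cM_1$, the mechanism $\cM_2(\cdot, y_1)$ is $(\eps_2, \delta_2)$-DP, then the adaptive composition $\cM(D) := (\cM_1(D), \cM_2(D, \cM_1(D)))$ is $(\eps_1 + \eps_2, \delta_1 + \delta_2)$-DP. Once this is shown, the theorem follows by applying it $k-1$ times with $(\eps_i, \delta_i) = (\eps/k, \delta/k)$ and summing.

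For the two-mechanism step, I would first record the standard characterization of approximate DP via pairs of close distributions: $\cM$ is $(\eps, \delta)$-DP with respect to $(D, D')$ if and only if there exist distributions $\cN, \cN'$ with $\mathrm{TV}(\cM(D), \cN) \leq \delta/(1+e^\eps)$ and similarly for $\cN'$, such that $\cN$ and $\cN'$ are pure $\eps$-indistinguishable (this is the Kasiviswanathan--Smith-style characterization; any convenient equivalent form works). The reason to pass through this characterization is that pure $\eps$-DP composes cleanly in a multiplicative way, and total variation errors combine additively under a union bound, so composing the surrogate distributions yields the clean $\delta_1 + \delta_2$ slack.

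Concretely, for neighboring $D, D'$ and any output set $O \subseteq Y_1 \times Y_2$, writing $O_{y_1} := \{y_2 : (y_1, y_2) \in O\}$, I would decompose
\[
\Pr[\cM(D) \in O] = \int \Pr[\cM_2(D, y_1) \in O_{y_1}] \, d\Pr[\cM_1(D) = y_1],
\]
apply the $(\eps_2, \delta_2)$-DP bound for $\cM_2$ inside the integrand (comparing $D$ and $D'$ for fixed $y_1$), and then apply the $(\eps_1, \delta_1)$-DP bound for $\cM_1$ to the post-processed event $y_1 \mapsto \Pr[\cM_2(D', y_1) \in O_{y_1}]$ (this is a legal post-processing since the dataset $D'$ is fixed). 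Combining the two inequalities produces the desired bound $e^{\eps_1 + \eps_2} \Pr[\cM(D') \in O] + (\delta_1 + \delta_2)$.

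The main obstacle is keeping the $\delta$ term tight: a careless chaining of the two DP inequalities produces $e^{\eps_2}\delta_1 + \delta_2$ rather than $\delta_1 + \delta_2$. This is exactly where the surrogate-distribution characterization (or, equivalently, a careful decomposition of the output space into a ``typical'' part where the privacy loss is bounded and a low-probability ``atypical'' part absorbed into the $\delta$ budget) is used, so that the $\delta$-slack never gets amplified by the $e^{\eps}$ factor. After verifying the two-mechanism claim, the induction on $k$ is a routine application with $\eps_1 = \eps/k$, $\delta_1 = \delta/k$ at each step.
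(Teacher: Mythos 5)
First, note that the paper does not prove this statement at all: basic composition is quoted as a standard tool from the DP literature (the preliminaries defer to the cited monographs), so there is no in-paper argument to compare against. Your high-level plan --- reduce to a two-mechanism adaptive composition lemma and induct, with $(\eps_i,\delta_i)=(\eps/k,\delta/k)$ --- is the standard route, and you correctly isolate the one delicate point: preventing the $\delta$ of one mechanism from being amplified by the $e^{\eps}$ of the other.

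However, the device you invoke does not, as described, close that point. With the symmetric surrogate characterization (distributions $\cN,\cN'$ within total variation $\delta/(1+e^{\eps})$ of $\cM(D),\cM(D')$ and purely $\eps$-indistinguishable), the argument you outline --- compose the surrogates, add up the TV errors by a union bound, then convert back to approximate DP --- incurs a factor $(1+e^{\eps_1+\eps_2})$ on the accumulated TV slack, giving $\delta$-term $(1+e^{\eps_1+\eps_2})\bigl(\tfrac{\delta_1}{1+e^{\eps_1}}+\tfrac{\delta_2}{1+e^{\eps_2}}\bigr)>\delta_1+\delta_2$; for the theorem as stated it would only prove $\bigl(\eps,\ \delta\cdot\tfrac{1+e^{\eps}}{1+e^{\eps/k}}\bigr)$-DP. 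Likewise, the ``typical/atypical'' alternative you mention implicitly uses the claim that under $(\eps,\delta)$-DP the set of outputs with privacy loss exceeding $\eps$ has probability at most $\delta$, which is false in general. Two standard repairs make your sketch go through with the exact constants: (i) use a \emph{one-sided} decomposition --- for each fixed $y_1$ there is $\tilde{P}$ with $\mathrm{TV}\bigl(\cM_2(D,y_1),\tilde{P}\bigr)\le\delta_2$ and $\tilde{P}\le e^{\eps_2}\cdot\cM_2(D',y_1)$ pointwise --- applying surrogates only on the $D$ side; or (ii) avoid surrogates entirely: bound $\Pr[\cM_2(D,y_1)\in O_{y_1}]\le \min\{1,\ e^{\eps_2}\Pr[\cM_2(D',y_1)\in O_{y_1}]\}+\delta_2$, and then apply to the $[0,1]$-valued function $g(y_1)=\min\{1,\ e^{\eps_2}\Pr[\cM_2(D',y_1)\in O_{y_1}]\}$ the level-set inequality $\E_{\cM_1(D)}[g]\le e^{\eps_1}\E_{\cM_1(D')}[g]+\delta_1$, so that $\delta_1$ is never multiplied by $e^{\eps_2}$. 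Either repair yields exactly $\delta_1+\delta_2$, after which your induction is routine.
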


% \badih{Given the assumption in the next theorem, are we somehow restricting $\eps$ to be $<1$ in some of the results in the paper? Feedback from Pasin: yes, let's state somewhere (not necessary to do so in all theorems) that we're going with the restriction on $\eps$ for simplicity. We can only claim closeness to optimality for small $\eps$.  Ravi: There is a sentence around line 603 ptal.}

\begin{theorem}[Advanced Composition~\cite{DworkRV10, dwork2014algorithmic}] \label{thm:adv-comp}
For any $\eps, \delta \in (0, 1), k \in \N$,
an algorithm that is a result of running $k$ mechanisms, each of which is $\left(\frac{\eps}{2\sqrt{2k\ln(2/\delta)}}, \frac{\delta}{2k}\right)$-DP, is $(\eps, \delta)$-DP.
\end{theorem}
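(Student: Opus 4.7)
The statement to prove is the advanced composition theorem, and the plan is to follow the now-standard privacy loss random variable (PLRV) approach due to Dwork--Rothblum--Vadhan, which reduces the guarantee to a concentration inequality on a sum of bounded random variables. Set $\eps_0 := \frac{\eps}{2\sqrt{2k\ln(2/\delta)}}$ and $\delta_0 := \frac{\delta}{2k}$, and let $\cM_1, \dots, \cM_k$ be the component mechanisms, each $(\eps_0, \delta_0)$-DP. Fix neighboring datasets $D \sim D'$ and, writing the composition adaptively, let $L_i$ denote the privacy loss of $\cM_i$ on its $i$th-round input, i.e. $L_i := \ln\frac{\Pr[\cM_i(\cdot) = o_i]}{\Pr[\cM_i(\cdot) = o_i]}$ with $D$ vs $D'$ in the denominator.

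The first step is a standard reduction from approximate-DP components to pure-DP components. Using a classical characterization (each $(\eps_0, \delta_0)$-DP mechanism is $\delta_0$-close in total variation to a mixture of $\eps_0$-DP mechanisms, up to a $\delta_0$ failure event per round), one can couple all $k$ executions so that a ``bad event'' $B$ of probability at most $k\delta_0 = \delta/2$ is carved off, and on $B^c$ each round behaves as an $(\eps_0, 0)$-DP mechanism. I would then work under the pure-DP assumption and pay the $\delta/2$ slack at the end.

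Next, I would invoke the two key estimates for a pure $\eps_0$-DP mechanism's PLRV $L$: the boundedness $|L| \le \eps_0$ (direct from the DP definition) and the expectation bound $\E[L] \le \eps_0(e^{\eps_0}-1) \le 2\eps_0^2$ for small $\eps_0$, the latter coming from the elementary inequality applied to $e^L$-weighted densities. Because the composition is adaptive, $L_1 + \cdots + L_k$ forms a martingale difference sequence after centering, with differences bounded in $[-\eps_0, \eps_0]$ and conditional means at most $2\eps_0^2$. Azuma--Hoeffding then gives, for any $\lambda > 0$,
\[
\Pr\!\left[\sum_{i=1}^k L_i > 2k\eps_0^2 + \lambda\right] \le \exp\!\left(-\frac{\lambda^2}{2k\eps_0^2}\right).
\]
Choosing $\lambda := \eps_0 \sqrt{2k \ln(2/\delta)}$ makes the right-hand side at most $\delta/2$, and by our choice of $\eps_0$ the deterministic offset $2k\eps_0^2$ is negligible compared to $\lambda$, so $2k\eps_0^2 + \lambda \le \eps$ (this is the routine calculation one has to verify carefully to land exactly at $\eps$).

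The final step is to convert this tail bound on $\sum L_i$ back to an $(\eps, \delta)$-DP statement: for any output set $O$, split $O$ into outputs where the realized privacy loss exceeds $\eps$ (contributing at most $\delta/2$ by the Azuma bound) and those where it does not (contributing at most $e^\eps \Pr[\cM(D') \in O]$), then add the $\delta/2$ from the reduction to pure DP. The main obstacle is bookkeeping: making the coupling in the approximate-to-pure reduction fully adaptive so that Azuma applies to an honest martingale, and checking that the constants $2$ and $\sqrt{2}$ in $\eps_0$ are tight enough to absorb the $2k\eps_0^2$ mean term; everything else is plugging in.
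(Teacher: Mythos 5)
The paper does not prove this theorem at all---it is imported verbatim from Dwork--Rothblum--Vadhan / the Dwork--Roth monograph---and your sketch is exactly the argument those cited sources use: the privacy-loss random variable, the reduction of $(\eps_0,\delta_0)$-DP rounds to pure $\eps_0$-DP rounds at a total cost of $k\delta_0=\delta/2$, the bounds $|L_i|\le\eps_0$ and $\E[L_i]\le\eps_0(e^{\eps_0}-1)\le 2\eps_0^2$, and Azuma--Hoeffding with the range-$2\eps_0$ increments. Your constant check is also sound ($\lambda=\eps/2$ and $2k\eps_0^2=\eps^2/(4\ln(2/\delta))<\eps/2$ for $\eps,\delta\in(0,1)$), so the proposal is correct and takes essentially the same route as the paper's cited proof.
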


\begin{theorem}[Parallel Composition~\cite{McSherry10}] \label{thm:parallel-comp}
Let $\phi_1, \dots, \phi_t$ be deterministic functions that map a dataset to another dataset such that, for any neighboring datasets $D, D'$, there exists $j \in [t]$ for which $\phi_i(D) = \phi_i(D')$ for all $i \ne j$ and $\phi_j(D), \phi_j(D')$ are neighbors. Then, an algorithm that is a result of running an $(\eps, \delta)$-DP algorithm on each of $\phi_1(D), \dots, \phi_t(D)$ is $(\eps, \delta)$-DP.
\end{theorem}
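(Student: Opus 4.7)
The plan is to fix neighboring datasets $D, D'$ (under whichever neighboring relation we care about) and let $j \in [t]$ be the index guaranteed by the hypothesis, i.e., $\phi_i(D) = \phi_i(D')$ for all $i \ne j$ while $\phi_j(D), \phi_j(D')$ are neighbors in the DP sense used by the inner algorithm. I would let $\cA_i$ denote the $(\eps,\delta)$-DP mechanism run on $\phi_i(D)$, so the overall output on $D$ is the joint distribution $P := P_1 \otimes \cdots \otimes P_t$ where $P_i$ is the law of $\cA_i(\phi_i(D))$, and similarly $Q := Q_1 \otimes \cdots \otimes Q_t$ on $D'$. The key observations are that (a) for $i \ne j$ we have $P_i = Q_i$ because the inputs to $\cA_i$ are identical, and (b) $P_j$ and $Q_j$ satisfy $(\eps,\delta)$-indistinguishability by the assumed DP guarantee of the inner algorithm applied to the neighboring pair $\phi_j(D), \phi_j(D')$.

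From there, the proof reduces to the purely distributional fact that two product measures which agree on all but one coordinate and whose differing coordinate is $(\eps,\delta)$-indistinguishable are themselves $(\eps,\delta)$-indistinguishable. I would prove this by slicing: for any measurable output set $O$, write
\[
P(O) = \int P_{-j}(dx_{-j})\, P_j(O_{x_{-j}}),
\]
where $x_{-j}$ denotes the coordinates other than $j$ and $O_{x_{-j}}$ is the corresponding slice. Applying the $(\eps,\delta)$-DP bound pointwise to $P_j(O_{x_{-j}}) \le e^\eps Q_j(O_{x_{-j}}) + \delta$ and then using $P_{-j} = Q_{-j}$ to fold the outer integral back into $Q(O)$ yields $P(O) \le e^\eps Q(O) + \delta$, which is exactly the DP conclusion. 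Since $D, D'$ were arbitrary neighbors, this gives $(\eps,\delta)$-DP of the composite algorithm.

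The main subtlety—really the only place one has to be careful—is independence: the argument above uses that $\cA_1, \dots, \cA_t$ use independent randomness, so that the joint law genuinely factorizes as a product. This is standard for parallel composition but worth stating explicitly, since without independence one would need either a coupling argument or an appeal to post-processing that is no longer valid. A secondary minor point is that the index $j$ depends on the pair $(D, D')$, not on the algorithm's random coins, which is fine because the DP conclusion only needs to hold for each neighboring pair separately. I do not anticipate any genuine obstacle beyond bookkeeping; the statement is essentially a repackaging of the slicing inequality for product measures, and no advanced composition-style quantitative tricks are required.
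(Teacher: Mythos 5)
Your argument is correct: fixing a neighboring pair, factoring the joint output law as a product over the $t$ independent runs, observing that all coordinates except the distinguished index $j$ have identical laws, and applying the $(\eps,\delta)$ inequality to the $j$-th coordinate slice-by-slice before folding the integral back into $Q(O)$ is exactly the standard proof of parallel composition, and your two caveats (independent coins, and $j$ depending only on the pair $(D,D')$) are the right ones to flag. Note, though, that the paper itself gives no proof of this statement: it is quoted as a known tool with a citation to McSherry, so there is no in-paper argument to compare against; your write-up is a correct self-contained justification of the cited fact.
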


\subsubsection{Group Privacy} Let $\sim$ denote any neighboring relationship. For any $k \in \N$, let $\sim_k$ denote the neighboring relationship where $D \sim_k D'$ if and only if there exists $D = D_1, \dots, D_k = D'$ such that $D_i \sim D_{i + 1}$ for all $i \in [k - 1]$. The following so-called \emph{group privacy} bound is well-known and simple to prove.

\begin{fact}[Group Privacy (e.g.,~\cite{SteinkeU16})] \label{fact:group-dp}
Let $\eps, \delta > 0$ and $k \in \N$. Any algorithm $\cM$ that is $(\eps, \delta)$-DP under some neighboring relationship $\sim$ is $\left(k\eps, \frac{e^{k\eps}-1}{e^{\eps}-1}\delta\right)$-DP under the neighboring relationship $\sim_k$.
\end{fact}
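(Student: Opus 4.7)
The plan is to prove this by induction on $k$, with the base case $k=1$ being exactly the definition of $(\eps,\delta)$-DP under $\sim$. For the inductive step, I would unfold the $\sim_k$ neighboring relation into a chain $D = D_0 \sim D_1 \sim \cdots \sim D_k = D'$ and then apply the single-step DP guarantee $k$ times in sequence, tracking how the additive $\delta$ slack accumulates.

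Concretely, for any measurable output set $O$, the DP guarantee at the first link gives $\Pr[\cM(D_0) \in O] \le e^{\eps}\Pr[\cM(D_1) \in O] + \delta$. Applying the guarantee at the next link inside yields
\[
\Pr[\cM(D_0) \in O] \le e^{\eps}\bigl(e^{\eps}\Pr[\cM(D_2) \in O] + \delta\bigr) + \delta = e^{2\eps}\Pr[\cM(D_2) \in O] + (1 + e^{\eps})\delta.
\]
Iterating this $k$ times gives the bound $e^{k\eps}\Pr[\cM(D_k) \in O] + (1 + e^{\eps} + \cdots + e^{(k-1)\eps})\delta$, and the geometric series evaluates to $\frac{e^{k\eps}-1}{e^{\eps}-1}$, which is exactly the claimed $\delta$-slack.

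There is essentially no obstacle here; the only thing to be mildly careful about is that the unfolded inequality composes correctly, i.e., that plugging $\Pr[\cM(D_{i+1}) \in O]$ into a monotone upper bound preserves the inequality (which it does, since $e^{\eps} > 0$ and the right-hand side is increasing in that probability). One should also remark that the statement holds for all $\eps > 0$ including $\eps = 0$ by taking a limit (where $\frac{e^{k\eps}-1}{e^{\eps}-1} \to k$ as $\eps \to 0$), but since the hypothesis assumes $\eps > 0$ this is not needed for the stated claim. I would conclude by noting that the inequality holds for every output set $O$ and every pair $D \sim_k D'$, which is exactly the definition of $\bigl(k\eps, \tfrac{e^{k\eps}-1}{e^{\eps}-1}\delta\bigr)$-DP under $\sim_k$.
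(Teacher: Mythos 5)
Your argument is correct: iterating the $(\eps,\delta)$-DP inequality along a chain $D = D_0 \sim D_1 \sim \cdots \sim D_k = D'$ and summing the geometric series $1 + e^{\eps} + \cdots + e^{(k-1)\eps} = \frac{e^{k\eps}-1}{e^{\eps}-1}$ is exactly the standard proof of this fact. The paper itself gives no proof, stating the bound as well-known and citing the literature, so your telescoping argument is precisely the intended (and essentially the only) route.
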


\subsection{Range Query}
\label{sec:prelim-range-queries}

Here, we view each $r_{y^1, y^2}$ as a query, and the definition of utility is the same as in~\Cref{sec:setting}. The following algorithms for range query are known from previous works:

\begin{theorem}[\cite{DworkNPR10,ChanSS11}] \label{thm:1d-pure}
For any $\eps > 0$, there is an $\eps$-DP algorithm for the 1d-range query problem with $O(\log^{1.5} M \cdot \log(1/\beta) / \eps, \beta)$-utility.
\end{theorem}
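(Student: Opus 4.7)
The plan is to instantiate the standard binary tree mechanism of~\cite{DworkNPR10,ChanSS11}. Build a complete binary tree whose $M+1$ leaves are identified with the values $0,1,\dots,M$, and let $L := \lceil \log_2(M+1)\rceil + 1$ denote its depth. For each node $v$, let $c_v$ be the number of input points $x^j$ that fall in the range of leaves below $v$. The mechanism samples independent noises $Z_v \sim \Lap(L/\eps)$ and publishes $\tilde c_v := c_v + Z_v$ for every node $v$. To answer $r_{y^1,y^2}$, I would greedily decompose the contiguous leaf-range $[y^1,y^2]$ into at most $s \le 2L$ disjoint ``canonical'' subtree roots $v_1,\dots,v_s$ and return $\sum_{i\in[s]} \tilde c_{v_i}$.

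For the privacy analysis, I would observe that adding or removing a single input point $x^j$ changes exactly $L$ of the true counts $c_v$ (precisely one per level of the tree), each by $1$. Hence the vector $(c_v)_v$ has $\ell_1$-sensitivity at most $L$, so releasing $(\tilde c_v)_v$ via the Laplace mechanism with scale $L/\eps$ is $\eps$-DP. Post-processing (the deterministic summations used to answer queries) preserves this.

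For the utility analysis, I would fix a query $r_{y^1,y^2}$ and note that its error is $\sum_{i\in[s]} Z_{v_i}$, a sum of at most $2L$ i.i.d.\ $\Lap(L/\eps)$ random variables. A standard tail bound for sums of independent Laplaces (a Bernstein-type inequality, cf.\ the analysis in~\cite{ChanSS11}) states that for $k$ i.i.d.\ $\Lap(b)$ variables and any $\beta \in (0,1)$, the absolute value of their sum is $O\!\left(b\sqrt{k\log(1/\beta)} + b\log(1/\beta)\right)$ with probability at least $1-\beta$. Plugging in $k \le 2L$ and $b = L/\eps$ yields an error bound of $O(L^{1.5}\log(1/\beta)/\eps) = O(\log^{1.5} M \cdot \log(1/\beta)/\eps)$, matching the claim.

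There is essentially no real obstacle: the decomposition of any range into $O(\log M)$ canonical subtrees is classical, and both the sensitivity computation and the Laplace tail bound are routine. The only mildly delicate point is choosing the noise scale so that the per-level sensitivity $L$ is absorbed correctly; calibrating to $\Lap(L/\eps)$ rather than $\Lap(1/\eps)$ is what ultimately produces the $\log^{1.5} M$ rather than $\log^{0.5} M$ dependence, so I would make sure to state this calibration explicitly before invoking the sum-of-Laplaces tail inequality.
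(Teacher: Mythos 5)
Your proposal is correct and is exactly the argument behind the cited result: the paper does not reprove \Cref{thm:1d-pure} but invokes the binary tree (partial sums) mechanism of~\cite{DworkNPR10,ChanSS11}, whose standard analysis is precisely your calibration of $\Lap(L/\eps)$ noise per node, the decomposition of any interval into $O(\log M)$ canonical subtrees, and the Laplace concentration bound giving $O(\log^{1.5} M \cdot \log(1/\beta)/\eps)$ per-query error. No gaps; your remark that the scale $L/\eps$ (rather than $1/\eps$) is what yields the $\log^{1.5} M$ dependence is the right point to emphasize.
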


\begin{theorem}[\cite{DworkNRR15}] \label{thm:2d-pure}
For any $\eps > 0$, there is an $\eps$-DP algorithm for the 2d-range query problem with $O(\log^3 M \cdot \log(1/\beta) / \eps, \beta)$-utility.
\end{theorem}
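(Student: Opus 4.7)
The plan is to recall the 2D range tree construction of Dwork--Naor--Rothblum--Reingold~\cite{DworkNRR15}, which is the natural two-dimensional extension of the binary tree mechanism used for \Cref{thm:1d-pure}. First I would build a balanced primary binary tree $\mathcal{T}_{\mathrm{pri}}$ of depth $O(\log M)$ over the $x_1$-coordinate universe $\{0,\dots,M\}$; for each primary node $v$, whose range covers a canonical interval $I_v \subseteq \{0,\dots,M\}$, I attach a secondary balanced binary tree $\mathcal{T}_v$ of depth $O(\log M)$ over the $x_2$-coordinate universe, and in each node $w$ of $\mathcal{T}_v$ I store the count $c_{v,w} := |\{j \in [n] : x^j_1 \in I_v \text{ and } x^j_2 \in I_w\}|$. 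The private mechanism releases $\tilde c_{v,w} := c_{v,w} + \Lap(\lambda)$ for every $(v,w)$, where the noise scale $\lambda$ will be chosen below, and answers a query $(y^1,y^2)$ by decomposing the axis-aligned rectangle $[y^1_1,y^2_1] \times [y^1_2,y^2_2]$ into a sum over $O(\log^2 M)$ canonical rectangles $I_v \times I_w$ and summing the corresponding $\tilde c_{v,w}$.

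Next I would carry out the privacy analysis. A single data point $x^j$ is contained in exactly $O(\log M)$ canonical intervals $I_v$ of $\mathcal{T}_{\mathrm{pri}}$ (one per level), and inside each such $\mathcal{T}_v$ it contributes to exactly $O(\log M)$ canonical intervals $I_w$ (one per level). Hence changing a single input point changes at most $K = O(\log^2 M)$ of the counts $c_{v,w}$, each by at most $1$. By the standard Laplace mechanism analysis together with basic composition, setting $\lambda := K / \eps = O(\log^2 M / \eps)$ yields $\eps$-DP.

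For utility, for a fixed query $(y^1,y^2)$ the returned estimator equals the true answer $r_{y^1,y^2}$ plus a sum of $L = O(\log^2 M)$ independent $\Lap(\lambda)$ variables. A standard tail bound for sums of independent Laplace random variables (e.g., Bernstein-type concentration, c.f.\ \cite{ChanSS11}) gives that, with probability at least $1-\beta$, the magnitude of this sum is $O(\lambda \cdot \sqrt{L \log(1/\beta)}) = O(\log^2 M \cdot \log M \cdot \log(1/\beta) / \eps) = O(\log^3 M \cdot \log(1/\beta) / \eps)$, which is the claimed $(\alpha,\beta)$-utility bound.

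The main technical obstacle is the utility bound for a single query: one might naively sum $|{\Lap(\lambda)}|$ over $L$ terms and lose an extra $\log M$ factor, so the key point is to use the sub-exponential concentration of sums of Laplace variables rather than a union bound. (There is no need here to union bound over queries, since the utility notion in \Cref{sec:setting} fixes one query at a time; if an $\ell_\infty$ guarantee over all $M^4$ queries were required, the bound would worsen only by a further $O(\log M)$ factor, but this is not needed for our reductions.) Everything else is the bookkeeping of the range-tree decomposition, which is standard.
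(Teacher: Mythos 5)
Your construction (a two-level dyadic range tree with per-node Laplace noise of scale $O(\log^2 M/\eps)$, decomposition of any rectangle into $O(\log^2 M)$ canonical rectangles, and sub-exponential concentration for the sum of Laplace variables) is correct and is exactly the standard argument behind the cited result of~\cite{DworkNRR15}; the paper itself does not reprove this theorem but imports it, and indeed obtains \Cref{thm:2d-apx} by the very substitution of Gaussian for Laplace noise in this construction. Your final bound is even slightly stronger in the $\beta$-dependence ($\sqrt{\log(1/\beta)}$ rather than $\log(1/\beta)$, modulo the usual regime condition in the Laplace concentration lemma), which only strengthens the stated $(O(\log^3 M \cdot \log(1/\beta)/\eps), \beta)$-utility claim.
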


Although the following was not stated explicitly in~\cite{DworkNRR15}, it is not hard to derive, by simply replacing Laplace noise by Gaussian noise.

\begin{theorem}[\cite{DworkNRR15}] \label{thm:2d-apx}
For any $\eps, \delta \in (0, 1)$, there is an $(\eps, \delta)$-DP algorithm for the 2d-range query problem with $O(\log^2 M \cdot \sqrt{\log(1/\delta) \log(1/\beta)} / \eps, \beta)$-utility.
\end{theorem}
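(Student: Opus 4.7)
The plan is to follow exactly the construction in \cite{DworkNRR15} that proves \Cref{thm:2d-pure}, but swap the Laplace noise for Gaussian noise and instead appeal to the Gaussian mechanism and Gaussian concentration to obtain the improved exponent. Recall that the DworkNRR15 construction maintains a ``range tree of range trees'': one outer balanced binary tree over the first coordinate of depth $O(\log M)$, where every node stores an inner balanced binary tree over the second coordinate, again of depth $O(\log M)$. Every point $x^j$ is inserted into the $O(\log M)$ outer ancestors it belongs to, and at each such outer node into the $O(\log M)$ inner ancestors, for a total of $O(\log^2 M)$ canonical nodes. Any axis-aligned rectangle query decomposes into $O(\log^2 M)$ canonical nodes whose exact counts sum to $r_{y^1,y^2}(\bx)$.

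The first step is to bound the $\ell_2$ sensitivity of the vector of all canonical node counts under the addition/removal of a single input point. Since such a change affects the counts at exactly $O(\log^2 M)$ nodes by $\pm 1$, the $\ell_2$ sensitivity is $O(\log M)$ (as opposed to the $\ell_1$ sensitivity $O(\log^2 M)$ used in the pure-DP analysis). The second step is to release a noisy count at each canonical node by adding independent Gaussian noise of standard deviation $\sigma = O(\log M \cdot \sqrt{\log(1/\delta)}/\eps)$; by the standard Gaussian mechanism this release is $(\eps,\delta)$-DP.

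To answer a query, I would simply sum the noisy counts at the $O(\log^2 M)$ canonical nodes given by the decomposition. The error is a sum of $N = O(\log^2 M)$ independent $\mathcal{N}(0,\sigma^2)$ random variables, which is distributed as $\mathcal{N}(0, N\sigma^2)$ and therefore has magnitude at most
\[
O\!\left(\sigma \sqrt{N \log(1/\beta)}\right)
= O\!\left(\log M \cdot \sqrt{\log(1/\delta)}/\eps \cdot \log M \cdot \sqrt{\log(1/\beta)}\right)
= O\!\left(\log^2 M \cdot \sqrt{\log(1/\delta)\log(1/\beta)}/\eps\right)
\]
with probability at least $1-\beta$, yielding the claimed $(\alpha,\beta)$-utility.

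The only part that requires any care is pinning down the $\ell_2$ sensitivity cleanly: one needs to observe that the $O(\log^2 M)$ affected nodes are disjoint coordinates of the output vector, so their contributions to $\|\cdot\|_2^2$ simply add. Everything else is a direct plug-in of the Gaussian mechanism and a Gaussian tail bound, and the $\sqrt{N}$-vs-$N$ improvement over Laplace in summing independent noises is precisely what turns $\log^3 M$ into $\log^2 M$. I do not expect any real obstacle beyond carefully tracking the $\eps$, $\delta$, and $\beta$ dependencies through the Gaussian mechanism.
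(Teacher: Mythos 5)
Your proposal is correct and is exactly the route the paper intends: the paper only remarks that \Cref{thm:2d-apx} follows from the construction behind \Cref{thm:2d-pure} ``by simply replacing Laplace noise by Gaussian noise,'' and your write-up fills in precisely that substitution (the $O(\log^2 M)$ canonical nodes, $\ell_2$ sensitivity $O(\log M)$, the Gaussian mechanism, and Gaussian concentration over the $O(\log^2 M)$ nodes in a query decomposition) to obtain the stated $O\bigl(\log^2 M \cdot \sqrt{\log(1/\delta)\log(1/\beta)}/\eps\bigr)$ bound. Note also that the paper's $(\alpha,\beta)$-utility is per query, so no union bound over queries is needed, consistent with your argument.
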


As for the lower bound, we are not aware of any non-trivial lower bound for the 1d case for the utility notion we use, although lower bounds for $\ell_\infty$-error are known (see e.g.~\cite{vadhan2017complexity}). 
The lower bound for the 2d case is stated below.

\begin{theorem}[\cite{MuthukrishnanN12}] \label{thm:2d-range-lb}
For any sufficiently small constants $\eps, \delta, \beta > 0$, there is no $(\eps, \delta)$-DP algorithm with $(o(\log M), \beta)$-utility for the $2$d-range query problem even when the number of input points $n$ is $O(M^2)$.
\end{theorem}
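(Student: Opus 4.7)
My plan is to derive the lower bound via the standard connection between approximate differential privacy and hereditary discrepancy, applied to the 2d orthogonal range counting set system. First, I would phrase the 2d-range query problem as a linear query problem: let $P = \{1, \dots, M\}^2$ be the grid, and let $A \in \{0,1\}^{\cQ \times P}$ be the incidence matrix whose rows are indexed by axis-aligned rectangular ranges and whose columns are indexed by grid points, with $A_{q,p} = 1$ iff $p$ lies in $q$. A dataset can be represented as a histogram in $\Z_{\geq 0}^P$ with $\ell_1$ norm $n$, and each query $r_{y^1, y^2}$ is exactly the linear query defined by the corresponding row of $A$.

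Second, I would invoke the classical Roth/Matou\v{s}ek lower bound on the hereditary discrepancy of 2d orthogonal ranges: $\mathrm{herdisc}(A) = \Omega(\log M)$, witnessed already by a ground set of size $\Theta(M^2)$. Concretely, for every subset $P' \subseteq P$ (with $|P'|$ not too small) and every signing $\chi \colon P' \to \{\pm 1\}$, there exists a rectangle $q$ with $\left|\sum_{p \in q \cap P'} \chi(p)\right| = \Omega(\log M)$.

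Third, I would convert this combinatorial lower bound into a DP error lower bound using the framework of Muthukrishnan and Nikolov. Suppose toward contradiction that some $(\eps,\delta)$-DP mechanism answers each 2d-range query with additive error $o(\log M)$ with probability $\geq 1-\beta$. One then considers two ``fingerprinting-style'' datasets obtained from a common base by flipping the contribution of each grid point according to a signing $\chi$; by the discrepancy bound, some query $q$ must distinguish them by $\Omega(\log M)$, so the utility guarantee forces the mechanism's output distributions on these two datasets to be well-separated on at least one query. Iterating this attack over a discrepancy-witnessing packing yields a reconstruction of the signing $\chi$ that is incompatible with $(\eps,\delta)$-DP, once $\eps,\delta,\beta$ are sufficiently small constants. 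Since the discrepancy witness is supported on $\Theta(M^2)$ points, the lower bound applies already at $n = O(M^2)$.

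The main obstacle is precisely the third step: the discrepancy-to-approximate-DP conversion. For pure DP the connection is a clean packing/volume argument, but for $\delta > 0$ one needs the more delicate Muthukrishnan--Nikolov reduction (equivalently, a fingerprinting construction supported on the discrepancy witness), and one must track constants carefully to rule out $o(\log M)$ error rather than merely $\Omega(\log M / \mathrm{polylog})$. I expect the other ingredients---the linear-query reformulation of range queries and the classical hereditary discrepancy bound for 2d rectangles---to be comparatively routine invocations of known results.
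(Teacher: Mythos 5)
First, a point of comparison: the paper does not prove \cref{thm:2d-range-lb} at all---it is imported verbatim from \cite{MuthukrishnanN12}---so there is no internal proof to measure you against. Your outline does follow the same route as that source: view 2d range counting as a linear-query workload, lower bound the discrepancy of axis-parallel rectangles on the $M \times M$ grid by $\Omega(\log M)$ with a witness of $\Theta(M^2)$ points, and then convert discrepancy into an error lower bound for $(\eps,\delta)$-DP mechanisms. At the level of strategy this is the right skeleton.

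As a proof, however, there is a genuine gap exactly where you flag one: your third step is not an argument but an invocation of the very theorem being proved, and the sketch you do give would fail as stated. Two datasets obtained from a common base by flipping the contribution of every grid point according to a signing $\chi$ differ in $\Theta(M^2)$ records, and $(\eps,\delta)$-DP (even via group privacy, \Cref{fact:group-dp}) places no meaningful constraint on distinguishing datasets at that Hamming distance, so ``some query separates the two output distributions'' produces no contradiction. The entire content of \cite{MuthukrishnanN12} is the missing conversion: a reconstruction-style argument (built on the determinant lower bound and a robust ``partial'' discrepancy notion that survives deleting a constant fraction of columns) showing that answering all rectangle queries within $o(\log M)$ lets an adversary recover almost all of a random dataset supported on the witness points, which is what contradicts approximate DP. The robustness is not optional: $\delta > 0$ allows the mechanism to misbehave on rare events, and the utility notion used in this paper only guarantees each query is accurate with probability $1-\beta$ \emph{separately}, so a constant fraction of answers may be grossly wrong; plain hereditary discrepancy applied to a single signing, or any union-bound/$\ell_\infty$-style packing argument, does not survive this. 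Without that machinery (or an equivalent fingerprinting argument) your proposal establishes only the two routine ingredients. A minor side remark: the $\Omega(\log M)$ discrepancy bound for planar axis-parallel boxes is classically traced to Schmidt's lower bound (via transference) or to the determinant bound, rather than to Roth/Matou\v{s}ek, but that is an attribution issue, not a mathematical one.
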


\subsection{1-Way Marginal}

In the 1-way marginal problem, we are given $x^1, \dots, x^n \in \{0, 1\}^d$, and the goal is to output an estimate of $\bbx = x^1 + \cdots + x^n$. Here we view each $\bbx_j$ as a query, and the definition of utility is in~\Cref{sec:setting}.

\begin{theorem}[\cite{HardtT10}] \label{thm:marginal-pure-lb}
For any $\eps \in (0, 1]$, there is no $\eps$-DP algorithm with $(o(\min\{n, d/\eps\}), 0.01)$-utility for the $1$-way marginal problem.
\end{theorem}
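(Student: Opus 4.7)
The plan is to prove Theorem~\ref{thm:marginal-pure-lb} via the classical packing argument, combined with an error-correcting code to accommodate the per-query utility notion used in the paper (which is weaker than the $\ell_\infty$ utility assumed in textbook proofs). First I would reduce to showing a lower bound of $\Omega(d'/\eps)$ where $d' := \min\{d,\, \lfloor \eps n / 100 \rfloor\}$, noting that $d'/\eps = \Theta(\min(n, d/\eps))$. Suppose for contradiction that $\cM$ is an $\eps$-DP algorithm with $(\alpha, 0.01)$-utility for some $\alpha < c \cdot d'/\eps$, where $c > 0$ is a small constant to be fixed later, and set $A := 10\alpha$, which will satisfy $A \leq n/10$ by the choice of $d'$.

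Next I would invoke Gilbert--Varshamov (or a random construction) to fix a binary code $C \subseteq \{0,1\}^{d'}$ of size $|C| \geq 2^{\gamma d'}$ with pairwise Hamming distance at least $0.3\, d'$, for some constant $\gamma > 0$. For each $v \in C$, construct the dataset $D_v$ in which $A$ users hold $(v, 0^{d-d'}) \in \{0,1\}^d$ and the remaining $n - A$ users hold $0^d$; then $\bbx_j(D_v) = A v_j$ for $j \in [d']$, and $D_v$ differs from $D_0$ on at most $A$ user records. Define the output region
\[
O_v := \{y \in \R^d : |y_j - A v_j| \leq \alpha \text{ for at least } 0.9\, d' \text{ indices } j \in [d']\}.
\]
By the per-query utility guarantee, each coordinate of $\cM(D_v)$ is within $\alpha$ of $A v_j$ with probability $\geq 0.99$; applying Markov's inequality to the count of ``bad'' coordinates yields $\Pr[\cM(D_v) \in O_v] \geq 0.9$. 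The key disjointness claim is that $\{O_v\}_{v \in C}$ are pairwise disjoint: if some $y \in O_v \cap O_{v'}$ with $v \neq v'$ in $C$, then the set of coordinates simultaneously ``good'' for both $v$ and $v'$ has size at least $0.8\, d'$ and therefore intersects the set of $\geq 0.3\, d'$ coordinates on which $v, v'$ differ; on any such coordinate we would have $A = |A v_j - A v'_j| \leq 2\alpha = A/5$, a contradiction.

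Finally, applying group privacy (Fact~\ref{fact:group-dp}) along the length-$O(A)$ chain between $D_0$ and $D_v$ gives $\Pr[\cM(D_0) \in O_v] \geq e^{-O(A \eps)} \cdot 0.9$. Summing over $v \in C$ and using disjointness yields $1 \geq 0.9 \cdot 2^{\gamma d'} \cdot e^{-O(A\eps)}$, which rearranges to $A\eps = \Omega(d')$, i.e., $\alpha = A/10 = \Omega(d'/\eps) = \Omega(\min\{n, d/\eps\})$, contradicting the hypothesis for a sufficiently small constant $c$. The main obstacle in this plan is precisely the weakness of per-query utility compared to $\ell_\infty$ utility: a textbook packing argument requires simultaneous accuracy on all $d$ coordinates, whereas a naive union bound would pay a factor of $d$ in the failure probability and correspondingly weaken the lower bound. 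Using a constant-distance binary code sidesteps this loss because approximate correctness on merely a $0.9$ fraction of coordinates already suffices to uniquely identify the underlying codeword, thereby preserving the full exponential size of the packing and hence the $\Omega(d/\eps)$ strength of the bound.
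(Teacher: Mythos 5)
The paper does not actually prove this statement: Theorem~\ref{thm:marginal-pure-lb} is imported as a black box from Hardt and Talwar~\cite{HardtT10}, so there is no in-paper argument to compare against. Judged on its own, your packing argument is correct and self-contained, and it correctly handles the one subtlety that a naive textbook packing proof would miss, namely that the paper's $(\alpha,\beta)$-utility is a \emph{per-query} guarantee rather than an $\ell_\infty$ guarantee: the Gilbert--Varshamov code with constant relative distance, together with the Markov step showing at least a $0.9$ fraction of coordinates are $\alpha$-accurate with probability $0.9$, makes the regions $O_v$ disjoint while keeping the packing of size $2^{\Omega(d')}$, and the group-privacy/summation step then forces $\alpha\eps=\Omega(d')=\Omega(\eps\cdot\min\{n,d/\eps\})$ as required. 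The truncation $d'=\min\{d,\lfloor \eps n/100\rfloor\}$ also correctly delivers the $\min\{n,d/\eps\}$ form and guarantees the datasets fit within $n$ users. Only minor technicalities remain to be tidied: take $A=\lceil 10\alpha\rceil$ so it is a positive integer (and adjust the disjointness inequality to $2\alpha<A$, which still holds), note that the chain length between $D_0$ and $D_v$ is $A$ or $2A$ depending on whether neighboring means replacement or add/remove (either is fine up to constants in the exponent), and observe that the final inequality yields a contradiction only once $d'$ exceeds an absolute constant, which is all that the asymptotic $o(\cdot)$ statement requires. With those small repairs the proof is complete and is essentially the standard packing-style derivation of this bound, differing from the geometric argument of~\cite{HardtT10} mainly in that it is elementary and tailored to the per-query utility notion used in this paper.
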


\begin{theorem}[\cite{SteinkeU16}] \label{thm:marginal-apx-lb}
For any $\eps \in (0, 1)$ and $\delta \in (2^{-\Omega(n)}, 1/n^{1+\Omega(1)})$, there is no $(\eps, \delta)$-DP algorithm with $(o(\min\{n, \sqrt{d \log(1/\delta)}/\eps\}, 0.01)$-utility for the $1$-way marginal problem.
\end{theorem}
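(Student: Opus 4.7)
The plan is to adapt the fingerprinting-code framework (originating in Bun--Ullman--Vadhan and refined in Steinke--Ullman) and construct a \emph{score attack} that converts any accurate $(\eps,\delta)$-DP mechanism into a tracer of individual input rows, contradicting privacy. First, I would sample column biases $p_1,\ldots,p_d$ independently from a prior on $[0,1]$ bounded away from the endpoints (e.g., uniform on $[1/4,3/4]$), then sample rows $x^1,\ldots,x^n \in \{0,1\}^d$ with $x^i_j \sim \mathrm{Bern}(p_j)$ independently. Let $\bbx = \sum_i x^i$ and run the hypothetical mechanism to obtain $y = \cM(x^1,\ldots,x^n)$.

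For each row $i \in [n]$ define the score
\[
Z_i \;=\; \sum_{j=1}^d (y_j - n p_j)(x^i_j - p_j).
\]
The \emph{completeness} step is a fingerprinting-style identity: since $\sum_i Z_i = \sum_j (y_j - n p_j)(\bbx_j - n p_j)$, the assumed $(\alpha,0.01)$-accuracy with $\alpha = o(n)$ implies this quantity is within $O(\alpha) \cdot \sum_j |\bbx_j - n p_j|$ of $\sum_j (\bbx_j - n p_j)^2$, whose expectation under the prior is $\Theta(nd)$. Hence $\E\bigl[\sum_i Z_i\bigr] = \Omega(nd)$.

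The \emph{soundness} step compares $Z_i$ to the analogous score $Z_i'$ defined by resampling $x^i$ independently and rerunning $\cM$. Since the fresh sample is independent of the new output, $\E[Z_i'] = 0$. By $(\eps,\delta)$-DP, truncating $Z_i$ to a typical event of magnitude $O(\sqrt{n \log(1/\delta)})$, and using subgaussian concentration of the Bernoulli projection $\sum_j (x^i_j - p_j)(y_j - n p_j)/\|y - n p\|$, one obtains $\E[Z_i] \lesssim \eps\sqrt{\mathrm{Var}(Z_i)\log(1/\delta)} + \delta \cdot nd$. Using $\mathrm{Var}(Z_i) = O(nd)$ and summing over $i$ yields $\E\bigl[\sum_i Z_i\bigr] \lesssim n\eps \sqrt{nd \log(1/\delta)} + n^2 \delta d$. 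Matching to the completeness bound forces $n = \Omega(\sqrt{d\log(1/\delta)}/\eps)$ once the hypothesis $\delta = O(1/n^{1+\Omega(1)})$ kills the second summand, and the other branch of the $\min$ follows from the trivial observation that the true marginal lies in $[0,n]$.

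The main obstacle is the tight handling of the $\delta$ term, which is precisely what produces the extra $\sqrt{\log(1/\delta)}$ factor over the pure-DP bound. One must pick the truncation threshold $\tau$ so that applying the $(\eps,\delta)$ guarantee only on the truncated event contributes $\delta \cdot \tau$ (rather than $\delta$ times the full range $O(nd)$), while the tail outside $\tau$ is controlled by Bernstein-type concentration in the fixed direction $y - n p$. Balancing $\tau \asymp \sqrt{n \log(1/\delta)}$ against these two error sources is the delicate step that the Steinke--Ullman argument formalizes cleanly, and it is what distinguishes this proof from the simpler pure-DP lower bound of Theorem~\ref{thm:marginal-pure-lb}.
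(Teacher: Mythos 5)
The paper does not actually prove this statement---it is imported verbatim from Steinke--Ullman---so your sketch has to be judged as a reconstruction of that external proof, and as such it has two genuine gaps. First, the completeness step does not survive the accuracy regime the theorem requires. Your bound on the cross term is $\alpha\cdot\sum_j|\bbx_j-np_j|\approx\alpha d\sqrt{n}$, which is dominated by the main term $\Theta(nd)$ only when $\alpha=o(\sqrt{n})$; but the theorem must rule out error as large as $o(n)$, and the critical regime is precisely $n\approx\sqrt{d\log(1/\delta)}/\eps$ with error $\Theta(\gamma n)\gg\sqrt{n}$. The actual fingerprinting lemma avoids this $\sqrt{n}$ loss not by Cauchy--Schwarz but through the identity $\E_x[f(x)\sum_i(x_i-p)]=p(1-p)g'(p)$ with $g(p)=\E_{x\sim\mathrm{Bern}(p)^n}[f(x)]$, followed by integration by parts over the prior on $p$, which makes the accuracy enter with an $O(1)$ per-column coefficient. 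Relatedly, once the error may be $\Theta(n)$, your claim $\mathrm{Var}(Z_i)=O(nd)$ fails: the subgaussian parameter of $Z_i$ is governed by $\|y-np\|_2$, which can be $\Theta(\alpha\sqrt{d})$ rather than $\Theta(\sqrt{nd})$.

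Second, and more fundamentally, the $\sqrt{\log(1/\delta)}$ factor is not delivered by the truncation argument you describe. For a score that is $\sigma$-subgaussian under the independent-resample distribution, truncating at $\tau\approx\sigma\sqrt{\log(1/\delta)}$ and applying $(\eps,\delta)$-indistinguishability gives $\E[Z_i]\lesssim\eps\sigma+\delta\sigma\sqrt{\log(1/\delta)}$, i.e., the $\sqrt{\log(1/\delta)}$ multiplies the negligible $\delta$ term, not the $\eps$ term; carried through correctly this route only reproves the weaker $\Omega(\sqrt{d}/\eps)$ (Bun--Ullman--Vadhan-type) bound. Even granting your inequality $\E[Z_i]\lesssim\eps\sqrt{\mathrm{Var}(Z_i)\log(1/\delta)}$, substituting the correct variance $\Theta(\alpha^2 d)$ places $\log(1/\delta)$ on the wrong side and weakens, rather than strengthens, the conclusion. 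In Steinke--Ullman the $\sqrt{d\log(1/\delta)}/\eps$ bound comes from a genuinely different mechanism: robust fingerprinting codes combined with group privacy over groups of roughly $\log(1/\delta)/\eps$ users, not from tuning a truncation threshold. Two smaller points: the ``other branch of the min'' does not follow from the answers lying in $[0,n]$, but from the case split ``either the error exceeds $\gamma n$, or the fingerprinting argument applies''; and in the soundness step you must score $x^i$ against the output computed \emph{without} it (or with $x^i$ replaced), since rerunning $\cM$ on data containing the fresh sample destroys the independence needed for $\E[Z_i']=0$.
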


\subsection{Linear Queries}
\label{subsec:linear-q}

The linear query problem is parameterized by a set of (public) vectors $\ba_1, \dots, \ba_d \in \{0, 1\}^m$. The input to the algorithm is then a vector $\bx \in \{0, 1\}^m$ and the goal is to estimate $\left<\ba_i, \bx\right>$ for each $i \in [d]$. We view each $\ba_i$ as a query and utility is defined as in~\Cref{sec:setting}.

The neighboring relation for DP of linear queries is with respect to changing a single coordinate of $\bx$. For convenience, we may also think of the queries as the matrix $\bA \in \{0, 1\}^{d \times m}$ where $\ba_i$ is the $i$th row of $\bA$. Under this notation, we  have $(\bA\bx)_i = \left<\ba_i, \bx\right>$. We write ``$\bA$-linear query'' to signify the matrix $\bA$.

The following lower bound was shown in \cite{DworkMT07} and was an improvement over the classical lower bound of Dinur and Nissim~\cite{DinurN03}.

\begin{theorem}[\cite{DworkMT07}] \label{thm:linear-query-lb}
For any $m \in \N$ and any sufficiently small constants $\eps, \delta > 0$, there exists a matrix $\bA \in \{0, 1\}^{d \times m}$ with $d = O(m)$ such that there is no $(\eps, \delta)$-DP algorithm with $(o(\sqrt{m}), 0.01)$-utility for the $\bA$-linear query problem.
\end{theorem}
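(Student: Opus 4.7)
My plan is to adapt the classical Dinur--Nissim reconstruction attack. First, I would take $\bA \in \{0,1\}^{d \times m}$ to be a uniformly random matrix with $d = Cm$ for a sufficiently large absolute constant $C$, and establish the following \emph{discrepancy} property with positive probability over $\bA$: for every $\bz \in \{-1,0,1\}^m$ with $\|\bz\|_0 \geq m/4$, at least $0.95\,d$ rows $i$ satisfy $|\langle \bA_i, \bz\rangle| \geq c\sqrt{m}$, for some absolute constant $c > 0$. For any fixed such $\bz$, each entry $\langle \bA_i,\bz\rangle$ is a sum of $\|\bz\|_0 \geq m/4$ independent bounded random variables of variance $\Theta(m)$, so a Berry--Esseen / Paley--Zygmund bound (with a small correction to handle the nonzero mean from the Bernoulli entries) gives a constant lower bound on $\Pr[|\langle \bA_i,\bz\rangle| \geq c\sqrt{m}]$; a Chernoff bound boosts this across $d = \Omega(m)$ independent rows, and a union bound over the $3^m$ choices of $\bz$ closes the argument once $C$ is large.

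Second, I would fix such a good $\bA$ and suppose for contradiction that $\cM$ is $(\eps,\delta)$-DP with $(\alpha, 0.01)$-utility for some $\alpha < c\sqrt{m}/3$. Given the output $\bq = \cM(\bx)$, the attacker $\cA$ returns any $\bx' \in \{0,1\}^m$ satisfying $|\bq_i - \langle \bA_i,\bx'\rangle| \leq \alpha$ on at least $0.9\,d$ indices $i$. By Markov applied to the per-query utility guarantee, with probability $\geq 0.9$ over $\cM$'s randomness the true $\bx$ itself is a valid candidate, so $\cA$ returns \emph{some} $\bx'$. To see $\|\bx - \bx'\|_1 < m/4$, suppose not: the discrepancy property applied to $\bz = \bx - \bx'$ produces $\geq 0.95\,d$ rows on which $|\langle \bA_i,\bz\rangle| \geq c\sqrt{m} > 2\alpha$, and on each such row at most one of $\bx, \bx'$ can be $\alpha$-close to $\bq_i$; a short counting argument (using that $\bx$ is $\alpha$-close on $\geq 0.9\,d$ rows) then forces $\bx'$ to be $\alpha$-close on strictly fewer than $0.9\,d$ rows, contradicting its validity. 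Hence $\cA$ recovers $\bx$ to Hamming distance $< m/4$ with probability $\geq 0.9$.

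Third, I would derive a contradiction via a packing argument. Let $\bx$ be uniform on $\{0,1\}^m$ and set $B(\by) := \{\bv \in \{0,1\}^m : \|\bv-\by\|_1 < m/4\}$, so $|B(\by)| \leq 2^{H(1/4)\, m} \leq 2^{0.82 m}$. Fixing any reference $\bx_0$ and applying $m$-step group privacy (Fact~\ref{fact:group-dp}) gives $\Pr[\cA(\cM(\bx)) \in B(\bx)] \leq e^{m\eps}\Pr[\cA(\cM(\bx_0)) \in B(\bx)] + \tfrac{e^{m\eps}-1}{e^\eps-1}\delta$ for every $\bx$. Summing over $\bx \in \{0,1\}^m$, using the symmetry $\bx' \in B(\bx) \Leftrightarrow \bx \in B(\bx')$ together with $|B(\cdot)| \leq 2^{0.82 m}$, yields
\[
0.9 \cdot 2^m \;\leq\; e^{m\eps} \cdot 2^{0.82 m} \;+\; 2^m \cdot \tfrac{e^{m\eps}-1}{e^\eps-1}\,\delta,
\]
which fails for any sufficiently small constant $\eps$ (e.g.\ $\eps < 0.12$) and $\delta$ small enough that the second term is $o(2^m)$, giving the desired contradiction for large $m$.

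The main obstacle is in this last step: the naive packing bound only rules out approximate DP with $\delta$ subconstant in $m$, whereas the theorem allows any sufficiently small \emph{constant} $\delta$. To cover constant $\delta$ one must replace the crude packing step by a fingerprinting-style argument in the spirit of Theorem~\ref{thm:marginal-apx-lb}, or invoke the LP-decoding potential argument of~\cite{DworkMT07} directly. The random matrix construction and the reconstruction attack themselves are classical.
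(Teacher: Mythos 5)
First, note that the paper does not prove this statement at all: it is imported verbatim from \cite{DworkMT07} (as an improvement over \cite{DinurN03}), so there is no internal proof to compare against. On its own terms, your skeleton is the right one and is essentially the Dinur--Nissim/DMT07 reconstruction attack: a random $\bA$ with $d = O(m)$ rows satisfying a discrepancy property against all $\bz \in \{-1,0,1\}^m$ of support $\geq m/4$, plus a robust decoding step that tolerates a $0.1$ fraction of grossly wrong answers, which is exactly what the per-query $(\alpha,0.01)$-utility notion forces you to tolerate after the Markov/averaging step.

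The gap --- which you yourself flag --- is in the concluding step, and it is fatal for the theorem as stated. Group privacy over $m$ steps contributes an additive term $\frac{e^{m\eps}-1}{e^{\eps}-1}\delta \approx e^{m\eps}\delta/\eps$ to each probability, so after summing over the $2^m$ databases your displayed inequality yields a contradiction only when $\delta = O(\eps\, e^{-\eps m})$, i.e.\ exponentially small in $m$ --- not merely subconstant, and certainly not the ``sufficiently small constant $\delta$'' the theorem requires. You do not, however, need fingerprinting or the LP-decoding potential argument to close this: once the attacker reconstructs a uniformly random $\bx$ to within Hamming distance $m/4$ with probability $0.9$, conclude coordinate-by-coordinate. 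Averaging gives $\frac{1}{m}\sum_{i} \Pr[\hat{x}_i = x_i] \geq 0.9 \cdot \tfrac{3}{4}$, whereas $(\eps,\delta)$-DP with respect to flipping the single coordinate $x_i$ (which is precisely the neighboring relation for linear queries in this paper) bounds the success probability of any predictor of a uniform bit by roughly $\frac{e^{\eps}}{1+e^{\eps}} + \delta \leq 0.55$ for sufficiently small constants $\eps,\delta$ --- a contradiction. This one-bit argument degrades only linearly in $\delta$, which is exactly what is needed to handle constant $\delta$; with it, your first two steps do assemble into a correct proof of the stated bound.
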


\subsection{Assumptions on Parameters}
\label{sec:parameter-assumption}

Certain settings of parameters can lead to ``degenerate'' cases, for which there are uninformative algorithms that can get better errors. To avoid such scenarios, we will assume the following setting of parameters throughout the paper when we prove our lower bounds:
\begin{itemize}
\item $T, W \geq k \log k$. This is due to the fact that, in the singleton setting, the algorithm that outputs zero always gets an error of at most $T/k$ or $W/k$, which can be smaller than meaningful algorithms when $T, W$ are not much larger than $k$. 
\item $T \geq (1 + \Omega(1))W$. This is due to the fact that there are only $T - W + 1$ queries in the fixed-window setting, meaning that even, e.g., the Laplace mechanism would yield an error of $O_\eps(T - W)$. This can be small if $T - W$ is very small.
\end{itemize}
We stress that our upper bounds work for all settings of parameters (even those violating the above assumptions), but we assume the above for simplicity in the lower bound statements.

For simplicity of utility expressions, we assume throughout that $\eps \in (0, 1], \delta \in [0, 1/2)$ (including both in the upper and lower bounds). Our results can be extended to the $\eps \gg 1$ case but the utility expressions are more complicated because, e.g., the advanced composition theorem~\cite{DworkRV10} has a more complicated expression in this case.

\section{Cumulative Queries}

We are now ready to prove our results, starting with algorithms and lower bounds for cumulative $\freq{\geq k}$.

\subsection{Algorithm}

%We start with the algorithm for cumulative $\freq{\geq k}$. 
As stated earlier, a similar algorithm was already derived in~\cite{BolotFMNT13} but with less emphasis on the relationship with 1d-range query. We provide a proof below both for completeness and for providing a formal relationship with 1d-range query.

\begin{lemma} \label{lem:cumulative-alg}
Let $k \in \N$ be any positive integer.
If there exists an $(\eps, \delta)$-DP algorithm for 1d-range query with utility $(\alpha(M, \eps, \delta, \beta), \beta)$, then there exists an $(\eps, \delta)$-DP algorithm for cumulative $\freq{\geq k}$ in the item-level DP and bundle setting with utility $(\alpha(T + 1, \eps, \delta, \beta), \beta)$.
\end{lemma}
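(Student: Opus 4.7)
The plan is to carry out the reduction sketched in Section~\ref{sec:tech-overview}: compress each item $u \in [U]$ into a single scalar $\tau_u \in \{1, \dots, T+1\}$ defined as the smallest $t \in [T]$ with $\sum_{s \leq t} S^s_u \geq k$, and $\tau_u = T+1$ if no such $t$ exists (equivalently, using the padding convention of Section~\ref{sec:prelims}, $\tau_u$ is the $(2k)$-th entry of the sorted reach-time sequence of $u$ after prepending $k$ zeros). We then feed the multiset $\{\tau_u\}_{u \in [U]}$ into the given $(\eps, \delta)$-DP 1d-range query algorithm over the universe $\{0, \dots, T+1\}$ (so $M = T+1$), and answer the cumulative query ending at time $t$ by returning the 1d-range query estimate for $r_{1, t}$.

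Correctness of the reduction rests on the identity
\[
\freq{\geq k}\bigl(S^{[1, t]}\bigr) \;=\; \bigl|\{u \in [U] : \tau_u \leq t\}\bigr| \;=\; r_{1, t}\bigl(\{\tau_u\}_{u \in [U]}\bigr),
\]
which is immediate from the definition of $\tau_u$: item $u$ is $k$-occurring in the prefix $S^{[1, t]}$ exactly when its $k$-th occurrence falls at or before $t$, and items with fewer than $k$ total occurrences have $\tau_u = T+1 > t$ so they are never counted. Hence every cumulative query we need to answer is exactly a single 1d-range query on the reduced dataset, and the per-query error bound $\alpha(T+1, \eps, \delta, \beta)$ transfers verbatim, yielding $(\alpha(T+1, \eps, \delta, \beta), \beta)$-utility.

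For privacy, observe that under the item-level neighboring relation two datasets $\ds \sim \ds'$ differ only in the occurrence vector of a single item $u$, so every $\tau_{u'}$ with $u' \neq u$ is unchanged while $\tau_u$ may move to any value in $\{1, \dots, T+1\}$. The induced 1d-range query inputs therefore differ by adding/removing or replacing one point, which is precisely the neighboring relation for 1d-range query; the $(\eps, \delta)$-DP guarantee then follows by post-processing, since the map $\ds \mapsto \{\tau_u\}_{u \in [U]}$ is deterministic and the final estimates are deterministic functions of the 1d-range query outputs.

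There is essentially no conceptual obstacle; the only care needed is the bookkeeping one described above, namely handling items with fewer than $k$ occurrences (resolved by sending them to $\tau_u = T+1$) and aligning our $\tau_u$ with the padded indexing convention of Section~\ref{sec:prelims}. Both are routine, which is why the result is stated as a warm-up reduction and matches the implicit reduction of \cite{BolotFMNT13}.
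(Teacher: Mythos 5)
Your proposal is correct and follows essentially the same reduction as the paper's own proof: one point per item at the time of its $k$-th occurrence (sent to $T+1$ if it never occurs $k$ times), the identity $\freq{\geq k}(S^{\leq t}) = r_{1,t}$, and privacy via the fact that an item-level change moves only that item's single point. Your remark about aligning with the padded indexing of Section~\ref{sec:prelims} is a welcome clarification of a detail the paper states somewhat loosely.
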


\begin{proof}
Let $M = T + 1$ and $n = d$. For each item $u \in [d]$, we then define %$x_u := \min \{t \mid S^{\leq t}_u \geq k\}$
$x_u := t_u^k$, i.e., the first time step before which (inclusive) $u$ has appeared $k$ times. It is not hard to see that the prefix query $r_{1, t}$ is exactly equal to $\freq{\geq k}(S^{\leq t})$. Therefore, we can run the $(\eps, \delta)$-DP algorithm for 1d-range query, which yields the desired error. Finally, notice that each item contributes to only one input point to the 1d-range query problem; therefore, the algorithm remains $(\eps, \delta)$-DP under the item-level neighboring notion.
\end{proof}

Plugging the above into known algorithm for 1d-range query (\Cref{thm:1d-pure}) yields:
\begin{corollary} \label{cor:cumulative-alg}
For any $k \in \N$ and $\eps > 0$, there is an $\eps$-DP algorithm for cumulative $\freq{\geq k}$ in the item-level DP and bundle setting with $(O(\log^{1.5}T\log(1/\beta)/\eps, \beta)$-utility.
\end{corollary}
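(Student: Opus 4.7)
The plan is to simply invoke Lemma~\ref{lem:cumulative-alg} with the 1d-range query algorithm of Theorem~\ref{thm:1d-pure} as the black box. Specifically, Theorem~\ref{thm:1d-pure} guarantees, for any $\eps > 0$, an $\eps$-DP 1d-range query algorithm with $(\alpha(M, \eps, 0, \beta), \beta)$-utility where $\alpha(M, \eps, 0, \beta) = O(\log^{1.5} M \cdot \log(1/\beta)/\eps)$. Plugging this into the conclusion of Lemma~\ref{lem:cumulative-alg} (applied with $\delta = 0$) yields an $\eps$-DP algorithm for cumulative $\freq{\geq k}$, under item-level DP in the bundle setting, with utility
\[
(\alpha(T+1, \eps, 0, \beta), \beta) = \bigl(O(\log^{1.5}(T+1) \cdot \log(1/\beta)/\eps), \beta\bigr).
\]

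Since $\log^{1.5}(T+1) = O(\log^{1.5} T)$ (treating the trivial $T = 1$ case separately if needed, where the statement is vacuous up to constants), this simplifies to the advertised $O(\log^{1.5} T \cdot \log(1/\beta)/\eps)$ error bound, completing the proof.

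There is no real obstacle here: the corollary is a one-line plug-in, and all the work is already contained in the reduction of Lemma~\ref{lem:cumulative-alg} (which maps each item $u$ to the single point $t^k_u$ and observes that cumulative $\freq{\geq k}$ queries correspond exactly to prefix range queries) and in the known 1d-range query construction of \cite{DworkNPR10,ChanSS11}. The only thing to be mildly careful about is bookkeeping: the reduction uses $M = T+1$ rather than $M = T$, but this is absorbed into the big-$O$.
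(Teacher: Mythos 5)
Your proposal is correct and is exactly the paper's argument: the corollary is obtained by plugging the $\eps$-DP 1d-range query algorithm of \Cref{thm:1d-pure} into the reduction of \Cref{lem:cumulative-alg}, with the $M = T+1$ bookkeeping absorbed into the big-$O$.
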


\subsection{Lower Bounds}

We now prove a lower bound showing a reverse reduction---from 1d-range query to cumulative $\freq{\geq k}$---complementing our algorithm in~\Cref{lem:cumulative-alg}. This shows that the two problems are equivalent (up to a constant factor in the error). We remark that our lower bounds below hold even in the more stringent event-level DP setting.

We start with a slightly simpler reduction in the bundle setting:

\begin{lemma} \label{lem:cum-lb-batch}
Let $k$ be any positive integer and $\eps, \delta > 0$.
If there exists an $(\eps, \delta)$-DP algorithm for cumulative $\freq{\geq k}$ in the event-level DP and bundle setting with $(\alpha(T, \beta), \beta)$-utility, then there exists an $(\eps, \delta)$-DP algorithm for 1d-range query with $(2 \cdot \alpha(M, \beta / 2), \beta)$-utility.
\end{lemma}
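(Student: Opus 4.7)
My plan is to prove the converse of \Cref{lem:cumulative-alg} by planting a 1d-range query instance inside a cumulative $\freq{\geq k}$ instance, in such a way that a single add/remove of a range-query point costs exactly one event in the cumulative dataset. Given input $x^1,\dots,x^n\in\{0,\dots,M\}$, I would take a universe $[U]$ with $U\geq n$, set $T := M+1$, and build $\ds=(S^1,\dots,S^T)$ as follows: at time~$1$, preload $k-1$ copies of \emph{every} item $j\in[U]$ (so $S^1_j=k-1$ for all $j\in[U]$), and for each input point $x^j$, add a single extra copy of item $j$ at time $x^j+1$. In $S^{\leq t}$, item $j$ then has exactly $k-1+\ind[x^j+1\leq t]$ occurrences, so for every $t\in[T]$,
\[
\freq{\geq k}(S^{\leq t}) \;=\; \lvert\{j\in[n] : x^j\leq t-1\}\rvert \;=\; r_{0,t-1}(\bx).
\]

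Next, I would run the assumed cumulative algorithm on $\ds$ at failure probability $\beta/2$ to obtain estimates $\tilde{e}_1,\dots,\tilde{e}_T$, and answer each requested range query via $\tilde{r}_{y^1,y^2} := \tilde{e}_{y^2+1} - \tilde{e}_{y^1}$ (with the convention $\tilde{e}_0 := 0$, or a similarly shifted setup so that $y^1=0$ is handled). Each of the two invoked estimates is within $\alpha(M,\beta/2)$ of its true value with probability $\geq 1-\beta/2$, so a union bound gives $\lvert\tilde{r}_{y^1,y^2}-r_{y^1,y^2}\rvert \leq 2\alpha(M,\beta/2)$ with probability $\geq 1-\beta$, matching the claimed $(2\alpha(M,\beta/2),\beta)$-utility (absorbing the cosmetic $T=M+1$ vs.\ $M$ discrepancy, which is harmless since the known range-query bounds are polylogarithmic in $M$).

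For privacy, the crucial observation is that the preloaded part $S^1_j=k-1$ is \emph{independent} of the range-query input: every item $j\in[U]$ is preloaded regardless of whether $x^j$ is present. Hence two range-query inputs differing by the addition or removal of a single point $x^j$ produce datasets $\ds,\ds'$ that differ only in the single copy of item $j$ at time $x^j+1$, i.e.\ $\sum_t\|S^t-{S'}^t\|_1\leq 1$; so they are event-level neighbors, and the cumulative algorithm's event-level $(\eps,\delta)$-DP guarantee carries over to the range-query add/remove neighboring relation with no degradation of $\eps$ or $\delta$. The only real subtlety, which is the main point of the construction, is the preload trick: without it, a range-query point would have to be encoded by $k$ fresh copies of a new item, so a single add/remove would amount to $k$ event changes and would force a group-privacy blow-up via \Cref{fact:group-dp}. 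Preloading the $k-1$ ``inert'' copies (inert because alone they never reach $k$) lets one input event flip item $j$ from inert to $k$-occurring, precisely matching the event-level neighboring relation.
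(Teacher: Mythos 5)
Your proposal is correct and is essentially the paper's own proof: the same ``preload $k-1$ inert copies of every item, encode each range-query point as a single extra occurrence'' construction, the same answering of $r_{y^1,y^2}$ by differencing two cumulative estimates with a union bound over failure probability $\beta/2$, and the same observation that adding/removing one point is a single event change so the DP guarantee transfers without loss. The only difference is a cosmetic index shift (you use $T=M+1$ to accommodate $x^j=0$, whereas the paper takes $x^j\in[M]$ and $T=M$), which does not affect the argument.
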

\begin{proof}
Let $T = M$ and let $U$ be sufficiently large (i.e., larger than the dataset size of the 1d-range query). Given an input $x^1, \dots, x^n \in [M]$ to the 1d-range query problem, we construct an input to the cumulative $\freq{\geq k}$ algorithm as follows:
\begin{itemize}
\item For all $u \in [U]$, let $S^1_u = k - 1$ and $S^2_u = \cdots = S^T_u = 0$.
\item For all $j \in [n]$, increment $S^{x^j}_{j}$ by one.
\end{itemize}
It is simple to see that, for all $t \in [T]$, $\freq{\geq k}(S^{\leq t}) = r_{0, t}(\bx)$. Therefore, we may answer any range query $r_{y_1, y_2}(\bx)$ by outputting $\freq{\geq k}(S^{\leq y_2}) - \freq{\geq k}(S^{\leq y_1 - 1})$. This is indeed an $(\eps, \delta)$-DP algorithm for 1d-range query with $(2 \cdot \alpha(M, \beta/2), \beta)$-utility.
\end{proof}

In the singleton setting, we cannot use the above reduction directly since the first step in the previous reduction contains $k - 1$ copies of $u$'s at the same time step. Therefore, we have to ``expand'' this set into $(k - 1)n$ sets where $n$ denote the number of input points in the 1d-range query problem. This results in a slight additive blow up of $(k - 1)n$ in the time horizon. Similarly, the second step in the reduction can increment multiple values at the same time step; to avoid this, we have to pay another multiplicative blow up of $k - 1$. These are formalized below. %We remark that this $k - 1$ additive factor is necessary: if the time  %, although as we explain below this does not effect the concrete bounds we eventually get.

\begin{lemma} \label{lem:cum-lb-singleton}
Let $k$ be any positive integer and $\eps, \delta > 0$.
If there exists an $(\eps, \delta)$-DP algorithm for cumulative $\freq{\geq k}$ in the event-level DP and singleton setting with $(\alpha(T, \beta), \beta)$-utility, then there exists an $(\eps, \delta)$-DP algorithm for 1d-range query on at most $n$ input points with $(2 \cdot \alpha((M + k - 1) n, \beta/2), \beta)$-utility.
\end{lemma}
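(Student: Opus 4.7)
My plan is to adapt the bundle-setting reduction of \Cref{lem:cum-lb-batch} to the singleton case by decoupling the $k-1$ priming copies across distinct time steps and dedicating a separate time slot to each potential input point, so that $\|S^t\|_1 \le 1$ at every time step. The total time horizon of the constructed cumulative $\freq{\geq k}$ instance will be $(M+k-1)n$, split into a priming phase of length $(k-1)n$ followed by a ``real'' phase of length $Mn$.

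\textbf{Construction.} Concretely, I would fix $U = n$ and build $\ds$ as follows. In the priming phase, for each $j \in [n]$ and each $i \in [k-1]$, I place one copy of item $j$ at time step $(j-1)(k-1) + i$. This guarantees that every item $j$ accumulates exactly $k-1$ occurrences before the real phase begins, and crucially this construction does not depend on the input $\bx$. In the real phase, for each actual input point $x^j$, I place one copy of item $j$ at time step $(k-1)n + (x^j - 1)n + j$. Since the mapping $j \mapsto (k-1)n + (x^j-1)n + j$ is injective and no priming slot coincides with a real slot, each time step has at most one item added, verifying the singleton constraint.

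\textbf{Correctness and privacy.} Define $f(t) := (k-1)n + tn$. By construction, item $j$ reaches its $k$-th occurrence by time $f(t)$ if and only if $x^j \le t$, so $\freq{\geq k}(S^{\le f(t)}) = r_{0, t}(\bx)$. I would then answer an arbitrary range query $r_{y_1, y_2}(\bx)$ by returning $\te^{\freq{\geq k}}_{1, f(y_2)} - \te^{\freq{\geq k}}_{1, f(y_1 - 1)}$, invoking the assumed algorithm twice. Union-bounding over the two queries, each at confidence $\beta/2$, yields $(2\alpha((M+k-1)n, \beta/2), \beta)$-utility. For privacy, the key observation is that the priming phase is input-independent: adding or removing a single point $x^j$ from the range-query input alters the constructed $\ds$ at exactly one time step (namely $(k-1)n + (x^j-1)n + j$), which is a single event-level change. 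Hence the reduced algorithm inherits $(\eps, \delta)$-DP without invoking group privacy.

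\textbf{Anticipated obstacle.} The only real subtlety lies in the bookkeeping: ensuring simultaneously that (i) the priming does not depend on $\bx$ so that a single point add/remove triggers at most one event, (ii) the indexing is injective so that $\|S^t\|_1 \le 1$, and (iii) each cumulative prefix $S^{\le f(t)}$ corresponds exactly to the range query $r_{0,t}$. The blow-up factor of $n$ in both the additive $(k-1)n$ and the multiplicative $Mn$ terms is inherent to this strategy: we must prime all $n$ possible items (not just those in the current input) to keep the priming phase input-independent, and each of the $M$ possible values must accommodate up to $n$ colliding input points in disjoint slots.
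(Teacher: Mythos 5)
Your construction is essentially identical to the paper's own proof: the same input-independent priming phase giving each of the $n$ items $k-1$ occurrences in disjoint slots, the same placement of the real point for $x^j$ at time $(k-1)n + (x^j-1)n + j$, and the same answer via the difference of two prefix estimates at $(k-1)n + ty_2 n$ and $(k-1)n + (y_1-1)n$, with the identical privacy (single event change) and utility ($2\alpha((M+k-1)n,\beta/2)$ via a union bound) accounting. The proposal is correct.
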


\begin{proof}
Let $T = (M + k - 1)n$ and let $U = n$. Given an input $x^1, \dots, x^{n'} \in [M]$ to the 1d-range query problem where $n' \leq n$, we construct an input to the cumulative $\freq{\geq k}$ algorithm as follows:
\begin{itemize}
\item For all $u \in [U]$, let $S^{(k-1)(u-1) + 1}_u = \cdots = S^{(k-1)u}_u = 1$ and $S^t_u = 0$ for all $t \in [T] \setminus \{(k-1)(u-1) + 1, \dots, (k - 1)u\}$.
\item For all $j \in [n]$, increment $S^{n(k - 1) + n(x^j -1) + j}_{j}$ by one.
\end{itemize}
It is not hard to see that this is a singleton instance.
Similar to before, we can answer any range query $r_{y_1, y_2}(\bx)$ by outputting $\freq{\geq k}(S^{\leq n(y_2  + k -1)}) - \freq{\geq k}(S^{\leq n(y_1  + k - 2)})$. This gives an $(\eps, \delta)$-DP algorithm for 1d-range query with $(2 \cdot \alpha((M + k - 1) n, \beta/2), \beta)$-utility.
\end{proof}

\section{Fixed-Window Queries}

We next move on to prove our bounds for fixed-window queries. Since the bounds are different for the two DP notions, we start with event-level DP and then consider item-level DP.

\subsection{Event-Level DP}

\subsubsection{Time Horizon Reduction: Proof of \Cref{lem:horizon-reduction-fixed-window}}

We start with a lemma that allows us to reduce the time horizon $T$ to $2W$ in this setting. The proof of this lemma follows the overview discussed in \Cref{sec:tech-overview}.

\begin{lemma} \label{lem:horizon-reduction-fixed-window}
For any $k \in \N$, if there exists an $(\eps, \delta)$-DP algorithm for fixed-window $\freq{\geq k}$ with $(\alpha(T, \eps, \delta, \beta), \beta)$-utility with event-level DP, there exists an $(\eps, \delta)$-DP algorithm with $(\alpha(2W, \eps/2, \delta/2, \beta), \beta)$-utility in the same setting.
%\jelani{Isn't this backwards? We want to say use the $2W$ algorithm to solve the $T$ case, not the other way around.}\pasin{I don't think it's backward. We're saying that if there is an algorithm with general utility guarantee for $T$, then we get another general algorithm where we replace $T$ in the utility guarantee with $2W$. So I think the statement is correct here.}
% \jelani{OK yes, I agree. Thanks!}
\end{lemma}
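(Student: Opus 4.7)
My plan is to build the length-$2W$ algorithm $\cA'$ by slicing the time horizon into overlapping blocks of length $2W$, running the given algorithm $\cA$ separately on each block with halved privacy budget, and routing each fixed-window query to whichever block fully contains its $W$-length interval. Concretely, for $j = 0, 1, \dots, \lceil T/W\rceil - 2$ I would define the block
\[
I_j \;=\; \{jW+1,\ jW+2,\ \dots,\ (j+2)W\},
\]
of length $2W$, and let $\ds^{(j)} = (S^{jW+1}, \dots, S^{(j+2)W})$ be the restricted dataset, re-indexed as a dataset over time horizon $2W$. I then run an independent copy of $\cA$ on each $\ds^{(j)}$ with privacy parameters $(\eps/2, \delta/2)$, obtaining estimates for all $W$-length fixed-window queries inside $I_j$.

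To answer an original fixed-window query $(t, t+W-1)$, set $j = \lfloor (t-1)/W \rfloor$; a direct check gives $[t, t+W-1] \subseteq I_j$ (e.g., $t=1 \Rightarrow I_0=[1,2W]$, $t = W+1 \Rightarrow I_1=[W+1,3W]$, etc.), so the query corresponds to a valid fixed-window query on $\ds^{(j)}$ and $\cA'$ outputs that answer.

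For privacy, the key observation is that the blocks $\{I_j\}$ form a cover in which every time step $t \in [T]$ lies in at most two blocks (the one starting at $\lfloor(t-1)/W\rfloor \cdot W + 1$ and possibly its predecessor). Hence any event-level neighboring change, which alters a single $S^t$, modifies the input of at most two of the independent runs of $\cA$; since each run is $(\eps/2,\delta/2)$-DP, basic composition (\Cref{thm:basic-comp}) yields overall $(\eps,\delta)$-DP for $\cA'$. For utility, every query is answered by a single run of $\cA$ on a length-$2W$ dataset with parameters $(\eps/2, \delta/2)$, so by assumption the error is at most $\alpha(2W, \eps/2, \delta/2, \beta)$ with probability $\geq 1-\beta$ per query, matching the claim.

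I do not anticipate a serious obstacle here: the reduction is structurally clean once the overlap pattern is fixed. The only mildly subtle point is the choice of overlap---one must overlap by exactly $W$ so that every window of length $W$ is contained in some block, while keeping the per-timestep multiplicity at $2$ so that basic composition (rather than a worse factor) suffices. A secondary bookkeeping issue is the boundary case when $T$ is not a multiple of $W$, which is handled either by padding the input with empty multisets up to the next multiple of $W$ or by shortening the last block; neither affects the stated bounds.
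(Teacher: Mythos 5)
Your proposal is correct and is essentially the paper's own proof: the same overlapping length-$2W$ blocks shifted by $W$, each run with budget $(\eps/2,\delta/2)$, with every $W$-length window routed to a block containing it. The only cosmetic difference is in how the privacy accounting is packaged—your observation that an event-level change touches at most two runs is formalized in the paper by splitting the blocks into two parity classes of disjoint blocks, applying parallel composition (\Cref{thm:parallel-comp}) within each class and basic composition (\Cref{thm:basic-comp}) across the two classes, which yields exactly the $(\eps,\delta)$ bound you claim.
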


\begin{proof}
Let $\cA$ denote the algorithm for the former
%\jelani{`original algorithm' I think may be less ambiguous if we say `algorithm for the former' instead}
%\pasin{Changed as suggested.} 
 and $\eps'=\eps/2, \delta'=\delta/2$. The new algorithm works as follows:
\begin{itemize}
\item For all $j \in [\lceil T/W \rceil]$, run a separate copy of the $(\eps', \delta')$-DP algorithm $\cA$ on $S^{(j-1)W+1}, \dots, S^{(j+1)W}$.
\item When we would like to answer the query for $i, i + W - 1$, then use the $\lfloor i/T + 1\rfloor$th copy of the algorithm.
\end{itemize}
We can apply the parallel composition theorem (\Cref{thm:parallel-comp}) on the copies with $j = 1, 3, \ldots$; this implies that the combined algorithm for such $j$'s is $(\eps', \delta')$-DP. Similarly, we have that the combined algorithm for $j = 2, \dots$ is also $(\eps', \delta')$-DP. Then, applying basic composition ensures that the entire algorithm is indeed $(2\eps', 2\delta') = (\eps,\delta)$-DP.

%From the parallel composition theorem~\cite{McSherry10} and the fact that each time step appears in only two copies of the runs of $\cA$, the entire algorithm is indeed $(\eps, \delta)$-DP. 
The claimed accuracy follows immediately from definition.
%\badih{Should we state the parallel composition theorem referenced in the previous sentence in the Preliminaries section?}
\end{proof}

\subsubsection{Algorithm}

Given \Cref{lem:horizon-reduction-fixed-window}, we will focus only on designing the algorithm for the $T = 2W$ case. Here we show a reduction to 1d-range query. We remark that the algorithm below works even in the \emph{item-level DP} setting; indeed, we will also use it as a subroutine for item-level DP.

\begin{lemma} \label{lem:fixed-window-algo}
If there is an $(\eps, \delta)$-DP algorithm for 1d-range query with $(\alpha(M, \eps, \delta, \beta), \beta)$-utility, then there is an $(\eps, \delta)$-DP algorithm for fixed-window $\freq{\geq k}$ in the item-level DP and bundle setting when $T = 2W$ with $\left(2 \cdot \alpha\left(T+1, \frac{\eps}{4k},\frac{\delta}{8k}, \beta/2\right), \beta\right)$-utility for every $k \in \N$.
\end{lemma}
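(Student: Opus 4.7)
The plan is to reduce fixed-window $\freq{\geq k}$ (with $T = 2W$) to two 1d-range query invocations following the characterization sketched in \Cref{sec:tech-overview}. Concretely, an item $u$ is \emph{not} counted in the window $[t, t+W-1]$ iff there exists $\ell$ with $t^u_\ell < t \leq \min\{t^u_{\ell+1}, t^u_{\ell+k} - W\}$. I would first verify uniqueness of such $\ell$: if $\ell_1 < \ell_2$ both satisfied this, then $t^u_{\ell_2} \geq t^u_{\ell_1 + 1} \geq t$, contradicting $t^u_{\ell_2} < t$. Taking the $\min$ with $t^u_{\ell+1}$ is precisely what rules out the double-counting that the naive condition $t \in (t^u_\ell, t^u_{\ell+k} - W]$ suffers from; checking the edge cases of items with $0, 1, \ldots, k-1$ occurrences in the window confirms that exactly one valid $\ell$ exists, and items with $\geq k$ occurrences in the window have none.

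From this characterization, I would build two 1d-range query instances $\cI_1, \cI_2$ over the domain $\{0, \dots, T+1\}$: for every $(u, \ell)$ with $t^u_{\ell+k} - t^u_\ell > W$, place $t^u_\ell$ in $\cI_1$ and $\min\{t^u_{\ell+1}, t^u_{\ell+k} - W\}$ in $\cI_2$. Running the supplied $(\eps', \delta')$-DP 1d-range query algorithm with $\eps' = \eps/(4k)$, $\delta' = \delta/(8k)$ on each instance, and letting $\hat{r}^{(i)}_{0, t-1}$ denote its estimate for the prefix $[0, t-1]$, I would return $\tilde{e}^{\freq{\geq k}}_{t, t+W-1} := U - \bigl(\hat{r}^{(1)}_{0, t-1} - \hat{r}^{(2)}_{0, t-1}\bigr)$. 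The uniqueness property ensures that the exact value of this expression is $\freq{\geq k}(S^{[t, t+W-1]})$, so a union bound over the two invocations (each at failure probability $\beta/2$) yields additive error at most $2\alpha(T+1, \eps', \delta', \beta/2)$ with probability $1 - \beta$.

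The main technical step is the item-level DP analysis. I must bound, for each item $u$, the number of indices $\ell$ with $t^u_{\ell+k} - t^u_\ell > W$. If two such indices $\ell_1 < \ell_2$ satisfy $\ell_2 - \ell_1 \geq k$, then $t^u_{\ell_2 + k} - t^u_{\ell_1} \geq (t^u_{\ell_2 + k} - t^u_{\ell_2}) + (t^u_{\ell_1+k} - t^u_{\ell_1}) > 2W$, which, using $T = 2W$ and the padding convention of \Cref{sec:prelims} to bound the overall span of real occurrences by $2W$, yields a contradiction. Hence valid $\ell$'s live in a block of length less than $k$, so each item contributes at most $2k$ points across $\cI_1 \cup \cI_2$. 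Under item-level DP, replacing one item therefore triggers at most $2k$ additions/removals in each instance, and by \Cref{fact:group-dp} each instance is $(\eps/2, \delta/2)$-DP (using the bound $\tfrac{e^{2k\eps'}-1}{e^{\eps'}-1} \leq 4k$ for $\eps \in (0,1]$). Finally, \Cref{thm:basic-comp} over the two instances yields the overall $(\eps, \delta)$-DP guarantee. The main obstacle I anticipate is treating the boundary occurrences induced by the padding carefully enough that the span contradiction is airtight.
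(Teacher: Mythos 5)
Your proposal is correct and essentially identical to the paper's proof: the same two 1d-range-query instances (inserting $t^u_\ell$ and $\min\{t^u_{\ell+1}, t^u_{\ell+k}-W\}$ whenever $t^u_{\ell+k}-t^u_\ell > W$), the same answer $U$ minus the difference of prefix counts with uniqueness of $\ell$ preventing double counting, the same at-most-$k$-points-per-item argument exploiting $T=2W$, and the same group-privacy ($2k$ changes) plus two-fold composition accounting of the $\frac{\eps}{4k},\frac{\delta}{8k}$ parameters. The only nit is that with the padding the full span is $T+1=2W+1$ rather than $2W$, but since time steps are integers the two strict inequalities force a span of at least $2W+2>2W+1$, so the boundary obstacle you flagged closes exactly as in the paper (which argues equivalently via the smallest valid index $\ell'$).
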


\begin{proof}
Let $M = T + 1$.
%Let $\cA$ be the $\left(\frac{\eps}{4k}, \frac{\delta}{8k}\right)$-DP algorithm for 1d-range query.
We will in fact create \emph{two} instances of 1d-range query. For clarity, we will call the first instance $\bx$ and the second instance $\bx'$. The instances are as follows:
\begin{enumerate}
\item %For each item $u \in [U]$, let $0 = t_0^u < t_1^u < \cdots < t_{m_u}^u = T + 1$ denote the time item $u$ is reached. 
Recall the definition of $m_u$ and $t^1_u, \dots, t^{m_u}_u$ from \Cref{sec:prelims}. \\
For all $u \in [U]$ and $\ell = 1, \dots, m_u - k$, do the following:
\begin{enumerate}
\item If $t_u^{\ell + k} - t_u^\ell > W$, add $t_u^\ell$ to $\bx$ and $\min\{t_u^{\ell + 1}, t_u^{\ell + k} - W\}$ to $\bx'$. \label{eq:add-elements-fixed-time}
\end{enumerate}
\item We then run the $\left(\frac{\eps}{4k},\frac{\delta}{8k}\right)$ 1d-range query algorithms on both instances $\bx, \bx'$ to get estimates $\tr_{y_1, y_2}$ of $r_{y_1, y_2}(\bx)$ and $\tr'_{y_1, y_2}$ of $r_{y_1, y_2}(\bx')$ for all $y_1, y_2 \in [M]$.
\item To answer $\freq{\geq k}(S^{[i, i + W - 1]})$, we output $|U| - \tr_{1,i} + \tr'_{1,i}$.
\end{enumerate}
Next, we claim that an item-level change can result in at most $2k$ changes to each of $\bx, \bx'$. To prove this, it suffices to show that any given element $u$ contributes to at most $k$ items added to each of $\bx, \bx'$. To see that the latter is true, let $\ell'$ be the smallest index for which $t_u^{\ell' + k} - t_{u}^{\ell'} > W$. Notice that this means $t_u^{\ell' + k} \geq W + 1$. Since the time horizon is $T = 2W$, this means that $t_u^{\ell+k} - t_u^\ell \leq W$ for all $\ell > \ell' + k$. In other words, the condition in the loop cannot be satisfied for $\ell \notin \{\ell', \dots, \ell' + k - 1\}$. Thus, the number of points added to each of $\bx, \bx'$ is at most $k$.

Given the above claim, we may apply group  privacy (\Cref{fact:group-dp}) to conclude that the entire algorithm is $(\eps, \delta)$-DP as desired.

%Consider an event-level change where $u$ is removed from the set $S^t$. Suppose that $\ell'$ is the largest index with $t = t^u_{\ell'}$ before the removal. Then, it is clear that the loop in the algorithm is exactly the same except for $\ell \in \{\ell' - k, \cdots, \ell'\}$. As a result, there can be at most $k + 1$ changes to each of $\bx, \bx'$. Therefore, by group differential privacy (\Cref{fact:group-dp}), we can conclude that the entire algorithm is $(\eps, \delta)$-DP as desired.

To see its correctness, for each $u \in [U]$, let $\ell^*(u, i)$ denote the last time step it is reached before $i$ (i.e., the largest $\ell$ such that $t_u^\ell < i$). Notice that
\begin{align*}
&\freq{\geq k}(S^{[i, i + W - 1]})
%= |\{u \mid S^{[i, i + W - 1]}_u \geq k\}|
= |U| - |\{u \mid S^{[i, i + W - 1]}_u < k\}|
%= |U| - |\{u \mid t_{\ell^*(u, i) + k} > i + W - 1\}|
= |U| - \sum_{u \in [U]} \ind[t_u^{\ell^*(u, i) + k} > i + W - 1].
\end{align*}
Observe that the two elements added in Step~(\ref{eq:add-elements-fixed-time}) get canceled out for all $\ell \ne \ell^*(u, i)$ in $r_{1,i}(\bx) - r_{1,i}(\bx')$. For $\ell = \ell^*(u, i)$, they are not canceled if and only if $t_u^{\ell + k} - W > i$. Thus,
\[
r_{1,i}(\bx) - r_{1,i}(\bx') = \sum_{u \in [U]} \ind[t_u^{\ell^*(u, i) + k} > i + W - 1].
\]
By combining the above two equations, we then have
$\freq{\geq k}(S^{[i, i + W - 1]}) = |U| - r_{1,i}(\bx) + r_{1,i}(\bx').$ The utility guarantee then follows from that of the 1d-range query algorithm.
\end{proof}

Combining~\Cref{lem:horizon-reduction-fixed-window},~\Cref{lem:fixed-window-algo}, and \Cref{thm:1d-pure} yields the following concrete bound.
\begin{corollary} \label{cor:fixed-window-alg-event-level}
For any $k \in \N$ and $\eps > 0$, there is an $\eps$-DP algorithm for fixed-window $\freq{\geq k}$ in the event-level DP and bundle setting with $(O(k \cdot \log^{1.5} W \log(1/\beta)/\eps, \beta)$-utility.
\end{corollary}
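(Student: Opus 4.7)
The plan is to obtain the corollary by simply chaining together three results already established in the excerpt: \Cref{thm:1d-pure} (the $\eps$-DP 1d-range query algorithm), \Cref{lem:fixed-window-algo} (the reduction from fixed-window $\freq{\geq k}$ at horizon $T=2W$ to 1d-range query), and \Cref{lem:horizon-reduction-fixed-window} (the time-horizon reduction from general $T$ down to $2W$). No new ideas are needed; the task is purely one of propagating parameters.

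First I would instantiate \Cref{thm:1d-pure} with privacy parameter $\eps/2$, yielding an $(\eps/2)$-DP algorithm for 1d-range query whose utility at universe size $M$ is $\alpha(M, \eps/2, 0, \beta) = O\bigl(\log^{1.5} M \cdot \log(1/\beta)/\eps\bigr)$. Next, I would plug this into \Cref{lem:fixed-window-algo} (with $\delta = 0$), which produces an $(\eps/2)$-DP algorithm for fixed-window $\freq{\geq k}$ in the item-level (hence event-level) and bundle setting restricted to $T=2W$, with utility
\[
2\cdot \alpha\!\left(2W+1,\ \tfrac{\eps/2}{4k},\ 0,\ \beta/2\right)
\;=\; O\!\left(\frac{k \cdot \log^{1.5}(W) \cdot \log(1/\beta)}{\eps}\right).
\]
Finally I would feed this intermediate algorithm into \Cref{lem:horizon-reduction-fixed-window}, which requires a subroutine at horizon $2W$ with privacy budget $(\eps/2, 0)$ and outputs an $\eps$-DP algorithm for arbitrary time horizon $T$ while preserving the utility bound $\alpha(2W, \eps/2, 0, \beta)$. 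This yields exactly the claimed $O\bigl(k \cdot \log^{1.5} W \cdot \log(1/\beta)/\eps\bigr)$ error with event-level DP guarantees over the full horizon.

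There is no real obstacle here; the only thing to be careful about is tracking the various factors of $2$ and $4k$ introduced by the two reductions so that the final privacy budget comes out to precisely $\eps$. Both reductions are stated cleanly with explicit parameter bookkeeping, so the composition is entirely mechanical and the asymptotic $O(\cdot)$ notation absorbs all the small multiplicative constants.
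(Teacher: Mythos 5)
Your proposal is correct and matches the paper's proof, which likewise obtains the corollary by combining \Cref{lem:horizon-reduction-fixed-window}, \Cref{lem:fixed-window-algo}, and \Cref{thm:1d-pure}; the explicit tracking of the factors of $2$ and $4k$ (absorbed into the $O(\cdot)$) is exactly the intended bookkeeping.
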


\subsubsection{Lower Bounds}

We prove two lower bounds for the problem, one based on the 1d-range query and the other based on linear queries. The latter shows that a polynomial dependency on $k$ is necessary.

\paragraph{Range Query-Based Lower Bounds} We start with the former, which is just a reduction back to the cumulative case.

\begin{lemma} \label{lem:fixed-window-lb-generic}
Let $k$ be any positive integer.
If there exists an $(\eps, \delta)$-DP algorithm for fixed-window $\freq{\geq k}$ in the event-level DP and bundle (resp. singleton) setting with $(\alpha(W, \beta), \beta)$-utility, then there exists an $(\eps, \delta)$-DP algorithm for cumulative $\freq{\geq k}$ in the event-level DP and bundle (resp. singleton) setting with $(\alpha(T, \beta), \beta)$-utility
\end{lemma}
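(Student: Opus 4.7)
The plan is to embed the cumulative instance into a fixed-window instance by \emph{padding the beginning} of the dataset with empty time steps. Concretely, given a cumulative input $(S^1, \dots, S^T)$ of time horizon $T$, I will construct a new dataset $(\tilde S^1, \dots, \tilde S^{T'})$ of time horizon $T' = 2T$ in which $\tilde S^t = 0$ for $t \in [T]$ and $\tilde S^{T+t} = S^t$ for $t \in [T]$. I then run the given fixed-window algorithm on this new dataset with window width $W := T$, and for each cumulative query $\freq{\geq k}(S^{\leq t})$ with $t \in [T]$, I answer using the fixed-window estimate for the window $[t+1,\, t+T]$. The key identity is $\tilde S^{[t+1,\, t+T]} = \tilde S^{[T+1,\, t+T]} = S^{[1, t]} = S^{\leq t}$, so the query returns exactly the desired quantity up to the fixed-window algorithm's additive error. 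Note that each such window $[t+1,\,t+T]$ lies inside $[1, 2T]$ since $t \in [T]$, so all queries used are valid fixed-window queries of width $W$.

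For the privacy argument, the construction is a deterministic injection on the event space: an event added or removed at original time step $t$ corresponds to a single event added or removed at new time step $T + t$. Hence event-level neighbors on the cumulative side map to event-level neighbors on the fixed-window side, and the $(\eps,\delta)$-DP guarantee of the fixed-window algorithm transfers directly via post-processing. The same bijection also preserves the singleton constraint ($\|\tilde S^t\|_1 \le 1$ whenever $\|S^t\|_1 \le 1$) and trivially preserves the bundle setting, so the reduction handles both cases uniformly.

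For utility, I appeal to the per-query utility definition from~\Cref{sec:setting}: each cumulative query $(k, t)$ is answered by a \emph{single} call to the fixed-window algorithm on a width-$W$ window, so with probability at least $1 - \beta$ the error is at most $\alpha(W,\beta) = \alpha(T,\beta)$, as required. I do not expect any real obstacle here — the reduction is essentially a clean padding argument; the only minor points to verify are that all window indices remain in range and that the neighboring-relation map is faithful, both handled above.
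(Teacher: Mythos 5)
Your proposal is correct and is essentially the same reduction as the paper's: pad the cumulative instance with $T$ empty time steps, set $W = T$ (time horizon $2T$), and answer the cumulative query at time $t$ via the fixed-window query on $[t+1, t+T]$, with privacy following because an event change at time $t$ maps to a single event change at time $T+t$. Your additional remarks on the singleton constraint and query-index validity just make explicit what the paper leaves implicit.
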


\begin{proof}
Given an input $S^1, \dots, S^T$ for cumulative $\freq{\geq k}$, we can create an instance $S'^1, \dots, S'^{2T}$ for fixed-window $\freq{\geq k}$ by letting $W = T, S'^1 = \cdots = S'^{T} = \emptyset$ and $S'^{T + t} = S^t$ for all $t \in [T]$. It is simple to see that $\freq{\geq k}(S^{[1:t]}) = \freq{\geq k}(S'^{[t+1:t+W]})$. Therefore, running the fixed-window algorithm on the new instance allows us to answer the old instance with the same utility.
\end{proof}

Plugging the above into \Cref{lem:cum-lb-batch} and \Cref{lem:cum-lb-singleton}, we arrive at the following:

\begin{lemma} \label{lem:fixed-window-lb-batch}
Let $k$ be any positive integer and $\eps, \delta > 0$.
If there exists an $(\eps, \delta)$-DP algorithm for fixed-window $\freq{\geq k}$ in the event-level DP and bundle setting with $(\alpha(W, \beta), \beta)$-utility, then there exists an $(\eps, \delta)$-DP algorithm for 1d-range query with $(2 \cdot \alpha(M, \beta/2), \beta)$-utility.
\end{lemma}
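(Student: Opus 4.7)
The proof is essentially a direct composition of two results that appear immediately before the statement, so my plan is short and mechanical rather than requiring any new ideas.

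First, I would observe that Lemma~\ref{lem:fixed-window-lb-generic} already converts any $(\eps,\delta)$-DP algorithm for fixed-window $\freq{\geq k}$ in the event-level bundle setting into an $(\eps,\delta)$-DP algorithm for cumulative $\freq{\geq k}$ in the same setting, with utility of the form $(\alpha(T,\beta),\beta)$. The key point here is that the reduction in Lemma~\ref{lem:fixed-window-lb-generic} chooses the fixed-window width to equal the cumulative time horizon ($W=T$), so the utility function $\alpha(\cdot,\beta)$ is evaluated at $T$ rather than at $W$; this matches exactly the input needed for the next step.

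Second, I would plug the resulting cumulative algorithm into Lemma~\ref{lem:cum-lb-batch}, which converts any $(\eps,\delta)$-DP algorithm for cumulative $\freq{\geq k}$ (bundle, event-level) with $(\alpha(T,\beta),\beta)$-utility into an $(\eps,\delta)$-DP algorithm for $1$d-range query with $(2\alpha(M,\beta/2),\beta)$-utility, by taking $T=M$ in that reduction. Chaining the two reductions gives exactly the statement of Lemma~\ref{lem:fixed-window-lb-batch}: event-level DP and the bundle setting are preserved throughout (both intermediate reductions preserve them), the privacy parameters $(\eps,\delta)$ are unchanged, and the final utility bound $(2\alpha(M,\beta/2),\beta)$ is identical to what Lemma~\ref{lem:cum-lb-batch} produces.

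There is no substantial obstacle; the only thing worth double-checking is bookkeeping, namely that (i) the neighboring relations match up (both prior lemmas are stated for event-level DP in the bundle setting, which agrees with our hypothesis), (ii) the width parameter in the fixed-window hypothesis is interpreted as the ``$T$'' of the cumulative problem via Lemma~\ref{lem:fixed-window-lb-generic}, and (iii) no additional $\beta$ loss is incurred in the first reduction, so the $\beta/2$ in the final utility bound comes entirely from the cumulative-to-range step in Lemma~\ref{lem:cum-lb-batch}. Once those are verified, the proof is immediate and can be written in a couple of lines.
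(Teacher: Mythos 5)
Your proposal is correct and is exactly the paper's argument: the paper obtains Lemma~\ref{lem:fixed-window-lb-batch} by plugging the fixed-window-to-cumulative reduction of Lemma~\ref{lem:fixed-window-lb-generic} (with $W=T$) into the cumulative-to-range-query reduction of Lemma~\ref{lem:cum-lb-batch} (with $T=M$). Your bookkeeping on the neighboring relation, the privacy parameters, and the $\beta/2$ loss matches the paper's.
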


\begin{lemma} \label{lem:fixed-window-lb-singleton}
Let $k$ be any positive integer and $\eps, \delta > 0$.
If there exists an $(\eps, \delta)$-DP algorithm for fixed-window $\freq{\geq k}$ in the event-level DP and singleton setting with $(\alpha(W, \beta), \beta)$-utility, then there exists an $(\eps, \delta)$-DP algorithm for 1d-range query on at most $n$ input points with $(2 \cdot \alpha((M + k - 1) n, \beta/2), \beta)$-utility.
\end{lemma}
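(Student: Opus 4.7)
The plan is to obtain this lemma by chaining together the two reductions that have already been established: first, the reduction from fixed-window to cumulative $\freq{\geq k}$ given by Lemma~\ref{lem:fixed-window-lb-generic}, and then the reduction from cumulative $\freq{\geq k}$ to 1d-range query given by Lemma~\ref{lem:cum-lb-singleton}. This mirrors exactly the composition used to derive Lemma~\ref{lem:fixed-window-lb-batch} from Lemma~\ref{lem:fixed-window-lb-generic} and Lemma~\ref{lem:cum-lb-batch}; the only new thing to verify is that everything remains compatible with the singleton restriction.

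Concretely, suppose $\cA$ is an $(\eps,\delta)$-DP algorithm for fixed-window $\freq{\geq k}$ in the event-level, singleton setting with $(\alpha(W,\beta),\beta)$-utility. Apply the construction in the proof of Lemma~\ref{lem:fixed-window-lb-generic}: given a cumulative input $S^1,\dots,S^T$, we form $S'^1,\dots,S'^{2T}$ by prepending $T$ empty multisets and setting $S'^{T+t}=S^t$, then invoke $\cA$ with window length $W=T$. Because the prepended multisets are empty and the remaining multisets are copies of $S^t$, the constructed instance is singleton whenever the original cumulative input is singleton. Hence this yields an $(\eps,\delta)$-DP algorithm $\cA'$ for cumulative $\freq{\geq k}$ in the event-level, singleton setting with $(\alpha(T,\beta),\beta)$-utility.

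Now feed $\cA'$ into Lemma~\ref{lem:cum-lb-singleton}, which already handles the event-level singleton case: it turns any $(\eps,\delta)$-DP cumulative algorithm with $(\alpha(T,\beta),\beta)$-utility into an $(\eps,\delta)$-DP algorithm for 1d-range query on at most $n$ input points, paying a time-horizon blow-up from $T$ to $(M+k-1)n$ and a factor of $2$ in the error (with $\beta$ halved). Combining the two steps gives precisely the claimed $(2\cdot\alpha((M+k-1)n,\beta/2),\beta)$-utility, and privacy is preserved throughout since both reductions are post-processing/composition-free black-box constructions.

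There is really no major obstacle; the only subtlety worth double-checking is that the two reductions compose cleanly under the singleton constraint, which as noted above holds because padding with empty multisets does not violate $\|S'^t\|_1\le 1$. Consequently, the proof can be presented in a few lines by simply chaining the statements of Lemma~\ref{lem:fixed-window-lb-generic} and Lemma~\ref{lem:cum-lb-singleton}.
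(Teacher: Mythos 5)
Your proposal is correct and matches the paper's own derivation: the paper obtains Lemma~\ref{lem:fixed-window-lb-singleton} precisely by plugging the padding reduction of Lemma~\ref{lem:fixed-window-lb-generic} (which is stated for both bundle and singleton, since prepending empty multisets preserves $\|S'^t\|_1 \le 1$) into Lemma~\ref{lem:cum-lb-singleton}. Your extra check that the singleton constraint survives the padding is exactly the point the paper relies on, so the argument is the same.
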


\paragraph{Linear Query-Based Lower Bounds}
Next, we proceed to prove that a polynomial dependence on $k$ is necessary, by a reduction from linear queries.

\begin{lemma} \label{lem:fixed-window-lb-polyk}
Let $W, k$ be any positive integer and let $\bA$ be any $(d \times m)$ matrix such that $\min\{W , k\} \geq d + 2$. Then, if there exists an $(\eps, \delta)$-DP algorithm for fixed-window $\freq{\geq k}$ in the event-level DP and bundle setting with $\left(\alpha, \beta\right)$-utility, then there exists an $(\eps, \delta)$-DP algorithm for $\bA$-linear query with the same utility.
\end{lemma}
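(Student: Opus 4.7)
\textbf{Proof proposal for \Cref{lem:fixed-window-lb-polyk}.} The plan is to follow the ``dormant items'' strategy outlined in the technical overview, specialized to the linear query setting. Given $\bA \in \{0,1\}^{d \times m}$ and an input $\bx \in \{0,1\}^m$, I will instantiate a fixed-window $\freq{\geq k}$ problem with universe $[U] := [m]$, window size $W$, time horizon $T = W + d - 1$, and query windows $[t_i, t_i'] := [i, i + W - 1]$ for $i \in [d]$. Since $W \geq d + 2 \geq d$, the intersection $\bigcap_i [t_i, t_i'] = [d, W]$ is nonempty; fix $t^* := d$ as the common activation time.

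For each $j \in [m]$ I create a single dormant item $u_j$ whose occurrence pattern $c^t_{u_j} := S^t_{u_j}$ is arranged so that $\sum_{t=i}^{i+W-1} c^t_{u_j} = (k-2) + A_{ij}$ for every $i \in [d]$. A concrete non-negative integer solution is
\[
c^t_{u_j} = \begin{cases} 1 & t \in \{1, \dots, d-1\}, \\ k - d - 1 + A_{1j} & t = d, \\ 0 & t \in \{d+1, \dots, W\}, \\ 1 + A_{ij} - A_{i-1,j} & t = W + i - 1,\ i \in \{2, \dots, d\}, \end{cases}
\]
which is non-negative because $k \geq d + 2$ (so the entry at $t=d$ is non-negative) and differences of $\{0,1\}$ values lie in $\{-1, 0, 1\}$. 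A telescoping verification ($c^{i+W-1}_{u_j} - c^{i-1}_{u_j} = A_{ij} - A_{i-1,j}$) then confirms the sliding window sums. To encode $\bx$, for each $j$ with $x_j = 1$ I add a single unit to $S^{t^*}_{u_j}$. Since $t^* \in [t_i, t_i']$ for every $i$, the window sum for $u_j$ becomes $(k-1) + A_{ij}$, which is $\geq k$ iff $A_{ij} = 1$; hence $\freq{\geq k}(S^{[t_i, t_i']}) = \sum_{j : x_j = 1} A_{ij} = (\bA\bx)_i$. Running the assumed $(\eps, \delta)$-DP fixed-window algorithm and returning its $i$th answer thus estimates $(\bA\bx)_i$ with the same $(\alpha, \beta)$-utility.

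For the DP guarantee, flipping a single coordinate $x_j$ of $\bx$ changes exactly one entry $S^{t^*}_{u_j}$ by $\pm 1$, which is a single event-level change; neighboring linear-query inputs therefore induce neighboring event-level datasets, so event-level $(\eps, \delta)$-DP transfers verbatim to the linear-query algorithm. I expect the main obstacle to be purely mechanical: verifying that the chosen pattern $c^t_{u_j}$ remains a valid (non-negative integer) occurrence sequence and that the sliding-window identities hold at the boundaries $i = 1$ and $i = d$, which is where the inequality $\min\{W, k\} \geq d + 2$ is actually used. Everything else, including the choice of $T = W + d - 1 < 2W$ so that the fixed-window problem is non-degenerate, is straightforward bookkeeping.
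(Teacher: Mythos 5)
Your construction is correct and follows essentially the same route as the paper's proof: both reduce $\bA$-linear queries to fixed-window $\freq{\geq k}$ by giving each item $u_j$ a base occurrence pattern whose window sums equal $k-2+A_{ij}$, encoding $x_j$ as a single unit placed at one time step common to all $d$ query windows (so a coordinate flip is one event-level change), and reading off $\freq{\geq k}(S^{[i,i+W-1]}) = (\bA\bx)_i$. The only differences are cosmetic bookkeeping: the paper takes $k = W = d+2$, $T = 2W$ and places $A_{i,u}$ at times $i$ and $i+W-1$ with the $x_u$-dependent mass at $t = W-1$, whereas you use $T = W+d-1$ and a telescoping pattern, both of which verify correctly.
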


\begin{proof}
Assume w.l.o.g. that $k = W = d + 2$ and $T = 2W$ and let $U = m$. Given an input $\bx \in \{0,1\}^m$ to the $\bA$-linear query problem, we construct an input to the fixed-window $\freq{\geq k}$ algorithm as follows:
\begin{itemize}
\item For all $u \in [m]$ and $i \in [d]$, let $S^i_u = S^{i + W - 1}_u = A_{i, u}$.
\item Furthermore, for all $u \in [m]$, let $S^{W-1}_u = x_u + k - 2 - \sum_{j \in [d]} A_{j, u}$. \\
(For all remaining pairs $(u, t) \in [m] \times [T]$ not mentioned above, $S^t_u = 0$.)
\end{itemize}
Notice that, for all $i \in [d]$ and $u \in [m]$, we have
\begin{align*}
S^{[i, i+W-1]}_{u} 
&= \sum_{i' = i}^{i + W - 1} S^{i'}_{u} \enspace = \enspace \left(\sum_{i' = i}^{d} S^{i'}_{u}\right) + S^{W-1}_{u} + \left(\sum_{i' = W}^{i + W - 2} S^{i'}_{u}\right) + S^{i + W - 1}_u \\
&= \left(\sum_{i' = i}^{d} A_{i', u}\right) + \left(x_u + k - 2 - \sum_{j \in [d]} A_{j, u}\right) + \left(\sum_{i' = W}^{i + W - 2} A_{i' - W + 1, u}\right) + A_{i, u} \\
&= A_{i, u} + x_u + k - 2,
\end{align*}
which is equal to $k$ if and only if $x_u = 1$ and $A_{i, u} = 1$, and is less than $k$ otherwise. Hence,
\begin{align*}
\freq{\geq k}(S^{[i, i+W-1]}) = (\bA\bx)_i.
\end{align*}
Therefore, we may run the $(\eps, \delta)$-DP algorithm for $\freq{\geq k}$, and compute $\bA\bx$ with the same utility guarantee.
\end{proof}

Plugging the above into the known lower bound for linear queries (\Cref{thm:linear-query-lb} with $d, m = \Theta(\min\{k, W\})$), we get a concrete lower bound in terms of $k$:
\begin{corollary} \label{cor:fixed-window-lb-polyk-concrete}
Let $W, k$ be any positive integer and $\eps, \delta > 0$ be any sufficiently small constant. Then, there is no $(\eps, \delta)$-DP algorithm for fixed-window $\freq{\geq k}$ in the event-level DP and bundle setting with $(o(\sqrt{\min\{k, W\}}), 0.01)$-utility.
\end{corollary}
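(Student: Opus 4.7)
The plan is to obtain this corollary by chaining two earlier results: Lemma~\ref{lem:fixed-window-lb-polyk}, which reduces $\bA$-linear queries to fixed-window $\freq{\geq k}$, and Theorem~\ref{thm:linear-query-lb}, the $\Omega(\sqrt{m})$ lower bound for linear queries. Since the heavy lifting is done in the reduction, this should be a short parameter-tuning argument.

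First I would instantiate the parameters so that the reduction applies. Let $C$ be the constant hidden in the $d = O(m)$ bound of Theorem~\ref{thm:linear-query-lb}, and set $m := \lfloor \min\{k, W\} / (C + 2) \rfloor$. Then the matrix $\bA \in \{0,1\}^{d \times m}$ guaranteed by that theorem has $d \le Cm \le \min\{k, W\} - 2$, so the hypothesis $\min\{k, W\} \ge d + 2$ required by Lemma~\ref{lem:fixed-window-lb-polyk} holds.

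Next I would argue by contradiction: suppose there were an $(\eps, \delta)$-DP algorithm for fixed-window $\freq{\geq k}$ in the event-level DP and bundle setting with $(o(\sqrt{\min\{k, W\}}), 0.01)$-utility. Invoking Lemma~\ref{lem:fixed-window-lb-polyk} with the matrix $\bA$ chosen above would produce an $(\eps, \delta)$-DP algorithm for $\bA$-linear queries attaining the same $(o(\sqrt{\min\{k, W\}}), 0.01)$-utility. Because $m = \Theta(\min\{k, W\})$, this utility is $o(\sqrt{m})$, directly contradicting Theorem~\ref{thm:linear-query-lb}.

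There is essentially no technical obstacle; the only care needed is in matching constants so that $\min\{k, W\} \ge d + 2$ is compatible with $m = \Theta(\min\{k, W\})$, which is handled by the factor $C + 2$ above. One also has to verify that the smallness condition on $\eps, \delta$ required by Theorem~\ref{thm:linear-query-lb} is compatible with that in the corollary, but this is immediate since the corollary already assumes $\eps, \delta$ are sufficiently small constants. All real work has been done by Lemma~\ref{lem:fixed-window-lb-polyk} and Theorem~\ref{thm:linear-query-lb}; the corollary is just their combination.
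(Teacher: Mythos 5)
Your proposal is correct and follows essentially the same route as the paper, which obtains the corollary by plugging Lemma~\ref{lem:fixed-window-lb-polyk} into Theorem~\ref{thm:linear-query-lb} with $d, m = \Theta(\min\{k, W\})$. Your explicit choice $m = \lfloor \min\{k,W\}/(C+2) \rfloor$ just makes the constant-matching for the hypothesis $\min\{k,W\} \ge d+2$ precise, which the paper leaves implicit.
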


The above reduction also works in the singleton setting where we have to ``spread out'' the changes so that each time step involves only a single element.

\begin{lemma} \label{lem:fixed-window-lb-polyk-singleton}
Let $W, k$ be any positive integer and let $\bA$ be any $(d \times m)$ matrix such that $k \geq d + 2$ and $W \geq 4dm + 1$. Then, if there exists an $(\eps, \delta)$-DP algorithm for fixed-window $\freq{\geq k}$ in the event-level DP and singleton setting with $\left(\alpha, \beta\right)$-utility, then there exists an $(\eps, \delta)$-DP algorithm for $\bA$-linear query with the same utility.
\end{lemma}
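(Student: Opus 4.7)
The plan is to adapt the bundle reduction of \Cref{lem:fixed-window-lb-polyk} to the singleton setting by ``spreading'' each of its bundled time steps ($i$, $W-1$, and $i+W-1$) into a consecutive block of $m$ singleton time steps, with the $u$-th slot of each block dedicated to item $u$. Concretely, set $T = 2W$, $U = m$, partition $[T]$ into a first-occurrence block $F = [1, dm]$, a middle block $M = [dm+1, W-m]$, and a second-occurrence block $G = [W-m+1, W+(d-1)m] \subseteq [T]$ (using $W \geq (d-1)m$), and designate the $d$ query windows $q_i := [j_i, j_i+W-1]$ with $j_i := (i-1)m+1$ for $i \in [d]$.

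For each $(i,u) \in [d]\times[m]$ with $A_{i,u}=1$, place one copy of item $u$ at the ``first occurrence'' position $(i-1)m + u \in F$ and one at the ``second occurrence'' position $W-m+(i-1)m+u \in G$; the two bijections $(i,u)\mapsto (i-1)m+u$ and $(i,u)\mapsto W-m+(i-1)m+u$ guarantee that these placements are pairwise distinct, preserving the singleton condition inside $F \cup G$. Inside $M$, for each $u$ I place $C_u := x_u + k - 2 - \sum_j A_{j,u} \in [0, k-1]$ copies of item $u$ via a canonical packing that reserves the $k-1$ consecutive slots $[dm+1+(u-1)(k-1),\, dm+u(k-1)]$ for item $u$ and fills the first $C_u$ of them. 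These $m(k-1)$ reserved slots fit inside $M$ because $|M| = W-(d+1)m \geq (3d-1)m+1$, which is sufficient in the regime $k = O(d)$ relevant for the eventual $\Omega(\sqrt{k})$ bound.

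The correctness step is a direct count. By checking the inequalities $j_i \leq p \leq j_i+W-1$, a first-occurrence slot $(i'-1)m+u$ lies in $q_i$ exactly when $i' \geq i$ (using $1 \leq u \leq m$ and $W \gg dm$), a second-occurrence slot $W-m+(i'-1)m+u$ lies in $q_i$ exactly when $i' \leq i$, and $M \subseteq [(d-1)m+1, W] = \bigcap_{i\in[d]} q_i$. Summing the three contributions yields
\[
S^{[j_i, j_i+W-1]}_u \;=\; \sum_{i'\geq i} A_{i',u} \;+\; C_u \;+\; \sum_{i'\leq i} A_{i',u} \;=\; A_{i,u}+x_u+k-2,
\]
which is $\geq k$ iff $A_{i,u} = x_u = 1$, so $\freq{\geq k}(S^{[j_i,j_i+W-1]}) = (\bA\bx)_i$, and returning the algorithm's estimates on $q_1,\dots,q_d$ answers the $\bA$-linear query with the same $(\alpha,\beta)$-utility. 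For privacy, flipping a single $x_u$ changes $C_u$ by exactly $\pm 1$, which under the canonical packing toggles a single slot of $M$ and is thus exactly one event-level change, so the event-level $(\eps,\delta)$-DP guarantee transfers unchanged. The main obstacle I anticipate is the careful bookkeeping needed, under the hypothesis $W \geq 4dm+1$, to verify simultaneously the disjointness and containment of $F, M, G$ in $[T]$, both ``captured iff'' characterizations for all pairs $(i,i') \in [d]^2$, the containment $M \subseteq \bigcap_i q_i$, and the middle-packing inequality $|M| \geq m(k-1)$.
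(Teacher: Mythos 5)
Your construction is essentially the paper's own proof of this lemma: the same singleton ``spreading out'' of the bundle reduction from \Cref{lem:fixed-window-lb-polyk}, with first/middle/second occurrence regions, windows shifted by multiples of $m$, the identical counting identity $S^{[j_i,\,j_i+W-1]}_u = A_{i,u}+x_u+k-2$, and the same one-slot-toggle argument for transferring event-level DP. The only difference is bookkeeping: the paper first normalizes to $k=d+2$ (so its middle region needs only $(d+1)m \leq 4dm+1 \leq W$ slots, laid out with stride $m$ rather than in contiguous per-item runs), whereas you keep general $k$ and hence need $m(k-1)\le W-(d+1)m$ --- the restriction you flag, which is precisely what the paper's ``w.l.o.g.'' absorbs and which does hold (with suitable constants) in the downstream application \Cref{cor:fixed-window-lb-polyk-singleton-concrete}, where $mk = O(W)$.
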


\begin{proof}
Assume w.l.o.g. that $k = d + 2, W = 4dm + 1$ and $T = 2W$. Let $U = m$. Given an input $\bx \in \{0,1\}^m$ in the $\bA$-linear query problem, we construct an input to the cumulative $\freq{\geq k}$ algorithm as follows:
\begin{itemize}
\item For all $u \in [m]$ and $i \in [d]$, let $S^{mi + u}_u = S^{m(i - 1) + u + W - 1}_u = A_{i, u}$.
\item Furthermore, for all $u \in [m]$ and $\ell \in [d + 1]$, let $S^{W-1-m\ell + u}_u = \ind[\ell \leq x_u + k - 2 - \sum_{j \in [d]} A_{j, u}$]. \\
(For all remaining $t \in [T]$ not mentioned above, $S^t_u = 0$.)
\end{itemize}
It is not hard to verify that this is indeed a singleton instance.
Furthermore, notice that, for all $i \in [d]$ and $u \in [m]$, we have
\begin{align*}
S^{[mi, mi+W-1]}_{u}
&= S^{[mi, mi+4md]}_{u} 
\enspace = \enspace \sum_{i' = i}^{i + 4d - 1} S^{m i' + u}_{u} \\
&= \left(\sum_{i' = i}^{d} S^{mi' + u}_{u}\right) + \left(\sum_{i' = d+1}^{4d} S^{mi' + u}_{u}\right) + \left(\sum_{i' = 4d}^{i + 4d - 2} S^{m i' + u}_{u}\right) + S^{m (i - 1) + u + W - 1}_u \\
&= \left(\sum_{i' = i}^{d} A_{i', u}\right) + \left(x_u + k - 2 - \sum_{j \in [d]} A_{j, u}\right) + \left(\sum_{i' = 4d}^{i + 4d - 2} A_{i' - 4d + 1, u}\right) + A_{i, u} \\
&= A_{i, u} + x_u + k - 2,
\end{align*}
which is equal to $k$ if and only if $x_u = 1$ and $A_{i, u} = 1$, and is less than $k$ otherwise. Thus,
\begin{align*}
\freq{\geq k}(S^{[mi, mi+W-1]}) = (\bA\bx)_i.
\end{align*}
Therefore, we may run the $(\eps, \delta)$-DP algorithm for $\freq{\geq k}$, and compute $\bA\bx$ with the same utility.
\end{proof}

Again, plugging this into the known lower bound for linear queries (\Cref{thm:linear-query-lb} with $d, m = \Theta(\min\{k, W/k\})$), we get a concrete lower bound in terms of $k$:
\begin{corollary} \label{cor:fixed-window-lb-polyk-singleton-concrete}
Let $W, k$ be any positive integer such that $W \geq k$ and $\eps, \delta > 0$ be any sufficiently small constant. Then, there is no $(\eps, \delta)$-DP algorithm for fixed-window $\freq{\geq k}$ in the event-level DP and singleton setting with $(o(\sqrt{\min\{k, W/k\}}), 0.01)$-utility.
\end{corollary}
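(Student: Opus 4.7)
The plan is to derive the corollary by feeding the linear-query hardness result (Theorem~\ref{thm:linear-query-lb}) into the reduction of Lemma~\ref{lem:fixed-window-lb-polyk-singleton}. Both ingredients are already in hand; the only real work is to pick the parameters $d,m$ so that the two hypotheses of the reduction, $k \geq d+2$ and $W \geq 4dm+1$, are simultaneously satisfied while $\sqrt{m}$ is as large as $\Omega(\sqrt{\min\{k,W/k\}})$.

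Concretely, I would first set $m := \lfloor c \cdot \min\{k,\sqrt{W}\}\rfloor$ for a sufficiently small absolute constant $c>0$ (to be chosen at the end) and invoke Theorem~\ref{thm:linear-query-lb} with this value of $m$ to obtain a matrix $\bA \in \{0,1\}^{d\times m}$ with $d = O(m)$ for which there is no $(\eps,\delta)$-DP algorithm for $\bA$-linear queries with $(o(\sqrt{m}), 0.01)$-utility.

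Second, I would check the two hypotheses of Lemma~\ref{lem:fixed-window-lb-polyk-singleton}. Since $d \leq C_0 m \leq C_0 c \cdot \min\{k,\sqrt{W}\} \leq C_0 c \cdot k$, choosing $c$ small (relative to the absolute constant $C_0$ from $d = O(m)$) makes $d + 2 \leq k$. For the second hypothesis,
\[
4dm \;\leq\; 4C_0 m^2 \;\leq\; 4C_0 c^2 \cdot \min\{k^2,W\} \;\leq\; 4C_0 c^2 \cdot W,
\]
so for $c$ small enough we get $4dm+1 \leq W$. Applying Lemma~\ref{lem:fixed-window-lb-polyk-singleton} then yields that no $(\eps,\delta)$-DP algorithm for fixed-window $\freq{\geq k}$ in the event-level DP and singleton setting achieves $(o(\sqrt{m}), 0.01)$-utility.

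Finally, I would translate $\sqrt{m}$ back into the stated form. Under the assumption $W \geq k$, the inequality $\sqrt{W} \geq W/k$ holds, so $\min\{k,\sqrt{W}\} \geq \min\{k,W/k\}$. Hence $\sqrt{m} = \Omega(\sqrt{\min\{k,W/k\}})$, giving the corollary. The main obstacle---to the extent there is one---is simply balancing the two constraints: the quadratic constraint $4dm \leq W-1$ is the binding one and is precisely what forces the $\sqrt{W}$ cap (and hence, after using $W \geq k$, the $W/k$ cap) on $m$; the capping at $k$ comes for free from the linear constraint $d \leq k-2$. Everything else is bookkeeping.
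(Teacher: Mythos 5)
Your proposal is essentially the paper's proof: the paper also obtains this corollary by plugging Theorem~\ref{thm:linear-query-lb} into Lemma~\ref{lem:fixed-window-lb-polyk-singleton}, choosing $d, m = \Theta(\min\{k, W/k\})$ so that $4dm = \Theta\bigl(\min\{k, W/k\}^2\bigr) \leq \Theta(k \cdot W/k) = \Theta(W)$; your choice $m = \Theta(\min\{k,\sqrt{W}\})$ is just a slightly different (in fact marginally stronger) parameterization of the same argument. One small slip in your last step: the claim that $W \geq k$ implies $\sqrt{W} \geq W/k$ is false (it is equivalent to $k \geq \sqrt{W}$, e.g.\ $W=100$, $k=5$ violates it). The conclusion you need, $\min\{k,\sqrt{W}\} \geq \min\{k, W/k\}$, nevertheless holds unconditionally by a two-case check: if $k \leq \sqrt{W}$ then both sides equal $k$ (since then $W/k \geq \sqrt{W} \geq k$), and if $k > \sqrt{W}$ then the left side is $\sqrt{W} > W/k$, which is the right side. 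With that justification repaired, the proof goes through.
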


\subsection{Item-Level DP}

\subsubsection{Algorithms}

We start with a time-horizon reduction similar to the event-level DP case. However, for item-level DP, we cannot apply the parallel composition theorem (because item-level change can affect all the $O(T/W)$ subinstances) and instead have to apply basic/advanced composition, resulting in a reduction of $(T/W)^{O(1)}$ privacy budget to each subinstance.

\begin{lemma} 
Let $k, T, W \in \N$ be any positive integers.
Suppose that there exists an $(\eps, \delta)$-DP algorithm for fixed-window $\freq{\geq k}$ in the item-level DP and bundle setting with time horizon $2W$ with $(\alpha(\eps, \delta, \beta), \beta)$-utility. Then,
\begin{itemize}
\item There exists an $(\eps, \delta)$-DP algorithm for fixed-window $\freq{\geq k}$ in the item-level DP and bundle setting with time horizon $T$ with $\left(\alpha(\eps/(2T/W), \delta/(2T/W), \beta), \beta\right)$-utility.
\item There exists an $(\eps, \delta)$-DP algorithm for fixed-window $\freq{\geq k}$ in the item-level DP and bundle setting with time horizon $T$ with $\left(\alpha\left(\frac{\eps}{2\sqrt{4T/W \cdot \ln(2/\delta)}}, \frac{\delta}{4T/W}, \beta\right), \beta\right)$-utility.
\end{itemize}
\end{lemma}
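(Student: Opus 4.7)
The plan is to mimic the proof of \Cref{lem:horizon-reduction-fixed-window} but swap parallel composition for basic or advanced composition. First I would partition the timeline into overlapping blocks of length $2W$: for each $j \in [\lceil T/W \rceil]$, run an independent copy $\cA_j$ of the base algorithm (which handles time horizon $2W$) on the substream $S^{(j-1)W+1}, \dots, S^{(j+1)W}$. Every fixed-window query $[i, i+W-1]$ is contained in at least one such block, so it can be answered by the corresponding $\cA_j$, and the utility bound $\alpha(\eps', \delta', \beta)$ would transfer unchanged to the global algorithm, where $(\eps', \delta')$ are the per-block privacy parameters.

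The only place the argument departs from the event-level version is the privacy analysis, and this is where the (mild) main obstacle lies. In the event-level proof, a single event change touches at most one block within each of the odd-indexed and even-indexed families, enabling parallel composition on each family and basic composition between the two. Under item-level DP a single item's occurrences can span the \emph{entire} time horizon, so all $\lceil T/W \rceil \le 2T/W$ subinstances may simultaneously be affected by one neighboring change. Parallel composition therefore becomes unavailable, and we must absorb the full composition cost across all blocks.

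For the first conclusion I would apply basic composition (\Cref{thm:basic-comp}) with $k = 2T/W$ runs: setting each $\cA_j$ to be $(\eps/(2T/W), \delta/(2T/W))$-DP yields overall $(\eps, \delta)$-DP, and plugging these parameters into the assumed utility expression gives precisely $\alpha(\eps/(2T/W), \delta/(2T/W), \beta)$. For the second conclusion I would invoke advanced composition (\Cref{thm:adv-comp}) with the same $k$; matching its statement requires each $\cA_j$ to be $\left(\frac{\eps}{2\sqrt{2k\ln(2/\delta)}}, \frac{\delta}{2k}\right)$-DP, which with $k = 2T/W$ becomes $\left(\frac{\eps}{2\sqrt{4T/W \cdot \ln(2/\delta)}}, \frac{\delta}{4T/W}\right)$-DP, and the resulting utility is exactly the second expression in the lemma statement.

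Beyond this bookkeeping, the only thing to be careful about is that the number of blocks is $\lceil T/W \rceil$, which I would upper bound by $2T/W$ (valid since $T \geq W$) to obtain the cleaner expressions stated. Utility transfer is immediate, so no further work is required.
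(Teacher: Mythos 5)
Your proposal is correct and matches the paper's proof essentially verbatim: reuse the block decomposition from \Cref{lem:horizon-reduction-fixed-window}, observe that parallel composition is no longer available under item-level DP, and instead apply basic (resp.\ advanced) composition over the $\lceil T/W \rceil \le 2T/W$ runs with the stated per-run privacy parameters. No gaps.
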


\begin{proof}
We use exactly the same algorithm as in the proof of \Cref{lem:horizon-reduction-fixed-window}, except that this time we use basic or advanced composition theorems over the $\lceil T/W \rceil \leq 2T/W$ runs of the $(\eps', \delta')$-DP algorithm. This ensures that the entire algorithm is $(\eps, \delta)$-DP as long as we pick $\eps' = \eps/(2T/W), \delta' = \delta/(2T/W)$ or $\eps' = \frac{\eps}{2\sqrt{4T/W \cdot \ln(2/\delta)}}, \delta' = \frac{\delta}{4T/W}$.
\end{proof}

Combining the above with \Cref{lem:fixed-window-algo}, we arrive at the following:

\begin{lemma} 
Let $k, T, W \in \N$ be any positive integers.
Suppose that there exists an $(\eps, \delta)$-DP algorithm for 1d-range query with $(\alpha(M, \eps, \delta, \beta), \beta)$-utility.  Then, there exists an
\begin{itemize}
\item $(\eps, \delta)$-DP algorithm for fixed-window $\freq{\geq k}$ in the item-level DP and bundle setting with time horizon $T$ with $\left(2W+1,2\cdot\alpha(2W+1, \eps/(8kT/W), \delta/(16kT/W), \beta/2), \beta\right)$-utility.
\item $(\eps, \delta)$-DP algorithm for fixed-window $\freq{\geq k}$ in the item-level DP and bundle setting with time horizon $T$ with $\left(2\cdot \alpha\left(2W+1,\frac{\eps}{8k\sqrt{4T/W \cdot \ln(2/\delta)}}, \frac{\delta}{16kT/W}, \beta/2\right), \beta\right)$-utility.
\end{itemize}
\end{lemma}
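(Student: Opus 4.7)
The plan is to simply chain the two preceding lemmas. First, I would apply Lemma~\ref{lem:fixed-window-algo} to the given 1d-range query algorithm, which yields an item-level DP, bundle-setting algorithm for fixed-window $\freq{\geq k}$ on the restricted horizon $T = 2W$. Then I would lift this base algorithm to an arbitrary horizon $T$ by invoking the immediately preceding time-horizon reduction lemma, once with basic composition (giving the first bullet) and once with advanced composition (giving the second bullet).

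More concretely: Lemma~\ref{lem:fixed-window-algo} guarantees that for \emph{any} privacy budget $(\eps', \delta')$, the hypothesized 1d-range query algorithm yields an $(\eps', \delta')$-DP, item-level, bundle algorithm on horizon $2W$ with utility $\bigl(2\,\alpha(2W+1,\; \eps'/(4k),\; \delta'/(8k),\; \beta/2),\; \beta\bigr)$. The preceding time-horizon reduction lemma then runs this base algorithm on the $\lceil T/W\rceil \le 2T/W$ overlapping length-$2W$ sub-windows and composes. For basic composition this demands $\eps' = \eps/(2T/W)$ and $\delta' = \delta/(2T/W)$; substituting these into $\eps'/(4k)$ and $\delta'/(8k)$ gives exactly $\eps/(8kT/W)$ and $\delta/(16kT/W)$, matching bullet one. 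For advanced composition it demands $\eps' = \eps/\bigl(2\sqrt{4T/W\cdot \ln(2/\delta)}\bigr)$ and $\delta' = \delta/(4T/W)$, and the same substitution produces the parameters in bullet two.

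The whole argument is just parameter-bookkeeping. The only conceptual checkpoint is to confirm that the time-horizon reduction preserves \emph{item-level} DP rather than silently degrading to event-level DP; this is immediate because the reduction's proof only invokes basic/advanced composition, which are agnostic to the choice of neighboring relation, and the base algorithm from Lemma~\ref{lem:fixed-window-algo} is already stated to be item-level DP and bundle. There is no real obstacle — I would expect the proof to fit in a few lines, essentially an application of the two prior lemmas in sequence followed by simplification of the resulting privacy parameters.
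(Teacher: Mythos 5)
Your proposal is correct and is essentially identical to the paper's own proof, which likewise just instantiates Lemma~\ref{lem:fixed-window-algo} at horizon $2W$ and feeds it into the preceding item-level time-horizon reduction with basic resp.\ advanced composition, then substitutes parameters. (The only quibble is trivial bookkeeping: in the second bullet the substitution actually gives $\delta/(32kT/W)$ rather than $\delta/(16kT/W)$ inside $\alpha$, a constant-factor slack already present in the paper's statement.)
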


Finally, plugging in the algorithmic bound for 1d-range query (\Cref{thm:1d-pure}) yields:

\begin{corollary} \label{cor:fixed-window-alg-item-pure}
For any $T, W, k \in \N$, and $\eps > 0$, there is an $\eps$-DP algorithm for fixed-window $\freq{\geq k}$ in the item-level DP and bundle setting with $(O(k \cdot (T/W) \cdot \log^{1.5} W \log(1/\beta)/\eps, \beta)$-utility.
\end{corollary}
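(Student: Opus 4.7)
The plan is a direct instantiation of the preceding lemma using \Cref{thm:1d-pure}. Since the target is pure $\eps$-DP, I would set $\delta = 0$ and apply the basic-composition branch (the first bullet of the lemma), which requires only a pure-DP subroutine for 1d-range query. \Cref{thm:1d-pure} supplies exactly such a subroutine, so the proof reduces to plugging in parameters.

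Concretely, instantiating \Cref{thm:1d-pure} with privacy budget $\eps' := \eps/(8k(T/W))$, domain size $M := 2W+1$, and failure probability $\beta/2$ yields an $\eps'$-DP algorithm for 1d-range query with utility
\[
\alpha(2W+1, \eps', 0, \beta/2) \;=\; O\!\left(\frac{\log^{1.5}(2W+1)\cdot \log(1/\beta)}{\eps'}\right) \;=\; O\!\left(\frac{k\cdot (T/W)\cdot \log^{1.5} W\cdot \log(1/\beta)}{\eps}\right),
\]
where I have used $\log(2W+1) = \Theta(\log W)$ for $W \geq 2$ (the $W = 1$ case being degenerate). Feeding this subroutine into the basic-composition branch of the preceding lemma produces an $\eps$-DP algorithm for fixed-window $\freq{\geq k}$ in the item-level bundle setting whose error is at most twice this expression, which matches the claimed bound after absorbing the constant factor into $O(\cdot)$.

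I expect no substantive obstacle here; the argument is purely a substitution of parameters. The only points worth checking are that $\delta = 0$ is admissible in \Cref{thm:1d-pure} (it is, since the theorem is stated for pure DP) and that one must select the basic-composition rather than the advanced-composition branch of the preceding lemma, since the latter is tailored to $\delta > 0$ and would introduce a $\sqrt{\log(1/\delta)}$ factor that is ill-defined at $\delta = 0$.
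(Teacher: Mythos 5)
Your proposal is correct and matches the paper's own derivation: the corollary is obtained exactly by plugging the pure-DP 1d-range-query algorithm of \Cref{thm:1d-pure} (with privacy parameter $\eps/(8kT/W)$, domain size $2W+1$, and failure probability $\beta/2$) into the basic-composition bullet of the preceding lemma. Your added remarks about taking $\delta=0$ and avoiding the advanced-composition branch are consistent with what the paper implicitly does.
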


\begin{corollary} \label{cor:fixed-window-alg-item-apx}
For any $k \in \N$ and $\eps, \delta > 0$, there is an $(\eps, \delta)$-DP algorithm for fixed-window $\freq{\geq k}$ in the item-level DP and bundle setting with $(O(k \cdot \sqrt{T/W \cdot \log(1/\delta)}  \log^{1.5} W \log(1/\beta)/\eps, \beta)$-utility.
\end{corollary}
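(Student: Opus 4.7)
The plan is a direct corollary of the preceding (unnamed) lemma combined with the pure-DP bound for 1d-range query from \Cref{thm:1d-pure}. Specifically, the second bullet of the lemma just above reduces an $(\eps,\delta)$-DP fixed-window $\freq{\geq k}$ algorithm in the item-level DP and bundle setting (over time horizon $T$) to an $(\eps',\delta')$-DP algorithm for 1d-range query, where
\[
\eps' \;=\; \frac{\eps}{8k\sqrt{4T/W \cdot \ln(2/\delta)}}, \qquad \delta' \;=\; \frac{\delta}{16kT/W},
\]
with error $2\alpha(2W+1,\eps',\delta',\beta/2)$, where $\alpha$ is the utility of the underlying 1d-range query algorithm.

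First, I would invoke \Cref{thm:1d-pure}, which yields a \emph{pure}-DP (hence approximate-DP) algorithm for 1d-range query with
\[
\alpha(M, \eps', \delta', \beta/2) \;=\; O\!\left(\frac{\log^{1.5} M \cdot \log(1/\beta)}{\eps'}\right).
\]
Note that since this algorithm is already $\eps'$-DP, it is automatically $(\eps',\delta')$-DP for any $\delta' \geq 0$, so the $\delta'$ parameter plays no role here. Substituting $M = 2W+1$ and our choice of $\eps'$ into this expression gives
\[
\alpha(2W+1,\eps',\delta',\beta/2) \;=\; O\!\left(\frac{k \sqrt{(T/W)\log(1/\delta)} \cdot \log^{1.5} W \cdot \log(1/\beta)}{\eps}\right),
\]
where the $\log^{1.5}(2W+1)$ collapses to $\log^{1.5} W$ (up to constants) and the $\sqrt{4\ln(2/\delta)}$ absorbs into $\sqrt{\log(1/\delta)}$.

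Multiplying by the leading factor of $2$ from the lemma then yields exactly the bound stated in the corollary. The argument is purely a plug-in computation; the only ``obstacle'' is keeping careful track of the constants and the $\sqrt{(T/W)\log(1/\delta)}$ factor introduced by advanced composition, but these are all routine. In particular, using the first bullet of the previous lemma (basic composition) instead would yield a $T/W$ factor rather than $\sqrt{(T/W)\log(1/\delta)}$, giving the weaker bound in \Cref{cor:fixed-window-alg-item-pure}; the improvement here comes solely from choosing the advanced composition branch, which is valid since we allow $\delta > 0$.
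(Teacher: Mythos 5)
Your proposal is correct and matches the paper's own route: the paper likewise obtains this corollary by plugging the pure-DP 1d-range query bound of \Cref{thm:1d-pure} into the advanced-composition (second) bullet of the preceding lemma, with the same substitution $M = 2W+1$ and $\eps' = \eps/(8k\sqrt{4T/W\cdot\ln(2/\delta)})$. Your observations that the $\delta'$ parameter is vacuous for a pure-DP subroutine and that the basic-composition branch instead yields \Cref{cor:fixed-window-alg-item-pure} are both consistent with the paper.
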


\subsubsection{Lower Bounds}

First, we note that the lower bounds based on $k$ from the event-level DP case (\Cref{cor:fixed-window-lb-polyk-concrete} and \Cref{cor:fixed-window-lb-polyk-singleton-concrete}) immediately translate to this case.

Next, we prove $T^{\Omega(1)}$ lower bounds based on reductions from the 1-way marginal problem. We again start with the simpler bundle setting.

\begin{lemma} \label{lem:lb-fixed-window-event-batch}
Let $T, W, d$ be positive integers such that $T \geq W \cdot d$.
Then, if there exists an $(\eps, \delta)$-DP algorithm for fixed-window $\freq{\geq k}$ in the item-level DP and bundle setting with fixed window $W$ and time horizon $T$ with $(\alpha, \beta)$-utility, then there exists an $(\eps, \delta)$-DP algorithm for the 1-way marginal problem with the same utility.
\end{lemma}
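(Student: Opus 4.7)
The plan is to embed each input vector of the 1-way marginal instance into the occurrence pattern of a distinct item, using $d$ disjoint length-$W$ windows to read off the $d$ coordinates. Concretely, given $x^1, \dots, x^n \in \{0,1\}^d$, I would set $U = n$ and construct the dataset $\ds = (S^1, \dots, S^T)$ so that the only nonzero entries are $S^{iW}_j := x^j_i + (k-1)$ for $i \in [d]$ and $j \in [n]$; every other $S^t_u$ is zero. The assumption $T \geq W d$ ensures the time steps $W, 2W, \dots, dW$ all fit within the time horizon.

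For each $i \in [d]$, consider the fixed-window query on $[(i-1)W+1,\, iW]$. Since the only nonzero slice of $\ds$ inside this window is $S^{iW}$, we have $S^{[(i-1)W+1,\, iW]}_j = x^j_i + (k-1)$, which is $\geq k$ if and only if $x^j_i = 1$. Therefore
\[
\freq{\geq k}\!\left(S^{[(i-1)W+1,\, iW]}\right) \;=\; \sum_{j \in [n]} \ind[x^j_i = 1] \;=\; \bbx_i.
\]
Running the assumed fixed-window algorithm and reading off these $d$ particular window queries thus gives estimates of all $\bbx_i$, preserving the $(\alpha, \beta)$-utility guarantee because each 1-way marginal query $\bbx_i$ is answered using exactly one of the algorithm's window-query outputs.

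For privacy, observe that adding or removing a single user's vector $x^j$ in the 1-way marginal instance only affects the occurrences of item $j$ in the constructed dataset (at the $d$ time steps $W, 2W, \dots, dW$). By the definition of the item-level neighboring relation, this is a single neighboring change. Hence the $(\eps,\delta)$-DP guarantee of the fixed-window algorithm under item-level DP transfers directly to an $(\eps,\delta)$-DP guarantee for 1-way marginal under its neighboring relation.

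There is no real obstacle here; the construction is essentially the ``copying'' idea sketched in Section~\ref{sec:results}, adapted to fixed windows by spacing the encoded coordinates $W$ apart so each coordinate is isolated in its own length-$W$ window. The constant shift by $k-1$ is the only adjustment needed to reduce $\freq{\geq k}$ to a threshold test; it requires the bundle setting since $\|S^{iW}\|_1$ can be as large as $n k$, which is why this reduction does not immediately yield the singleton analogue.
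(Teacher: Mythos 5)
Your reduction is correct and takes essentially the same approach as the paper: both encode coordinate $i$ of user $j$ into item $j$'s occurrences within the $i$th disjoint length-$W$ window, chosen so that the window count reaches $k$ iff $x^j_i = 1$, giving $\bbx_i = \freq{\geq k}(S^{[(i-1)W+1,\,iW]})$, and both observe that changing one user's vector changes only one item, which is exactly an item-level neighboring change. The only cosmetic difference is that the paper places $k \cdot x^j_i$ at the first time step of each window whereas you place $x^j_i + (k-1)$ at the last; both work identically.
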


\begin{proof}
Let $U = n$, Given an instance $x^1, \dots, x^n \in \{0, 1\}^d$ of the 1-way marginal problem, we construct the instance of fixed-window $\freq{\geq k}$ as follows: let
\begin{align*}
S^{i}_{j} &= 
\begin{cases}
k \cdot x^j_{(i-1)/W + 1} & \text{ if } W \mid (i - 1) \text{ and } i \leq W \cdot d, \\
0 & \text{ otherwise.}
\end{cases}
\end{align*}
for all $i \in [T], j \in [n]$.

Clearly, changing a single $x^j$ leads to a change in only a single item. Furthermore, it is not hard to verify that $\bbx_\ell = \freq{\geq k}(S^{[W(\ell-1)+1, W\ell]})$ for all $\ell \in [d]$. Thus, by running the $(\eps, \delta)$-DP algorithm for fixed-window $\freq{\geq k}$ on the above instance, we also get an estimate for the $1$-way marginal with the same utility.
\end{proof}

Plugging \Cref{lem:lb-fixed-window-event-batch} into the known lower bounds for the 1-way marginal problem (\Cref{thm:marginal-pure-lb} and \Cref{thm:marginal-apx-lb}) yields:
\begin{corollary}
Let $T \geq W$ be any positive integers and $\eps > 0$. Then, there is no $\eps$-DP algorithm for time-window $\freq{\geq k}$ in the item-level DP and bundle setting with $\left(o\left((T/W)/\eps\right), 0.01\right)$-utility.
\end{corollary}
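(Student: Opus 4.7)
The plan is to combine Lemma~\ref{lem:lb-fixed-window-event-batch} with the pure-DP lower bound for 1-way marginals (Theorem~\ref{thm:marginal-pure-lb}) in a direct way. I remark first that the statement ``time-window $\freq{\geq k}$'' in the corollary is at least as strong as the fixed-window version, because any algorithm that answers all time-window queries automatically answers fixed-window queries by restricting to $t_2 = t_1 + W - 1$; so it suffices to prove the lower bound for the fixed-window problem.

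Concretely, set $d := \lfloor T/W \rfloor$, which satisfies the hypothesis $T \geq W \cdot d$ of Lemma~\ref{lem:lb-fixed-window-event-batch}. Choose the number of 1-way-marginal users to be $n \geq d/\eps$ (this is harmless in the reduction, since the lemma lets $U = n$ be arbitrarily large, and $n$ corresponds to the number of items $U$ in the constructed fixed-window instance). Suppose for contradiction that there were an $\eps$-DP algorithm $\cM$ for fixed-window $\freq{\geq k}$ (item-level, bundle) with $(o((T/W)/\eps), 0.01)$-utility on time horizon $T$. By Lemma~\ref{lem:lb-fixed-window-event-batch}, $\cM$ yields an $\eps$-DP algorithm for the $1$-way marginal problem on $\{0,1\}^d$ with the same utility, i.e.\ error $o(d/\eps)$ with success probability $\geq 0.99$.

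This contradicts Theorem~\ref{thm:marginal-pure-lb}, which states that no $\eps$-DP algorithm for $1$-way marginals achieves $(o(\min\{n, d/\eps\}), 0.01)$-utility; with our choice $n \geq d/\eps$, the minimum equals $d/\eps = \Omega((T/W)/\eps)$. Transferring the bound back through the reduction yields the claimed $\Omega((T/W)/\eps)$ lower bound for fixed-window, and hence for time-window, $\freq{\geq k}$.

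There is no real technical obstacle here: the construction of the hard instance has already been done inside Lemma~\ref{lem:lb-fixed-window-event-batch}, and the 1-way marginal lower bound is black-boxed. The only thing to be a little careful about is matching the parameters so that the min in Theorem~\ref{thm:marginal-pure-lb} is attained by the $d/\eps$ term (by allowing $n$ large enough, which the reduction accommodates) and in observing the ``time-window is harder than fixed-window'' step so that the statement as written is obtained rather than merely its fixed-window analogue.
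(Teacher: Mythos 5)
Your proposal is correct and matches the paper's (implicit) proof: the paper obtains this corollary exactly by plugging Lemma~\ref{lem:lb-fixed-window-event-batch} with $d = \Theta(T/W)$ into Theorem~\ref{thm:marginal-pure-lb}, with $n$ taken large enough that the $\min\{n, d/\eps\}$ term is $d/\eps$. Your extra remark that a time-window algorithm yields a fixed-window one (so the fixed-window hardness transfers) is the same observation the paper relies on in stating the corollary for time-window queries.
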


\begin{corollary}
Let $T \geq W$ be any positive integers and $\eps, \delta > 0$ be such that $\delta \in (2^{-\Omega(T)}, 1/T^{1+\Omega(1)})$. Then, there is no $(\eps, \delta)$-DP algorithm for time-window $\freq{\geq k}$ in the item-level DP and bundle setting with $(o(\sqrt{(T/W)\log(1/\delta)}/\eps), 0.01)$-utility.
\end{corollary}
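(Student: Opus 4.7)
The target bound $\sqrt{(T/W)\log(1/\delta)}/\eps$ is precisely the approximate-DP lower bound for the $1$-way marginal problem in dimension $d = \Theta(T/W)$, so the plan is simply to compose the reduction of \Cref{lem:lb-fixed-window-event-batch} with \Cref{thm:marginal-apx-lb}, in direct analogy with how the preceding pure-DP corollary plugs \Cref{thm:marginal-pure-lb} into the same reduction.

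Concretely, set $d := \lfloor T/W \rfloor$ (which satisfies the hypothesis $T \geq W \cdot d$ of \Cref{lem:lb-fixed-window-event-batch}) and let the number of marginal users be $n := T$. Then \Cref{thm:marginal-apx-lb} applies whenever $\delta \in (2^{-\Omega(n)}, 1/n^{1+\Omega(1)}) = (2^{-\Omega(T)}, 1/T^{1+\Omega(1)})$, and it asserts that any $(\eps,\delta)$-DP algorithm for $1$-way marginals must incur error at least $\Omega(\min\{n, \sqrt{d \log(1/\delta)}/\eps\})$. Under the standing assumptions $\eps \in (0,1]$, $T \geq W$, and $\log(1/\delta) = O(T)$, the quantity $n = T$ dominates $\sqrt{(T/W)\log(1/\delta)}/\eps$, so the $\min$ is realized by the second term and we get an $\Omega(\sqrt{(T/W)\log(1/\delta)}/\eps)$ lower bound on the $1$-way marginal error.

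Contrapositively, \Cref{lem:lb-fixed-window-event-batch} then rules out any $(\eps,\delta)$-DP algorithm for fixed-window $\freq{\geq k}$ in the item-level bundle setting that achieves better utility than this; and since any time-window algorithm can answer an arbitrary fixed-window query of length $W$ by simply restricting its outputs to windows of that length, the same lower bound immediately transfers to time-window $\freq{\geq k}$. The proof is entirely routine chaining of prior lemmas, so I do not anticipate any significant obstacle; the only small sanity check is verifying that $n = T$ is large enough to ensure that the $\min$ in \Cref{thm:marginal-apx-lb} is realized by the $\sqrt{d\log(1/\delta)}/\eps$ term, which follows from the mild parameter assumptions already in place (\Cref{sec:parameter-assumption}).
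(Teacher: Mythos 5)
Your proposal is correct and follows essentially the same route as the paper: the corollary is obtained precisely by plugging the reduction of \Cref{lem:lb-fixed-window-event-batch} (with $d = \Theta(T/W)$ and $n = \Theta(T)$, which is what makes the hypothesis $\delta \in (2^{-\Omega(T)}, 1/T^{1+\Omega(1)})$ appear) into the approximate-DP $1$-way marginal lower bound \Cref{thm:marginal-apx-lb}, exactly as you do. Your two extra remarks — that the bound transfers from fixed-window to the (harder) time-window problem, and that the $\min$ in \Cref{thm:marginal-apx-lb} is realized by $\sqrt{d\log(1/\delta)}/\eps$ under the standing parameter assumptions — are the same implicit bookkeeping the paper leaves unstated, so there is no gap.
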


In the singleton case, we can use almost the same reduction as before, except that we now have to ``spread out'' the contribution across time step. This also means that we require an additional condition that the window is sufficiently large (i.e., $W \geq kn$) in the lemma below.

\begin{lemma} \label{lem:lb-fixed-window-item-singleton}
Let $T, W, d, k, n$ be positive integers such that $T \geq W \cdot d$ and $W \geq k \cdot n$.
Then, if there exists an $(\eps, \delta)$-DP algorithm for fixed-window $\freq{\geq k}$ in the item-level DP and singleton setting with fixed window $W$ and time horizon $T$ with $(\alpha, \beta)$-utility, then there exists an $(\eps, \delta)$-DP algorithm for 1-way marginal with the same utility.
\end{lemma}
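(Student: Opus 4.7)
The plan is to adapt the bundle reduction from \Cref{lem:lb-fixed-window-event-batch} to the singleton setting by ``spreading out'' the $k$ duplicate occurrences of each item across distinct time steps inside the relevant window. In the bundle proof, for each block $\ell \in [d]$ we set $S^{W(\ell-1)+1}_j = k$ whenever $x^j_\ell = 1$, which is illegal under the singleton constraint. The fix is to place those $k$ occurrences in $k$ different time steps within $[W(\ell-1)+1, W\ell]$, and to use pairwise disjoint slots across different $j \in [n]$ so that no two items ever share a time step. The hypothesis $W \geq k n$ is exactly what guarantees enough slots.

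Concretely, I would take $U = n$ and, given $x^1, \dots, x^n \in \{0,1\}^d$, define, for each $\ell \in [d]$, $j \in [n]$ with $x^j_\ell = 1$, and $r \in [k]$,
\[
S^{W(\ell-1) + (j-1)k + r}_j \ := \ 1,
\]
with $S^t_{j'} = 0$ for all remaining $(t, j')$. Since $(j-1)k + r$ takes distinct values in $[1, kn] \subseteq [1, W]$ as $(j, r)$ varies, each time step contains at most one item, so the instance is singleton, and since $W \cdot d \leq T$, it fits within the time horizon.

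Next I would verify the key identity. By construction, for each $\ell \in [d]$ and $j \in [n]$ we have $S^{[W(\ell-1)+1, W\ell]}_j = k \cdot x^j_\ell$, so
\[
\freq{\geq k}\bigl(S^{[W(\ell-1)+1, W\ell]}\bigr) \ = \ \sum_{j \in [n]} \ind[x^j_\ell = 1] \ = \ \bbx_\ell.
\]
Thus running the assumed $(\eps,\delta)$-DP fixed-window $\freq{\geq k}$ algorithm and reading off the answers on the $d$ windows $[W(\ell-1)+1, W\ell]$ recovers all entries of $\bbx$ with $(\alpha,\beta)$-utility.

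For privacy, altering a single $x^j$ only alters coordinates $S^{\cdot}_j$, i.e., the occurrence pattern of the single item $j$, which is one item-level neighboring change; hence $(\eps,\delta)$-DP is inherited verbatim. The main (and only real) obstacle is packing: making the $k$ duplicates per $(j,\ell)$ fit as singletons inside one window for all $j$ simultaneously, which the $W \geq kn$ hypothesis handles cleanly via the index formula $(j-1)k + r$.
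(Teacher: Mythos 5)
Your construction is correct and is essentially the paper's own proof: the paper likewise spreads the $k$ occurrences of item $j$ over $k$ dedicated time slots (at offsets determined by $k(j-1)$ modulo $W$) within each length-$W$ block, uses $W \geq kn$ to keep the slots disjoint across items, and reads off $\bbx_\ell = \freq{\geq k}(S^{[W(\ell-1)+1, W\ell]})$ with privacy inherited since changing $x^j$ alters only item $j$'s occurrence pattern. Your explicit offset formula $(j-1)k + r$ is just a cleaner rewriting of the same placement.
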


\begin{proof}
Let $U = n$.  Given an instance $x^1, \dots, x^n \in \{0, 1\}^d$ of the 1-way marginal problem, we construct the instance of fixed-window $\freq{\geq k}$ as follows: let
\begin{align*}
S^{i}_{j} &= 
\begin{cases}
x^j_{\lfloor (i - 1) / W\rfloor + 1} & \text{ if } (i - k(j-1)) \mod W < k \text{ and } i \leq W \cdot d, \\
0 & \text{ otherwise,}
\end{cases}
\end{align*}
for all $i \in [T], j \in [n]$.

It is not hard to verify that $\|S^i\|_1 \leq 1$ for all $i \in [T]$, i.e., that this is a valid instance for the singleton case. Furthermore, changing a single $x^j$ leads to a change in only a single item and $\bbx_\ell = \freq{\geq k}(S^{[W(\ell-1)+1, W\ell]})$ for all $\ell \in [d]$. Thus, by running the $(\eps, \delta)$-DP algorithm for fixed-window $\freq{\geq k}$ on the above instance, we also get an estimate for the $1$-way marginal with the same utility.
\end{proof}

Similarly, plugging \Cref{lem:lb-fixed-window-item-singleton} to the known lower bounds for the 1-way marginal problem (\Cref{thm:marginal-pure-lb}  and \Cref{thm:marginal-apx-lb} with $n = \Theta(W/k), d = \Theta(T/W)$) yields:
\begin{corollary} \label{cor:lb-fixed-window-item-pure}
Let $T \geq W \geq k$ be any positive integers and $\eps > 0$. Then, there is no $\eps$-DP algorithm for fixed-window $\freq{\geq k}$ in the item-level DP and singleton setting with $\left(o\left(\min\{W/k, (T/W)/\eps \}\right), 0.01\right)$-utility.
\end{corollary}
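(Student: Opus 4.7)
The plan is to derive this lower bound by combining the reduction of \Cref{lem:lb-fixed-window-item-singleton} (from $1$-way marginal to fixed-window $\freq{\geq k}$ in the item-level singleton setting) with the pure-DP lower bound for $1$-way marginal stated in \Cref{thm:marginal-pure-lb}. The argument is a direct parameter-substitution contrapositive: assume the claimed algorithm exists, feed its parameters into the reduction, and deduce a $1$-way marginal algorithm whose accuracy contradicts \Cref{thm:marginal-pure-lb}.

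More concretely, first I would choose $n := \lfloor W/k \rfloor$ and $d := \lfloor T/W \rfloor$, which (given the assumption $T \geq W \geq k$) are positive integers and satisfy the hypotheses $T \geq W \cdot d$ and $W \geq k \cdot n$ required by \Cref{lem:lb-fixed-window-item-singleton}. Next, assume for contradiction that there is an $\eps$-DP algorithm in the item-level DP, singleton setting for fixed-window $\freq{\geq k}$ on time horizon $T$ with window $W$ achieving $(\alpha, 0.01)$-utility for some $\alpha = o\!\left(\min\{W/k, (T/W)/\eps\}\right)$. Invoking \Cref{lem:lb-fixed-window-item-singleton} with the chosen $n, d$ yields an $\eps$-DP algorithm for $1$-way marginal on $n$ users and dimension $d$ with the same $(\alpha, 0.01)$-utility.

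Then I would apply \Cref{thm:marginal-pure-lb}, which forbids any $\eps$-DP $1$-way marginal algorithm from attaining $(o(\min\{n, d/\eps\}), 0.01)$-utility. Since $\min\{n, d/\eps\} = \Theta(\min\{W/k, (T/W)/\eps\})$ by our choice of $n, d$, the supposed algorithm would beat this threshold, a contradiction. This establishes the stated lower bound.

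I do not anticipate a real obstacle: both \Cref{lem:lb-fixed-window-item-singleton} and \Cref{thm:marginal-pure-lb} are already in the form needed, and the only step requiring a small amount of care is checking that the parameter choices simultaneously satisfy the divisibility/size conditions $T \geq W d$ and $W \geq k n$ while still making $\min\{n, d/\eps\}$ match the target bound up to constants. Rounding $W/k$ and $T/W$ down preserves both the constraints and the asymptotic bound, so the reduction goes through cleanly.
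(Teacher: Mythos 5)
Your proposal is correct and follows exactly the paper's route: the paper also obtains \Cref{cor:lb-fixed-window-item-pure} by plugging \Cref{lem:lb-fixed-window-item-singleton} with $n = \Theta(W/k)$ and $d = \Theta(T/W)$ into the pure-DP $1$-way marginal lower bound of \Cref{thm:marginal-pure-lb}. Your extra care with the floor choices and the hypotheses $T \geq Wd$, $W \geq kn$ is exactly the (routine) bookkeeping the paper leaves implicit.
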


\begin{corollary} \label{cor:lb-fixed-window-item-apx}
Let $T \geq W \geq k$ be any positive integers and $\eps, \delta > 0$ be such that $\delta \in (2^{-\Omega(W/k)}, 1/(W/k)^{1+\Omega(1)})$. Then, there is no $(\eps, \delta)$-DP algorithm for fixed-window $\freq{\geq k}$ in the item-level DP and bundle setting with $\left(o\left(\min\left\{W/k,\frac{\sqrt{(T/W) \log(1/\delta)}}{\eps}\right\}\right), 0.01\right)$-utility.
\end{corollary}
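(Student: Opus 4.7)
The plan is to chain Lemma~\ref{lem:lb-fixed-window-item-singleton} with the approximate-DP lower bound for 1-way marginals (Theorem~\ref{thm:marginal-apx-lb}), analogous to the way Corollary~\ref{cor:lb-fixed-window-item-pure} chains Lemma~\ref{lem:lb-fixed-window-item-singleton} with the pure-DP counterpart Theorem~\ref{thm:marginal-pure-lb}. The only real content is picking the right parameters $(n, d)$ in the marginal problem so that both sides of the $\min$ in the corollary's utility bound come out of Theorem~\ref{thm:marginal-apx-lb}.

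Concretely, I set $n := \lfloor W/k \rfloor$ and $d := \lfloor T/W \rfloor$. The hypotheses of Lemma~\ref{lem:lb-fixed-window-item-singleton}, namely $T \geq W\cdot d$ and $W \geq k\cdot n$, are then immediate from the definitions of $n$ and $d$, using the assumption $T \geq W \geq k$ from the corollary. So any purported $(\eps,\delta)$-DP fixed-window $\freq{\geq k}$ algorithm in the item-level singleton setting with $(\alpha, 0.01)$-utility yields an $(\eps,\delta)$-DP algorithm for the $1$-way marginal problem on $n$ users in $\{0,1\}^d$ with the same utility.

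Next, I invoke Theorem~\ref{thm:marginal-apx-lb} with this $(n, d)$. The theorem's $\delta$-range condition $\delta \in (2^{-\Omega(n)}, 1/n^{1+\Omega(1)})$ becomes exactly the corollary's hypothesis $\delta \in (2^{-\Omega(W/k)}, 1/(W/k)^{1+\Omega(1)})$ after substitution. The theorem then forbids $(\alpha, 0.01)$-utility with $\alpha = o(\min\{n, \sqrt{d \log(1/\delta)}/\eps\}) = o(\min\{W/k, \sqrt{(T/W)\log(1/\delta)}/\eps\})$, giving the claimed lower bound in the singleton setting.

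Finally, I note that a lower bound in the singleton setting transfers for free to the bundle setting that appears in the corollary's statement: singleton inputs are a special case of bundle inputs, so any $(\eps,\delta)$-DP algorithm that succeeds on all bundle inputs with utility $\alpha$ in particular succeeds on singleton inputs with the same utility, and the contrapositive gives the bundle lower bound. There is no real obstacle here; the work was all absorbed into Lemma~\ref{lem:lb-fixed-window-item-singleton}, and the only thing to be slightly careful about is that the parameter choices $n = \Theta(W/k)$ and $d = \Theta(T/W)$ simultaneously match the feasibility constraints of the lemma and produce both terms inside the $\min$ of Theorem~\ref{thm:marginal-apx-lb}.
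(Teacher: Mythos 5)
Your proposal matches the paper's proof exactly: the paper also obtains this corollary by plugging Lemma~\ref{lem:lb-fixed-window-item-singleton} into Theorem~\ref{thm:marginal-apx-lb} with $n = \Theta(W/k)$ and $d = \Theta(T/W)$, which yields both terms of the $\min$ and the stated range of $\delta$. Your extra remark that the singleton hard instances also serve as bundle instances (so the lower bound applies as stated) is correct and consistent with how the paper treats this.
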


Again, we remark that achieving $W/k$ error is trivial because in the singleton case, the answer to any fixed-window query is at most $W/k$. Therefore, simply answering zero to all queries result in error at most $W/k$.

\section{Time-Window Queries}

Finally, we will present our algorithms and lower bounds for time-window queries. Again, this section is divided into two subsections based on the DP notions.

\subsection{Event-Level DP}

\subsubsection{Algorithm}

In the time-window setting, we reduce the problem to 2d-range query:

\begin{lemma} \label{lem:time-window-algo}
If there exists an $(\eps, \delta)$-DP algorithm for 2d-range query with $\left(\alpha\left(M, \eps, \delta, \beta\right), \beta\right)$-utility, then there exists an $(\eps, \delta)$-DP algorithm for time-window $\freq{\geq k}$ in the event-level DP and bundle setting with $\left(2 \cdot \alpha\left(T+1, \frac{\eps}{2(2k+1)},\frac{\delta}{4\eps(2k+1)}, \beta/2\right), \beta\right)$-utility for every $k \in \N$.
\end{lemma}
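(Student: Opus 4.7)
The plan is to follow the two-instance reduction sketched in \Cref{sec:tech-overview}, analogous to the proof of \Cref{lem:fixed-window-algo} but with 2d-range query in place of 1d. Let $M := T+1$ and build two 2d-range query instances whose points live in $\{0,\ldots,T+1\}^2$: for every $u \in [U]$ and every index $\ell$ with $\ell \geq 1$ and $\ell + k \leq m_u$, add $(t^u_\ell, t^u_{\ell+k})$ to $\bx$ and $(t^u_{\ell+1}, t^u_{\ell+k})$ to $\bx'$. To answer a query $(t_1, t_2)$ with $1 \leq t_1 \leq t_2 \leq T$, I would invoke the 2d-range query algorithm on each instance to obtain estimates $\tilde{A}, \tilde{B}$ of the numbers of points lying in the axis-aligned box $[0, t_1-1] \times [t_2+1, T+1]$, and return $U - \tilde{A} + \tilde{B}$.

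For correctness, I would show that the noiseless counts satisfy $|A| - |B| = |\{u : S^{[t_1, t_2]}_u < k\}|$. Writing $\ell^*(u, t_1)$ for the largest $\ell$ with $t^u_\ell < t_1$ (which exists and satisfies $\ell^*(u, t_1) + k \leq m_u$ thanks to the $S^0_u = S^{T+1}_u = k$ padding from \Cref{sec:prelims}), the telescoping identity
\[
\ind[t^u_\ell < t_1] - \ind[t^u_{\ell+1} < t_1] = \ind[\ell = \ell^*(u, t_1)]
\]
collapses the per-item contribution to $\ind[t^u_{\ell^*(u, t_1) + k} > t_2]$, which is precisely $\ind[S^{[t_1, t_2]}_u < k]$. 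Summing over $u$ and taking the complement with respect to $U$ yields $\freq{\geq k}(S^{[t_1, t_2]}) = U - |A| + |B|$.

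For privacy, I would bound the per-instance sensitivity. A single event-level change inserts or deletes one timestamp in some $u$'s sorted sequence $t^u_1 \leq \cdots \leq t^u_{m_u}$: an insertion at position $i$ invalidates exactly the $k$ old pairs with old index in $\{i-k,\ldots,i-1\}$ and produces exactly $k+1$ new pairs with new index in $\{i-k,\ldots,i\}$, while all other pairs survive (possibly at a shifted index), and the symmetric argument for a deletion gives $k+1$ removed and $k$ added. Thus each instance undergoes at most $2k+1$ point insertions/deletions. Instantiating the 2d-range query algorithm with parameters $\eps'' = \eps/(2(2k+1))$ and $\delta'' = \delta/(4\eps(2k+1))$, group privacy (\Cref{fact:group-dp}) with group size $2k+1$ yields an $(\eps/2, \delta/2)$-DP guarantee per instance (using $e^{k'\eps''} - 1 = e^{\eps/2} - 1$ together with $e^{\eps''} - 1 \geq \eps''$ to check the $\delta$ bound), and basic composition (\Cref{thm:basic-comp}) across the two instances then gives $(\eps, \delta)$-DP overall. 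A union bound over the two range-query invocations with failure probability $\beta/2$ each, combined with the triangle inequality, yields additive error at most $2 \cdot \alpha(T+1, \eps/(2(2k+1)), \delta/(4\eps(2k+1)), \beta/2)$ with probability $\geq 1 - \beta$.

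The main obstacle I expect is the exact sensitivity accounting: although an $O(k)$ upper bound is immediate, producing the precise $2k+1$ of the statement requires tracking which pairs survive, disappear, and are newly created after a single insertion or deletion in $u$'s sorted sequence. A secondary subtlety is ensuring that the chosen range of $\ell$ in the construction captures every needed term in the telescoping sum; this is where the padding convention of \Cref{sec:prelims} is essential, so that both $\ell^*(u, t_1)$ and $\ell^*(u, t_1) + k$ always lie in the valid index range.
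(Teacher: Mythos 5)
Your proposal is correct and follows essentially the same route as the paper's proof: the same two 2d-range-query instances (pairs $(t^u_\ell, t^u_{\ell+k})$ versus $(t^u_{\ell+1}, t^u_{\ell+k})$, with the only cosmetic difference that you shift the query box rather than the points by $\pm 1$), the same complement/telescoping correctness identity via $\ell^*(u,t_1)$, and the same $2k+1$ per-instance sensitivity bound handled by group privacy (\Cref{fact:group-dp}) plus composition over the two instances. The parameter bookkeeping for the $\delta$ term is asserted at the same level of detail as in the paper, so nothing further is needed.
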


\begin{proof}
Let $M = T + 1$.
We will create \emph{two} instances of 2d-range query. For clarity, we will call the first instance $\bx$ and the second instance $\bx'$. The instances are created as follows:
\begin{enumerate}
\item Recall the definition of $m_u$ and $t^1_u, \dots, t^{m_u}_u$ from \Cref{sec:prelims}. \\
For all $u \in [U]$ and $\ell = 1, \dots, m_u - k$, do the following:
\begin{enumerate}
\item Add $(t_u^\ell + 1, t_u^{\ell + k} - 1)$ to $\bx$.
\item Add $(t_u^{\ell + 1} + 1, t_u^{\ell + k} - 1)$ to $\bx'$.
\end{enumerate}
\item We then run the $\left(\frac{\eps}{2(2k+1)},\frac{\delta}{4\eps(2k+1)}\right)$ 2d-range query algorithms on both instances $\bx, \bx'$ to get estimates $\tr_{y_1, y_2}$ of $r_{y_1, y_2}(\bx)$ and $\tr'_{y_1, y_2}$ of $r_{y_1, y_2}(\bx')$ for all $y_1, y_2 \in [M]^2$.
\item To answer $\freq{\geq k}(S^{[i, j]})$, we output $|U| - \tr_{(1, j), (i, M)} + \tr'_{(1, j), (i, M)}$.
\end{enumerate}

To prove DP, consider any two neighboring datasets $\ds = (S^1, \dots, S^T)$ and $\overline{\ds}$ where $\overline{\ds} \in (\bS^1, \dots, \bS^t)$ results from removing $u$ from $S^t$ for some $u \in [U], t \in [T]$. To avoid confusion, when we refer to $m_u, t^1_u, \dots, t^{m_u}_u$, we will be explicit about whether they are corresponding to $\ds$ or $\ods$. Let $\ell'$ denote the largest index with $t = t^{\ell'}_u(\ds)$. Note that we have $m_u(\ods) = m_n(\ds) - 1$ and
\begin{align*}
t^{\ell}_u(\ods) =
\begin{cases}
t^\ell_u(\ds) & \text{ if } \ell < \ell', \\
t^{\ell + 1}_u(\ds) & \text{ if } \ell \geq \ell',
\end{cases}
\end{align*}
for all $\ell \in [m_u(\ods)]$.

This implies the following:
\begin{itemize}
\item For all $\ell \leq \ell'$ the $\ell$th loop of the first step remains exactly the same for both $\ds$ and $\ods$.
\item For all $\ell \geq \ell' + k$ the $\ell$th loop of the first step for $\ods$ is exactly the same as the $(\ell + 1)$th loop for $\ds$.
\end{itemize}
As a result, there can be at most $2k + 1$ changes to each of $\bx, \bx'$ between the two datasets $\ds, \ods$. Thus, group privacy (\Cref{fact:group-dp}), implies that the algorithm is $(\eps, \delta)$-DP as desired.

%Let $\bm_u$ be the value of $m_u$ in $\bS$ and $\bt^1_{u}, \dots, \bt^{\bm_n}_u$ be the values of $t^1_u, \dots, t^{m_u}_u$.

%Consider an event-level change where $u$ is removed from the set $S^t$. Suppose that $\ell'$ is the largest index with $t = t^u_{\ell'}$ before the removal. Then, it is clear that the loop in the algorithm is exactly the same except for $\ell \in \{\ell' - k, \cdots, \ell'\}$. As a result, there can be at most $k + 1$ changes to each of $\bx, \bx'$. Thus, group privacy (\Cref{fact:group-dp}), implies that the algorithm is $(\eps, \delta)$-DP.

To see its correctness, for each $u \in [U]$, let $\ell^*(u, i)$ denote the last time step it is reached before $i$ (i.e., largest $\ell$ such that $t_u^\ell < i$). Notice that
\begin{align*}
\freq{\geq  k}(S^{[i, j]})
% = |\{u \mid S^{[i, j]}_u \geq k\}|
= |U| - |\{u \mid S^{[i, j]}_u < k\}|
%&= |U| - |\{u \mid t_{\ell^*(u, i) + k} > j\}| \\
= |U| - \sum_{u \in [U]} \ind[t^{\ell^*(u, i) + k}_u > j].
\end{align*}
Now, observe that the elements added gets canceled out for all $\ell \ne \ell^*(u, i)$ in $r_{(1, j), (i, M)}(\bx) - r_{(1, j), (i, M)}(\bx')$. For $\ell = \ell^*(u, i)$, they are not canceled iff $t_{\ell + k} > j$. To summarize, we have
\begin{align*}
r_{(1, j), (i, M)}(\bx) - r_{(1, j), (i, M)}(\bx') = \sum_{u \in [U]} \ind[t^{\ell^*(u, i) + k}_u > j].
\end{align*}
By combining the above two equations, we then have $\freq{\geq k}(S^{[i, j]}) = |U| - r_{(1, j), (i, M)}(\bx) + r_{(1, j), (i, M)}(\bx').$ The error guarantee follows from that of the 2d-range query algorithm.
\end{proof}

Plugging the above into \Cref{thm:2d-pure,thm:2d-apx}, we arrive at the following bounds.

\begin{corollary} \label{cor:time-window-alg-pure-concrete}
For any $k \in \N, \eps > 0$, there is an $\eps$-DP algorithm for time-window $\freq{\geq k}$ in the event-level DP and bundle setting with $(O(k \cdot \log^3 T \log(1/\beta)/\eps), \beta)$-utility.
\end{corollary}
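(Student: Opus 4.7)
The plan is to apply Lemma \ref{lem:time-window-algo} with the pure-DP 2d-range query algorithm from Theorem \ref{thm:2d-pure}, setting $\delta = 0$ throughout.

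First, I would observe that Theorem \ref{thm:2d-pure} supplies an $\eps'$-DP algorithm for 2d-range query with utility $(\alpha(M,\eps',0,\beta'),\beta')$ where $\alpha(M,\eps',0,\beta') = O(\log^3 M \cdot \log(1/\beta')/\eps')$. Plugging this into Lemma \ref{lem:time-window-algo} with $\delta = 0$ yields an $\eps$-DP algorithm for time-window $\freq{\geq k}$ whose utility is $(2\cdot\alpha(T+1, \eps/(2(2k+1)), 0, \beta/2), \beta)$.

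Next I would just expand this expression. Substituting $M = T+1$, $\eps' = \eps/(2(2k+1))$, and $\beta' = \beta/2$, the resulting error bound is
\[
O\!\left(\log^3(T+1) \cdot \log(2/\beta) \cdot \frac{2(2k+1)}{\eps}\right) = O\!\left(\frac{k \log^3 T \cdot \log(1/\beta)}{\eps}\right),
\]
which is exactly the claimed $(O(k \log^3 T \log(1/\beta)/\eps), \beta)$-utility.

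There is essentially no obstacle here: the corollary is a direct instantiation, and all the heavy lifting (the reduction, the group privacy argument, and the range-query construction) was done in Lemma \ref{lem:time-window-algo}. The only thing to be a bit careful about is tracking the $(2k+1)$ factor from the group privacy blowup and confirming it absorbs into the $O(k)$ in the final bound, which it does trivially.
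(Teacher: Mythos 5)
Your proposal is correct and matches the paper's own derivation: Corollary~\ref{cor:time-window-alg-pure-concrete} is obtained exactly by instantiating Lemma~\ref{lem:time-window-algo} with the pure-DP 2d-range query algorithm of Theorem~\ref{thm:2d-pure} (i.e., $\delta = 0$), and the $2(2k+1)$ privacy-budget factor is absorbed into the $O(k)$ as you note.
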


\begin{corollary} \label{cor:time-window-alg-apx-concrete}
For any $k \in \N, \eps, \delta \in (0, 1)$, there is an $(\eps, \delta)$-DP algorithm for time-window $\freq{\geq k}$ in the event-level DP and bundle setting with $(O(k \cdot \log^2 T \sqrt{\log(k/(\eps \delta))\log(1/\beta)}/\eps), \beta)$-utility.
\end{corollary}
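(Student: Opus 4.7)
The corollary is a direct consequence of the reduction in \Cref{lem:time-window-algo} combined with the approximate-DP 2d-range query bound in \Cref{thm:2d-apx}. So the entire plan is to instantiate the abstract reduction with concrete parameters and then simplify.

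First, I would invoke \Cref{thm:2d-apx} to get, for every $\eps', \delta' \in (0,1)$ and $\beta' \in (0,1)$, an $(\eps', \delta')$-DP algorithm for the 2d-range query problem with utility
\begin{equation*}
\alpha(M, \eps', \delta', \beta') \;=\; O\!\left(\frac{\log^2 M \cdot \sqrt{\log(1/\delta')\log(1/\beta')}}{\eps'}\right).
\end{equation*}
Feeding this into \Cref{lem:time-window-algo} with $M = T+1$ yields an $(\eps, \delta)$-DP algorithm for time-window $\freq{\geq k}$ in the event-level, bundle setting with utility $\bigl(2\alpha(T+1,\,\eps_0,\,\delta_0,\,\beta/2),\,\beta\bigr)$, where
\begin{equation*}
\eps_0 \;=\; \frac{\eps}{2(2k+1)} \quad\text{and}\quad \delta_0 \;=\; \frac{\delta}{4\eps(2k+1)}.
\end{equation*}

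Next, I would simplify the resulting expression. Substituting,
\begin{equation*}
2\alpha(T+1, \eps_0, \delta_0, \beta/2) \;=\; O\!\left(\frac{\log^2 T \cdot \sqrt{\log(1/\delta_0)\log(2/\beta)}}{\eps_0}\right) \;=\; O\!\left(\frac{k \log^2 T \cdot \sqrt{\log(1/\delta_0)\log(1/\beta)}}{\eps}\right),
\end{equation*}
using $1/\eps_0 = O(k/\eps)$ and $\log(2/\beta) = O(\log(1/\beta))$ for $\beta \in (0,1)$. Finally, since $\eps \leq 1$, we have $1/\delta_0 = 4\eps(2k+1)/\delta = O(k/\delta) \leq O(k/(\eps\delta))$, so $\log(1/\delta_0) = O(\log(k/(\eps\delta)))$. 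Plugging this in yields exactly the claimed bound
\begin{equation*}
O\!\left(\frac{k \log^2 T \cdot \sqrt{\log(k/(\eps\delta))\log(1/\beta)}}{\eps}\right).
\end{equation*}

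There is no real obstacle here; since all the work was done in proving \Cref{lem:time-window-algo} and \Cref{thm:2d-apx}, the only thing to be careful about is tracking the parameter substitution inside the square root (in particular confirming that $\log(1/\delta_0)$ absorbs into $\log(k/(\eps\delta))$ up to constants). The DP guarantee is immediate from \Cref{lem:time-window-algo}, which already accounted for group privacy with group size $2k+1$ in the definitions of $\eps_0$ and $\delta_0$.
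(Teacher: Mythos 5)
Your proposal is correct and matches the paper's own derivation, which likewise obtains \Cref{cor:time-window-alg-apx-concrete} by plugging \Cref{thm:2d-apx} into the reduction of \Cref{lem:time-window-algo} with $M = T+1$, $\eps_0 = \eps/(2(2k+1))$, $\delta_0 = \delta/(4\eps(2k+1))$, and absorbing $\log(1/\delta_0)$ into $\log(k/(\eps\delta))$ using $\eps \leq 1$. Nothing further is needed.
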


\subsubsection{Lower Bounds}

Once again, there are two lower bounds here, one based on 2d-range query (which is polylogarithmic in $T$) and one based on the 1-way marginal problem. In fact, the latter follows immediately from the lower bound in the fixed-window case since time-window includes fixed-window with $W = 1$ and $W = T/2$ as special cases. Specifically, the lower bounds from \Cref{cor:fixed-window-lb-polyk-concrete} and \Cref{cor:fixed-window-lb-polyk-singleton-concrete} yields the following:

\begin{corollary} \label{cor:time-window-lb-polyk-concrete}
Let $T, k$ be any positive integer and $\eps, \delta > 0$ be any sufficiently small constant. Then, there is no $(\eps, \delta)$-DP algorithm for time-window $\freq{\geq k}$ in the event-level DP and bundle setting with $(o(\sqrt{\min\{k, T\}}), 0.01)$-utility.
\end{corollary}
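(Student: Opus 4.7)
The plan is to derive this as an immediate consequence of the fixed-window lower bound Cor.~\ref{cor:fixed-window-lb-polyk-concrete}, exploiting the fact that the fixed-window query problem is syntactically a sub-problem of time-window queries: any $(\eps,\delta)$-DP time-window algorithm $\cM$ for $\freq{\geq k}$ simultaneously answers every pair $(t_1, t_2)$ with $1 \le t_1 \le t_2 \le T$, so in particular it answers all queries of the form $(t_1, t_1 + W - 1)$. I would therefore build a candidate fixed-window algorithm by running $\cM$ on the identical input and discarding all outputs except those corresponding to the chosen window length $W$. Since this is pure post-processing (in fact a projection of the output tuple), the reduced algorithm inherits both the $(\eps,\delta)$-DP guarantee of $\cM$ and its per-query $(\alpha, \beta)$-utility without any loss.

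Next I would pick $W$ to make the resulting fixed-window lower bound as strong as possible. Taking $W := \lfloor T/2 \rfloor$ works: the assumption $T \ge (1 + \Omega(1))W$ required by the fixed-window lower bound holds, and $\sqrt{\min\{k, W\}} = \Omega(\sqrt{\min\{k, T\}})$. If a $(\eps,\delta)$-DP time-window algorithm achieved $(o(\sqrt{\min\{k, T\}}), 0.01)$-utility in the event-level, bundle setting, the reduction above would yield a fixed-window algorithm of matching privacy and utility, contradicting Cor.~\ref{cor:fixed-window-lb-polyk-concrete}.

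Both problems live in the event-level DP, bundle setting, use the same input format (a sequence of multisets over $[U]$), have the same event-level neighboring relation, and rely on the same per-query notion of $(\alpha, \beta)$-utility, so the reduction requires no modification to the dataset, the privacy parameters, or the noise level. Consequently there is no genuine obstacle: the argument collapses to choosing $W = \lfloor T/2 \rfloor$, projecting the output of the hypothetical time-window algorithm, and invoking the fixed-window lower bound. The only verification needed is that $\sqrt{\min\{k, \lfloor T/2 \rfloor\}} \ge \tfrac{1}{\sqrt{2}}\sqrt{\min\{k, T\}}$, which is immediate.
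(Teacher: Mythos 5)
Your argument is correct and is essentially the paper's own proof: the paper also observes that time-window queries contain fixed-window queries (in particular with $W = T/2$) as a special case, so the lower bound of \Cref{cor:fixed-window-lb-polyk-concrete} transfers directly by post-processing the time-window algorithm's output. The only cosmetic difference is your explicit check of the constant-factor loss $\sqrt{\min\{k,\lfloor T/2\rfloor\}} \geq \tfrac{1}{\sqrt{2}}\sqrt{\min\{k,T\}}$, which the paper leaves implicit.
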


\begin{corollary} \label{cor:time-window-lb-polyk-singleton-concrete}
Let $T, k$ be any positive integer such that $T \geq 2k$ and $\eps, \delta > 0$ be any sufficiently small constant. Then, there is no $(\eps, \delta)$-DP algorithm for time-window $\freq{\geq k}$ in the event-level DP and singleton setting with $(o(\sqrt{\min\{k, T/k\}}), 0.01)$-utility.
\end{corollary}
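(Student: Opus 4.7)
The plan is to derive this bound as an immediate consequence of the fixed-window singleton lower bound in Corollary~\ref{cor:fixed-window-lb-polyk-singleton-concrete}, using the trivial reduction from fixed-window to time-window. The key observation is that a time-window algorithm, by definition, answers queries of the form $(t_1, t_2)$ for all $1 \leq t_1 \leq t_2 \leq T$; in particular it answers all queries $(t, t + W - 1)$ for any fixed $W \in [T]$, so restricting its outputs to these queries yields a fixed-window algorithm on the same input with identical privacy parameters and identical per-query utility.

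Concretely, given any $(\eps, \delta)$-DP time-window $\freq{\geq k}$ algorithm in the event-level and singleton setting with time horizon $T$, I would set $W := \lfloor T/2 \rfloor$ and feed it exactly the instances produced in the fixed-window singleton lower bound reduction (Lemma~\ref{lem:fixed-window-lb-polyk-singleton}); then I would return, as the answer to a fixed-window query $(t, t + W - 1)$, the algorithm's answer to the time-window query $(t, t + W - 1)$. Since the input instance is unchanged, the singleton property (namely $\|S^t\|_1 \leq 1$ for every $t$) is preserved, and the resulting fixed-window algorithm inherits the $(\eps, \delta)$-DP and $(\alpha, 0.01)$-utility guarantees of the time-window algorithm.

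It then remains to check that the fixed-window lower bound at this choice of $W$ gives the claimed bound up to constants. Since we assume $T \geq 2k$, we have $W = \lfloor T/2 \rfloor \geq k$, so the hypothesis $W \geq k$ of Corollary~\ref{cor:fixed-window-lb-polyk-singleton-concrete} is satisfied. Moreover,
\[
\min\{k,\, W/k\} \;\geq\; \min\bigl\{k,\; T/(3k)\bigr\} \;\geq\; \tfrac{1}{3}\min\{k,\, T/k\},
\]
so the fixed-window lower bound of $\Omega(\sqrt{\min\{k, W/k\}})$ implies a lower bound of $\Omega(\sqrt{\min\{k, T/k\}})$ for the time-window problem, which is exactly the $o(\cdot)$-utility impossibility being claimed.

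I do not anticipate any real obstacle: the reduction is a one-line restriction of the query set, and the only arithmetic content is the constant-factor comparison of $\min\{k, W/k\}$ with $\min\{k, T/k\}$ for $W = \Theta(T)$. The analogous derivation of Corollary~\ref{cor:time-window-lb-polyk-concrete} from Corollary~\ref{cor:fixed-window-lb-polyk-concrete} is even simpler since in the bundle setting the fixed-window bound scales as $\sqrt{\min\{k, W\}}$, which with $W = \lfloor T/2 \rfloor$ directly matches $\sqrt{\min\{k, T\}}$.
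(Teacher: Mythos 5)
Your proposal is correct and matches the paper's own derivation: the paper obtains this corollary by observing that time-window queries subsume fixed-window queries (taking $W = \Theta(T)$, here $T/2$), so the fixed-window singleton lower bound of Corollary~\ref{cor:fixed-window-lb-polyk-singleton-concrete} transfers immediately, with $T \geq 2k$ ensuring $W \geq k$. Your extra constant-factor check that $\min\{k, W/k\} = \Theta(\min\{k, T/k\})$ is exactly the (implicit) arithmetic the paper relies on.
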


The lower bound for 2d-range query works by ``reverse engineering'' the reduction above. We start by describing this for the case of bundle.

\begin{lemma} \label{lem:time-window-lb-batch}
Let $k$ be any positive integer.
If there exists an $(\eps, \delta)$-DP algorithm for time-window $\freq{\geq k}$ in the event-level DP and bundle setting with $(\alpha(T, \eps, \delta, \beta), \beta)$-utility, then there exists an $(\eps, \delta)$-DP algorithm for 2d-range query with $(4 \cdot \alpha(2M+1, \eps/2, \delta/4, \beta/4), \beta)$-utility.
\end{lemma}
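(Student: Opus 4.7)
The plan is to reverse the algorithmic reduction in \Cref{lem:time-window-algo}. Given a 2d-range query input $\bx = (x^1, \dots, x^n)$ with $x^j = (a_j, b_j) \in [M]^2$, I will construct a time-window $\freq{\geq k}$ instance with $U = n$ and time horizon $T = 2M + 1$: for each $j \in [n]$, place one occurrence of item $j$ at time $a_j \in [1, M]$, one at time $M + 1 + b_j \in [M+2, 2M+1]$, and (when $k \geq 2$) a fixed background of $k - 2$ occurrences of item $j$ at time $M + 1$; this background is present in $\ds$ independent of the specific input $\bx$. The key identity is that for every $\alpha \in [1, M+1]$ and $\beta \in \{0, \dots, M\}$, the window $[\alpha, M+1+\beta]$ always contains time $M + 1$, so item $j$ accumulates $(k - 2) + \ind[a_j \geq \alpha] + \ind[b_j \leq \beta]$ events in it (using $a_j \leq M \leq M + 1 + \beta$ for the first indicator and $M + 1 + b_j \geq M + 1 \geq \alpha$ for the second), which is at least $k$ iff both indicators are $1$. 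Hence $\freq{\geq k}(S^{[\alpha, M+1+\beta]}) = |\{j : a_j \geq \alpha \text{ and } b_j \leq \beta\}|$, a ``quadrant'' count on the input points.

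Any 2d-range query $r_{(u_1, u_2), (v_1, v_2)}(\bx)$ decomposes by inclusion-exclusion into an alternating sum of $4$ such quadrant counts (one per corner of the rectangle), so $4$ time-window queries to the assumed algorithm suffice to answer it. Setting each time-window query to utility $(\alpha, \beta/4)$ and a union bound over the $4$ possible failures yield the claimed $(4\alpha, \beta)$-utility via the triangle inequality. For differential privacy, the crucial observation is that adding or removing a single $x^j$ in $\bx$ toggles exactly $2$ event-level occurrences in $\ds$ (the two coordinate events for item $j$); the $k - 2$ background occurrences are identical in both datasets. Thus group privacy (\Cref{fact:group-dp}) with group size $2$ promotes the $(\eps/2, \delta/4)$-DP event-level guarantee to $(\eps, \delta)$-DP under the 2d-range-query neighbor relation, with the $\delta/4$ absorbing the factor $1 + e^{\eps/2} \leq 4$ (using $\eps \leq 1$) from the group-privacy bound.

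The main wrinkle I anticipate is the edge case $k = 1$, where ``$k - 2 = -1$ background occurrences'' is ill-defined. I plan to handle it with the same two-event encoding per item and no background; the window $[\alpha, M+1+\beta]$ then yields the OR rather than the AND of the two conditions, and each quadrant count is still recovered via the identity
\[
|\{j : a_j \geq \alpha \text{ and } b_j \leq \beta\}| = |\{j : a_j \geq \alpha\}| + |\{j : b_j \leq \beta\}| - |\{j : a_j \geq \alpha \text{ or } b_j \leq \beta\}|,
\]
using two extra $1$-dimensional window queries $\freq{\geq 1}(S^{[\alpha, M]})$ and $\freq{\geq 1}(S^{[M+1, M+1+\beta]})$ (which individually isolate the first and second ``halves'' of the timeline). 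This slightly increases the constants in front of $\alpha$ and $\beta$ but preserves the qualitative form of the reduction.
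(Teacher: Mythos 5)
Your reduction is correct and is essentially the paper's: both embed each 2d point as two events placed on opposite sides of the middle time step $M+1$ of a horizon $T=2M+1$, pad each item with a fixed background at time $M+1$ so that window queries spanning $M+1$ read off quadrant counts, recover each rectangle query by inclusion--exclusion over four window queries (error $4\alpha$, failure probability $\beta$ by a union bound), and use group privacy with group size $2$ exactly as you do to turn $(\eps/2,\delta/4)$ into $(\eps,\delta)$ for $\eps\le 1$. The one substantive difference is the background level: the paper uses $k-1$ copies at time $M+1$, so an item crosses the threshold as soon as \emph{either} of its two events lands in the window, and $\freq{\geq k}$ of such a window equals $n$ minus a complementary quadrant count; the $n$'s cancel in the alternating four-term sum, and this handles $k=1$ (background $0$) with no special case and with exactly the constants claimed in the lemma. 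Your $k-2$ background gives the cleaner direct identity $\freq{\geq k}(S^{[\alpha,M+1+\beta]})=|\{j:\ a_j\ge\alpha,\ b_j\le\beta\}|$, but it forces the separate $k=1$ patch, which uses extra window queries and hence yields slightly worse constants than those stated (harmless for the $\Omega(\log T)$ corollary, but worth flagging, or simply switch to the complement encoding for $k=1$). One small point to tighten: take $U$ to be a fixed, sufficiently large universe (as the paper does) rather than $U=n$, so that under add/remove neighboring of the range-query input the constructed datasets genuinely differ in only the two coordinate events and the input-independent background is identical on both sides; background-only items never reach $k$ (since $k-2<k$), so they do not perturb the counts.
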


\begin{proof}
Let $T = 2M + 1$ and let $U$ be sufficiently large (i.e., larger than the dataset size of the 1d-range query). Given an input $x^1, \dots, x^n \in [M]^2$ in the 2d-range query problem, we construct an input to the time-window $(\eps/2,\delta/4)$-DP $\freq{\geq k}$ algorithm as follows:
\begin{itemize}
\item For all $u \in [U]$, let $S^{M+1}_u = k - 1$ and $S^i_u = 0$ for all $i \in [T] \setminus \{M+1\}$.
\item For all $j \in [n]$, increment each of $S^{M + 1 - x^j_1}_{j}$ and $S^{M + 1 + x^j_2}_{j}$ by one.
\end{itemize}
It is not hard to verify that, for all $y \in [M]^2$, $\freq{k}(S^{[M + 1 - y_1, M + 1 + y_2]}) = n -  r_{(1,1), y}(\bx)$. Therefore, we may answer any range query $r_{y^1, y^2}(\bx)$ by outputting an estimate to 
\begin{align*}
&- \freq{\geq k}(S^{[M + 1 - y^2_1, M + 1 + y^2_2]}) - \freq{\geq k}(S^{[M + 2 - y^1_1, M + y^1_2]}) \\ 
&+ \freq{\geq k}(S^{[M + 1 - y^2_1, M + y^1_2]}) + \freq{\geq k}(S^{[M + 2 - y^1_1, M + 1 + y^2_2]}),
\end{align*}
where the estimate of each term is computed via the time-window $(\eps/2,\delta/4)$-DP $\freq{\geq k}$ algorithm. 

By group DP (\Cref{fact:group-dp}), this is indeed an $(\eps, \delta)$-DP algorithm for 2d-range query. Its utility claim follows trivially by definition.% with $(4 \cdot \alpha(2M+1, \eps/2, \delta/2, \beta/4), \beta)$-utility.
\end{proof}

Plugging the above into~\Cref{thm:2d-range-lb} gives the following lower bound in terms of $T$:

\begin{corollary} \label{cor:time-window-lb-batch-concrete}
For any sufficiently small constants $\eps, \delta, \beta > 0$, there is no $(\eps, \delta)$-DP algorithm for time-window $\freq{\geq k}$ in the event-level DP and bundle setting with $(o(\log T), \beta)$-utility
\end{corollary}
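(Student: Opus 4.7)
The plan is to argue by contradiction, using Lemma~\ref{lem:time-window-lb-batch} to convert a hypothetical efficient time-window $\freq{\geq k}$ algorithm into an efficient 2d-range query algorithm, and then invoking \Cref{thm:2d-range-lb} to derive a contradiction. This is essentially a mechanical chaining of the two results, with the main bookkeeping being to keep track of the parameter transformations.

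Concretely, suppose toward contradiction that for arbitrarily small constants $\eps,\delta,\beta > 0$ there exists an $(\eps,\delta)$-DP algorithm $\cA$ for time-window $\freq{\geq k}$ in the event-level DP and bundle setting with $(o(\log T),\beta)$-utility. Given a target 2d-range query instance on universe $[M]^2$, we instantiate $\cA$ with time horizon $T = 2M+1$, feeding it the dataset constructed in the proof of \Cref{lem:time-window-lb-batch}. Plugging $\alpha(T,\eps,\delta,\beta) = o(\log T)$ into the conclusion of that lemma yields an $(\eps,\delta)$-DP algorithm for 2d-range query with utility
\[
\bigl(4 \cdot o(\log(2M+1)),\ \beta\bigr) \;=\; \bigl(o(\log M),\ \beta\bigr),
\]
since constants and the additive $+1$ inside the logarithm are absorbed into the little-$o$.

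Now I would invoke \Cref{thm:2d-range-lb}, which asserts that for all sufficiently small constants $\eps,\delta,\beta > 0$ no $(\eps,\delta)$-DP algorithm attains $(o(\log M),\beta)$-utility for 2d-range query. Since Lemma~\ref{lem:time-window-lb-batch} applies its $(\eps/2,\delta/4)$-DP subroutine and outputs an $(\eps,\delta)$-DP 2d-range query algorithm with error scaled by $4$ and confidence $\beta/4$, we can simply take the parameters $\eps,\delta,\beta$ in our contradiction hypothesis small enough (and $\beta$ accordingly small) so that the resulting $(\eps,\delta,\beta)$ for the 2d-range query instance still fall within the regime where \Cref{thm:2d-range-lb} applies. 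This contradicts \Cref{thm:2d-range-lb} and completes the proof.

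I do not expect any real obstacle: the reduction in \Cref{lem:time-window-lb-batch} is already proven, and the only care needed is verifying that $o(\log T) = o(\log M)$ after the $T = 2M+1$ substitution and that the privacy/confidence parameters stay ``sufficiently small constants'' after the constant-factor rescalings. Both checks are immediate.
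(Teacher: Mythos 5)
Your proposal is correct and follows exactly the paper's route: the paper proves this corollary by plugging Lemma~\ref{lem:time-window-lb-batch} (with $T = 2M+1$) into \Cref{thm:2d-range-lb}, which is precisely your chaining argument with the same parameter bookkeeping.
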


Again, our standard ``spreading out'' technique also works with the above reduction with a usual blow up on the value of $T$.

\begin{lemma} \label{lem:time-window-lb-singleton}
Let $T, k, n, M$ be any positive integers such that $T \geq 2nM + k n$.
If there exists an $(\eps, \delta)$-DP algorithm for time-window $\freq{\geq k}$ in the event-level DP and singleton setting with $(\alpha(T, \eps, \delta, \beta), \beta)$-utility, then there exists an $(\eps, \delta)$-DP algorithm for 2d-range query with $(4 \cdot \alpha(2M+1, \eps/2, \delta/4, \beta/4), \beta)$-utility.
\end{lemma}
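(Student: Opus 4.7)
My plan is to adapt the bundle reduction of \Cref{lem:time-window-lb-batch} using the ``spread out'' device already employed in \Cref{lem:cum-lb-singleton,lem:fixed-window-lb-polyk-singleton}: replace each conceptual time step of the bundle construction by a block of $n$ consecutive new time steps (one slot per item), and spread the $k-1$ base copies of each item across a dedicated central block of $(k-1)n$ time steps. The total horizon used is $2nM + (k-1)n \leq 2nM + kn$, fitting the budget $T \geq 2nM + kn$, and the number of items remains $U = n$.

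Concretely, given $\bx = (x^1,\dots,x^n) \in \{0,\dots,M\}^{2n}$, I partition the used prefix of $[T]$ into a left region $\{1,\dots,nM\}$ cut into $M$ blocks of $n$ slots (block $\ell$ occupying times $(\ell-1)n+1,\dots,\ell n$), a center region $\{nM+1,\dots,nM+(k-1)n\}$, and a symmetric right region of $nM$ slots. I would set $S^{(M-x^j_1)n+j}_j := 1$ whenever $x^j_1 \geq 1$, placing item $j$'s ``left event'' inside the block corresponding to the bundle-proof's original time $M+1-x^j_1$, and symmetrically $S^{nM+(k-1)n+(x^j_2-1)n+j}_j := 1$ whenever $x^j_2 \geq 1$ on the right. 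In the center, item $u$'s $c$th base copy goes at time $nM + (c-1)n + u$ for $c=1,\dots,k-1$. By construction every time step carries at most one item, so the instance is singleton, and an event-level change at a time step of the bundle construction maps to an event-level change at a single new time step.

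For a query $y=(y_1,y_2) \in \{0,\dots,M\}^2$ I would map the original window $[M+1-y_1, M+1+y_2]$ to the new window $[i_1,i_2]$ with $i_1 = (M-y_1)n + 1$ and $i_2 = nM+(k-1)n+y_2 n$. A direct integer-inequality check (using $1 \leq j \leq n$) then yields: (i) the entire center region lies in $[i_1,i_2]$ for every $y$ with $y_1,y_2 \geq 0$, contributing $k-1$ to each $S^{[i_1,i_2]}_j$; (ii) the left slot of item $j$ lies in $[i_1,i_2]$ iff $x^j_1 \leq y_1$, because the inequality $(y_1-x^j_1)n \geq 1-j$ forces $x^j_1 \leq y_1$ by integrality; and (iii) symmetrically on the right. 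Hence $S^{[i_1,i_2]}_j = k-1 + \ind[x^j_1 \leq y_1] + \ind[x^j_2 \leq y_2]$, reproducing the key identity of \Cref{lem:time-window-lb-batch}. The answer $r_{y^1,y^2}(\bx)$ is then recovered via the same four-query inclusion-exclusion used there, and the $(\eps,\delta)$-DP conclusion together with the claimed $(4 \cdot \alpha(\cdot, \eps/2, \delta/4, \beta/4), \beta)$-utility follow by group privacy applied over those four queries, exactly as in the bundle case.

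The one point requiring care is the boundary $x^j_i = 0$, for which my construction assigns no left/right slot even though item $j$ should be counted whenever $y_i \geq 0$. I expect to handle this either by assuming (without loss of generality for the lower bound) that the hard 2d-range query instances satisfy $x^j \succeq (1,1)$---a constant-factor shrinkage of $M$ that does not affect \Cref{thm:2d-range-lb}---or by reserving $2n$ additional slots adjacent to the center region for ``zero-coordinate'' events, which still fits the slack between $2nM + (k-1)n$ and $2nM + kn$. Beyond this indexing subtlety I do not foresee any real obstacle: the algebraic identity, the four-query inclusion-exclusion, and the group-privacy accounting all transfer verbatim from \Cref{lem:time-window-lb-batch}, and the singleton constraint is absorbed entirely by the factor-$n$ blow-up of the time horizon.
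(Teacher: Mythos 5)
Your construction is correct and is essentially the paper's own proof: the paper likewise ``spreads out'' the bundle reduction into blocks of $n$ slots (left events at $Q - nx^j_1 + j$, right events at $R + nx^j_2 + j$ with $Q = nM+1$, $R = Q+(k-1)n$, and the $k-1$ base copies of each item placed in the central block), uses horizon $2nM+(k-1)n \le T$, and recovers $r_{y^1,y^2}$ via the same four-window inclusion--exclusion, with the input implicitly restricted to $[M]^2$ so your zero-coordinate caveat is handled exactly as you propose. One bookkeeping clarification: the $(\eps/2,\delta/4)$ group-privacy discount is needed because adding/removing a single point $x^j$ changes \emph{two} events in the constructed stream (its left and right slots), whereas the four queries only account for the $4\alpha$ error and the $\beta/4$ union bound.
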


\begin{proof}
Let $U = n$. For convenience, also let $Q = n M + 1, R = Q + (k - 1)n$. Given an input $x^1, \dots, x^n \in [M]^2$ to the 2d-range query problem, we construct an input to the time-window $(\eps/2,\delta/4)$-DP $\freq{\geq k}$ algorithm as follows:
\begin{itemize}
\item For all $u \in [U]$, let $S^{Q+(k-1)(u-1)+1}_u = \cdots = S^{Q+(k-1)u}_u = k - 1$ and $S^i_u = 0$ for all $i \in [T] \setminus \{Q+(k-1)(u-1)+1, \dots, Q+(k-1)u\}$.
\item For all $j \in [n]$, increment each of $S^{Q - n x^j_1 + u}_{j}$ and $S^{R + n x^j_2 + u}_{j}$ by one.
\end{itemize}
It is not hard to verify that, for all $y \in [M]^2$, $\freq{k}(S^{[Q - n y_1, R + n(y_2 + 1) - 1]}) = n -  r_{(1,1), y}(\bx)$. Therefore, we may answer any range query $r_{y^1, y^2}(\bx)$ by outputting an estimate to 
\begin{align*}
&- \freq{\geq k}(S^{[Q - n y^2_1, R + n(y^2_2 + 1) - 1]}) - \freq{\geq k}(S^{[Q - n(y^1_1 - 1), R + n y^1_2 - 1]}) \\ 
&+ \freq{\geq k}(S^{[Q - n y^2_1, R + n y^1_2 - 1]}) + \freq{\geq k}(S^{[Q - n(y^1_1 - 1), R + n (y^2_2 + 1) - 1]}),
\end{align*}
where the estimate of each term is computed via the time-window $(\eps/2,\delta/4)$-DP $\freq{\geq k}$ algorithm. 

By group DP (\Cref{fact:group-dp}), this is indeed an $(\eps, \delta)$-DP algorithm for 2d-range query. Its utility claim follows trivially by definition.% with $(4 \cdot \alpha(2M+1, \eps/2, \delta/2, \beta/4), \beta)$-utility.
\end{proof}

Plugging the above into~\Cref{thm:2d-range-lb} with $M = \sqrt[3]{T/k}, n = \Theta(M^2)$ gives the following lower bound in terms of $T/k$:

\begin{corollary} \label{cor:time-window-lb-singleton-concrete}
For any sufficiently small constants $\eps, \delta, \beta > 0$, there is no $(\eps, \delta)$-DP algorithm for time-window $\freq{\geq k}$ in the event-level DP and singleton setting with $(o(\log(T/k)), \beta)$-utility
\end{corollary}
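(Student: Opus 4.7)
The plan is to derive the corollary as a direct consequence of Lemma~\ref{lem:time-window-lb-singleton} combined with the 2d-range query lower bound in Theorem~\ref{thm:2d-range-lb}, via a suitable parameter choice. As the paper itself indicates, I would set $M = \lfloor c\sqrt[3]{T/k}\rfloor$ and $n = \lfloor c' M^2\rfloor$ for sufficiently small absolute constants $c, c' > 0$. The first task is verifying the hypothesis $T \geq 2nM + kn$ of Lemma~\ref{lem:time-window-lb-singleton}. With this choice we have $2nM = \Theta(c' c^3) \cdot (T/k) \leq T/4$, and $kn = \Theta(c'c^2) \cdot k\cdot (T/k)^{2/3} = \Theta(c'c^2) \cdot k^{1/3}T^{2/3} \leq T/4$ using the assumption $T \geq k \log k \geq k$ from \Cref{sec:parameter-assumption}, provided $c, c'$ are small enough. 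Thus the hypothesis holds for all sufficiently large $T$.

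Next I would run a contrapositive argument. Suppose toward contradiction that for every sufficiently small $\eps, \delta, \beta$ there were an $(\eps, \delta)$-DP algorithm for time-window $\freq{\geq k}$ in the event-level singleton setting with $(o(\log(T/k)), \beta)$-utility. Fix target constants $\eps_0, \delta_0, \beta_0$ small enough to invoke Theorem~\ref{thm:2d-range-lb}, and apply the assumed algorithm instantiated at privacy parameters $\eps_0/2, \delta_0/4$ and failure probability $\beta_0/4$. Plugging this into Lemma~\ref{lem:time-window-lb-singleton} with the above $M, n$ yields an $(\eps_0, \delta_0)$-DP algorithm for 2d-range query on $n = \Theta(M^2)$ input points whose utility is $(4\cdot o(\log(T/k)), \beta_0)$.

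Finally I would translate the utility bound. Since $\log M = \tfrac{1}{3}\log(T/k) - O(1)$, we have $o(\log(T/k)) = o(\log M)$, so the resulting 2d-range query algorithm achieves $(o(\log M), \beta_0)$-utility with $n = O(M^2)$. This directly contradicts Theorem~\ref{thm:2d-range-lb} for the chosen sufficiently small constants $\eps_0, \delta_0, \beta_0$. Hence no such time-window algorithm exists at parameters $\eps_0/2, \delta_0/4, \beta_0/4$, which (after relabeling constants) establishes the corollary.

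There is essentially no genuine obstacle here — the work has all been absorbed into Lemma~\ref{lem:time-window-lb-singleton} and into Theorem~\ref{thm:2d-range-lb}. The only minor bookkeeping is the simultaneous calibration of $M$ and $n$: we need $n = \Theta(M^2)$ to match the regime in which Theorem~\ref{thm:2d-range-lb} is tight, while also keeping $2nM + kn \leq T$ in order to apply the lemma; the choice $M = \Theta(\sqrt[3]{T/k})$ is the unique (up to constants) scaling that balances these constraints and produces the claimed $\log(T/k)$ bound.
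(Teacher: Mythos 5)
Your proposal is correct and follows exactly the paper's route: the paper obtains this corollary by plugging Lemma~\ref{lem:time-window-lb-singleton} into Theorem~\ref{thm:2d-range-lb} with $M = \sqrt[3]{T/k}$ and $n = \Theta(M^2)$, which is precisely your parameter choice and contrapositive argument. The additional bookkeeping you supply (checking $T \geq 2nM + kn$ and the translation $o(\log(T/k)) = o(\log M)$) is consistent with what the paper leaves implicit.
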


\subsection{Item-Level DP}

\subsubsection{Algorithms}

The algorithm for Item-Level DP is based on a rather simple approach of running the cumulative $\freq{\geq k}$ algorithm with different starting times, and then applying the composition theorems to account for the privacy budget. This is formalized below.

\begin{lemma}
Let $k, T \in \N$ be any positive integer.
Suppose that there exists an $(\eps, \delta)$-DP algorithm for cumulative $\freq{\geq k}$ in the item-level DP and bundle setting with time horizon $T$ with $(\alpha(\eps, \delta, \beta), \beta)$-utility. Then,
\begin{itemize}
\item There exists an $(\eps, \delta)$-DP algorithm for time-window $\freq{\geq k}$ in the item-level DP and bundle setting with time horizon $T$ with $\left(\alpha(\eps/T, \delta/T, \beta), \beta\right)$-utility.
\item There exists an $(\eps, \delta)$-DP algorithm for time-window $\freq{\geq k}$ in the item-level DP and bundle setting with time horizon $T$ with $\left(\alpha\left(\frac{\eps}{2\sqrt{2T\ln(2/\delta)}}, \frac{\delta}{2T}, \beta\right), \beta\right)$-utility.
\end{itemize}
\end{lemma}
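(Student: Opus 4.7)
The plan is to observe that a time-window query $(t_1, t_2)$ is simply a cumulative query on the suffix dataset $(S^{t_1}, S^{t_1+1}, \ldots, S^T)$ with endpoint $t_2$. So the algorithm maintains $T$ independent copies of the cumulative $\freq{\geq k}$ algorithm, where the $t_1$th copy is fed the suffix starting at time $t_1$ and outputs estimates of $\freq{\geq k}(S^{[t_1, t]})$ for all $t \geq t_1$. To answer a time-window query $(t_1, t_2)$, we return the estimate of $\freq{\geq k}(S^{[t_1, t_2]})$ produced by the $t_1$th copy. The utility for each query is then inherited directly from the underlying cumulative algorithm with the chosen privacy parameters, which gives the claimed $(\alpha(\cdot, \cdot, \beta), \beta)$-utility.

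The privacy analysis is where we pay. Under item-level neighboring, a single item $u$ can appear at every time step, so changing the contributions of $u$ can affect the input to all $T$ cumulative instances simultaneously. Hence we cannot appeal to parallel composition (as was possible in the event-level fixed-window reduction). Instead, we think of the overall mechanism as the adaptive (in fact, non-adaptive) composition of $T$ mechanisms, each being $(\eps', \delta')$-DP under item-level neighboring. Applying \Cref{thm:basic-comp} with $\eps' = \eps/T$ and $\delta' = \delta/T$ gives the first bullet; applying \Cref{thm:adv-comp} with $\eps' = \eps/(2\sqrt{2T\ln(2/\delta)})$ and $\delta' = \delta/(2T)$ gives the second bullet.

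The only mildly subtle step is noting that each individual copy really is $(\eps', \delta')$-item-level DP on its own input, which follows immediately from the hypothesis since a suffix of the dataset is still a valid input in the bundle setting and item-level neighboring is preserved under taking suffixes. Beyond that, the argument is a routine application of the composition theorems, so I expect no real obstacle.
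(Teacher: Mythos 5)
Your proposal is correct and matches the paper's proof: run the cumulative $\freq{\geq k}$ algorithm once for each starting time $t_1 \in [T]$ and account for privacy via basic (resp.\ advanced) composition over the $T$ runs, with exactly the parameter choices $\eps' = \eps/T, \delta' = \delta/T$ and $\eps' = \frac{\eps}{2\sqrt{2T\ln(2/\delta)}}, \delta' = \frac{\delta}{2T}$. Your added remark that each copy is item-level DP on its suffix input is a harmless elaboration of the same argument.
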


\begin{proof}
We simply run the $(\eps', \delta')$-DP algorithm for cumulative $\freq{\geq k}$ starting at time $1, \dots, T$. Since we run the $(\eps', \delta')$-DP algorithm a total of $T$ times on the input dataset, the basic and advanced composition theorems respectively ensure that the entire algorithm is $(\eps, \delta)$-DP as long as we pick $\eps' = \eps/T, \delta' = \delta/T$ and $\eps' = \frac{\eps}{2\sqrt{2T\ln(2/\delta)}}, \delta' = \frac{\delta}{2T}$, respectively.
\end{proof}

Combining the above lemma with our cumulative algorithm from \Cref{cor:cumulative-alg}, we immediately arrive at the following:

\begin{corollary} \label{cor:time-window-item-batch-pure}
For any $k \in \N$ and $\eps > 0$, there is an $\eps$-DP algorithm for time-window $\freq{\geq k}$ in the item-level DP and bundle setting with $(O(T \cdot \log^{1.5}T\log(1/\beta)/\eps, \beta)$-utility.
\end{corollary}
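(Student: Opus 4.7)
The plan is that this corollary is an immediate consequence of combining two already-established results in the paper: the cumulative algorithm from Cor.~\ref{cor:cumulative-alg} and the basic-composition branch of the lemma immediately preceding the statement. Concretely, Cor.~\ref{cor:cumulative-alg} furnishes, for any $\eps' > 0$, an $\eps'$-DP algorithm for cumulative $\freq{\geq k}$ in the item-level and bundle setting whose utility function is $\alpha(\eps', \beta) = O(\log^{1.5} T \cdot \log(1/\beta) / \eps')$.

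Next I would apply the pure-DP branch of the preceding lemma, which, given any such cumulative algorithm, produces an $\eps$-DP time-window algorithm (item-level, bundle) with utility $\alpha(\eps/T, \beta)$. Substituting the cumulative bound above with $\eps'$ replaced by $\eps/T$ yields
\[
\alpha(\eps/T, \beta) \;=\; O\!\left(\frac{T \cdot \log^{1.5} T \cdot \log(1/\beta)}{\eps}\right),
\]
which is exactly the utility claimed in the corollary. Since $\delta = 0$ throughout (the cumulative algorithm from Cor.~\ref{cor:cumulative-alg} is pure-DP and basic composition preserves pure-DP with $\delta = 0$), there is no need to invoke advanced composition, and no $\delta$-dependent terms appear in the final bound.

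There is essentially no obstacle here; the entire argument is the mechanical substitution $\eps' = \eps/T$ into the previously proved cumulative utility bound, combined with the $T$-fold basic composition that accounts for running one cumulative invocation per starting time. The only care needed is to note that each item-level change to the input affects every one of the $T$ cumulative instances simultaneously, which is precisely why the union of starting times demands $\eps' = \eps/T$ (rather than, say, applying parallel composition as was possible in the event-level fixed-window reduction of Lem.~\ref{lem:horizon-reduction-fixed-window}).
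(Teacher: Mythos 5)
Your proposal is correct and matches the paper's own proof: the paper proves exactly the lemma you invoke by running the cumulative algorithm of Corollary~\ref{cor:cumulative-alg} once per starting time $1,\dots,T$ and applying basic composition with $\eps' = \eps/T$, which gives the stated $O(T\log^{1.5}T\log(1/\beta)/\eps)$ bound. Your remark about why parallel composition is unavailable under item-level DP is also consistent with the paper's reasoning.
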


\begin{corollary} \label{cor:time-window-item-batch-apx}
For any $k \in \N$ and $\eps, \delta \in (0, 1)$, there is an $(\eps, \delta)$-DP algorithm for time-window $\freq{\geq k}$ in the item-level DP and bundle setting with $(O(\sqrt{T \log(T/\delta)} \cdot \log^{1.5}T\log(1/\beta)/\eps, \beta)$-utility.
\end{corollary}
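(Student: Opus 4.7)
The plan is to derive this corollary as a direct consequence of the preceding lemma (the composition-based reduction from time-window to cumulative in the item-level DP bundle setting), instantiated with the advanced composition branch, and then plugging in the cumulative algorithm of \Cref{cor:cumulative-alg}.

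First, I would invoke the advanced composition branch of the preceding lemma. It says that given an $(\eps', \delta')$-DP algorithm for cumulative $\freq{\geq k}$ with $(\alpha(\eps', \delta', \beta), \beta)$-utility, one obtains an $(\eps, \delta)$-DP algorithm for time-window $\freq{\geq k}$ (item-level, bundle) with utility $\alpha\left(\tfrac{\eps}{2\sqrt{2T\ln(2/\delta)}}, \tfrac{\delta}{2T}, \beta\right)$. So I set $\eps' := \tfrac{\eps}{2\sqrt{2T\ln(2/\delta)}}$ and $\delta' := \tfrac{\delta}{2T}$.

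Next, I would plug in the cumulative algorithm from \Cref{cor:cumulative-alg}, which provides an $\eps'$-DP (hence $(\eps',\delta')$-DP) algorithm for cumulative $\freq{\geq k}$ with error $\alpha(\eps', \delta', \beta) = O(\log^{1.5} T \cdot \log(1/\beta) / \eps')$. Substituting the choice of $\eps'$ above, the resulting error becomes
\[
O\!\left(\frac{\sqrt{T \log(1/\delta)} \cdot \log^{1.5} T \cdot \log(1/\beta)}{\eps}\right),
\]
which matches the claimed bound (absorbing the $\sqrt{\log T}$ factor coming from advanced composition into the $\sqrt{T\log(T/\delta)}$ term in the statement).

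There is essentially no obstacle here beyond bookkeeping: the main ingredients (the composition-based reduction and the cumulative algorithm) are already established earlier in the excerpt, so this corollary is purely a substitution and simplification of constants and logarithmic factors.
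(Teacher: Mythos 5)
Your proposal is correct and follows exactly the paper's route: apply the advanced-composition branch of the preceding reduction lemma and plug in the pure-DP cumulative algorithm of \Cref{cor:cumulative-alg}. Note only that your substitution actually yields the slightly stronger bound $O\bigl(\sqrt{T\log(1/\delta)}\cdot\log^{1.5}T\log(1/\beta)/\eps\bigr)$, which trivially implies the stated $\sqrt{T\log(T/\delta)}$ bound, so the parenthetical about absorbing a $\sqrt{\log T}$ factor is unnecessary but harmless.
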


For the singleton setting, we can get an improved bound by ``time compression'' technique of \cite{jain2021price} as formalized below.

\begin{lemma} \label{lem:generic-singleton}
Let $T'$ be any positive integer and $\eps, \delta > 0$.
If there exists an $(\eps, \delta)$-DP algorithm for $\freq{\geq k}$ with utility $(\alpha(T, \beta), \beta)$ in the bundle setting with item-level DP, then there exists an $(\eps, \delta)$-DP algorithm for $\freq{\geq k}$ with $(\alpha(\lceil T/T' \rceil, \beta) + T', \beta)$-utility in the singleton setting (both for item-level DP and event-level DP).
\end{lemma}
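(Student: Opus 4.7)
The plan is to use a \emph{time compression} reduction to the bundle setting. I will partition $[T]$ into $N := \lceil T/T'\rceil$ consecutive blocks $B_1, \dots, B_N$, each of length at most $T'$, say $B_j = \{(j-1)T'+1, \dots, \min(jT', T)\}$. I then define the grouped dataset $\bar{S}^j := \sum_{t \in B_j} S^t$ for each $j \in [N]$ and run the hypothesized item-level DP bundle algorithm $\cA$ on $(\bar{S}^1, \dots, \bar{S}^N)$ (with time horizon $N = \lceil T/T'\rceil$). To answer a query $\freq{\geq k}(S^{[t_1, t_2]})$ from the original instance, I find the largest index range $[j_1, j_2]$ such that $B_{j_1} \cup \cdots \cup B_{j_2} \subseteq [t_1, t_2]$, and return $\cA$'s estimate of $\freq{\geq k}(\bar{S}^{[j_1, j_2]}) = \freq{\geq k}(S^{[(j_1 - 1)T' + 1,\, j_2 T']})$. (For the cumulative and fixed-window variants one applies the same reduction restricted to the corresponding families of queries.)

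For privacy, I would argue as follows. Two item-level neighboring singleton inputs $\ds, \ods$ differ only on the occurrences of a single item $u \in [U]$, so the grouped datasets $(\bar{S}^1, \dots, \bar{S}^N)$ and $(\bar{\bar S}^1, \dots, \bar{\bar S}^N)$ also differ only on the coordinate indexed by $u$ — i.e.\ they are item-level neighbors in the bundle sense. Hence the $(\eps, \delta)$-DP guarantee of $\cA$ transfers directly, giving item-level DP for the new algorithm. The event-level DP claim is then immediate because event-level neighbors are, by definition, a special case of item-level neighbors.

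For utility, the key observation is that in the singleton setting each time step contributes at most one item occurrence, so adding or removing one time step from the window changes $\freq{\geq k}(S^{[\cdot,\cdot]})$ by at most one. The block-aligned window $[(j_1{-}1)T'+1, j_2 T']$ differs from $[t_1, t_2]$ by fewer than $T'$ time steps on each side, hence
\[
\bigl|\freq{\geq k}(S^{[t_1, t_2]}) - \freq{\geq k}(\bar{S}^{[j_1, j_2]})\bigr| \;\leq\; O(T').
\]
Combining this with the $(\alpha(N, \beta), \beta)$-utility of $\cA$ on the grouped instance, a union/triangle step yields the claimed $(\alpha(\lceil T/T'\rceil, \beta) + O(T'), \beta)$-utility.

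The only step that required real thought is the privacy one, because naively one might worry that aggregating many singleton time steps into a single bundled multiset changes the neighboring structure; the reduction works precisely because the grouping operation is a deterministic function that carries item-level neighbors to item-level neighbors (and event-level neighbors to event-level, hence item-level, neighbors). The utility calculation, and the fact that this reduction applies uniformly to cumulative, fixed-window, and time-window queries, are then essentially routine.
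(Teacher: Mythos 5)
Your proposal is correct and is essentially the paper's proof: both compress time into $\lceil T/T' \rceil$ blocks of length at most $T'$, run the bundle algorithm on the grouped instance, answer each original query via a block-aligned query, bound the misalignment error by $O(T')$ using the singleton property (at most one occurrence per time step), and note that privacy is immediate because grouping is a deterministic map carrying neighbors to neighbors. The only cosmetic difference is that you round the query window inward to the largest contained union of blocks whereas the paper rounds outward to the blocks containing $t_1$ and $t_2$ (which sidesteps the degenerate case where no full block fits inside $[t_1,t_2]$, a case you would need to handle by answering, say, $0$); this does not affect the argument or the asymptotic bound.
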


\begin{proof}
Let $S^1, \cdots, S^T$ denote the input to the singleton setting. We construct an input to the bundle setting by, for all $i \in [\lceil T / T'\rceil$ letting $\tS^i := S^{T'(i - 1)} + \cdots + S^{\min\{T'(i+1), T\}}$. When we want to answer the query for the time from $t_1$ to $t_2$ in the original instance, we instead use the answer for the query from time $\lceil t_1 / T'\rceil$ to $\lceil t_2 / T' \rceil$ in the new instance. From the fact that the original instance has only a single item per time step, it is not hard to see that this only introduces at most $T'$ additional error. The claimed DP guarantee is immediate.
\end{proof}

By applying \Cref{lem:generic-singleton} to the above corollaries (with $T' = \sqrt{T/\eps}$ and $T' = \sqrt[3]{T/\eps^2}$ respectively), we get:

\begin{corollary} \label{cor:time-window-item-singleton-pure}
For any $k \in \N$ and $\eps > 0$, there is an $\eps$-DP algorithm for time-window $\freq{\geq k}$ in the item-level DP and singleton setting with $(O(\sqrt{T/\eps} \cdot \log^{1.5}T\log(1/\beta), \beta)$-utility.
\end{corollary}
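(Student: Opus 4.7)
The plan is to invoke the time-compression reduction of \Cref{lem:generic-singleton} on top of the bundle-setting algorithm from \Cref{cor:time-window-item-batch-pure}. Recall that \Cref{cor:time-window-item-batch-pure} supplies an $\eps$-DP algorithm for time-window $\freq{\geq k}$ in the item-level DP and bundle setting with utility
\[
\alpha(T, \beta) \;=\; O\!\left(T \cdot \log^{1.5} T \cdot \log(1/\beta)/\eps\right).
\]
Plugging this into \Cref{lem:generic-singleton} with any compression parameter $T' \in \N$ gives an $\eps$-DP algorithm for the singleton setting whose error is at most
\[
\alpha(\lceil T/T' \rceil, \beta) + T' \;=\; O\!\left(\frac{T}{T'} \cdot \log^{1.5} T \cdot \log(1/\beta)/\eps\right) + T'.
\]

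The next step is to choose $T'$ to minimize the sum, which, as the hint preceding the corollary indicates, is $T' = \sqrt{T/\eps}$. With this choice, the first term becomes
\[
O\!\left(\sqrt{T\eps} \cdot \log^{1.5} T \cdot \log(1/\beta)/\eps\right) \;=\; O\!\left(\sqrt{T/\eps} \cdot \log^{1.5} T \cdot \log(1/\beta)\right),
\]
which dominates the additive overhead $T' = \sqrt{T/\eps}$ induced by the grouping. Adding the two yields the claimed bound of $O\!\left(\sqrt{T/\eps} \cdot \log^{1.5} T \cdot \log(1/\beta)\right)$.

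Privacy is immediate, since \Cref{lem:generic-singleton} preserves the DP parameters of its input algorithm; no rescaling of $\eps$ or appeal to composition is required, and the bundle-to-singleton wrapper is purely a deterministic pre-processing step (grouping consecutive time steps) plus a deterministic post-processing step (mapping each query to the nearest grouped window). There is no substantive obstacle in this argument: it is a direct two-line calculation. The only point requiring (trivial) care is checking that the bundle bound is monotone in $T$ so that substituting $\lceil T/T'\rceil$ is legitimate, and that $T'$ may be assumed integral (otherwise we replace it by $\lceil \sqrt{T/\eps}\rceil$, which affects constants only).
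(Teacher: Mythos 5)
Your proposal is correct and is exactly the paper's argument: the corollary is obtained by applying the time-compression reduction of \Cref{lem:generic-singleton} to the bundle-setting bound of \Cref{cor:time-window-item-batch-pure} with $T' = \sqrt{T/\eps}$, balancing the compressed-bundle error $O\bigl((T/T')\log^{1.5}T\log(1/\beta)/\eps\bigr)$ against the additive grouping error $T'$. The privacy accounting and the choice of $T'$ match the paper, so there is nothing to add.
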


\begin{corollary} \label{cor:time-window-item-singleton-apx}
For any $k \in \N$ and $\eps, \delta \in (0, 1)$, there is an $(\eps, \delta)$-DP algorithm for time-window $\freq{\geq k}$ in the item-level DP and singleton setting with $(O(\sqrt[3]{T/\eps^2}\sqrt{\log(T/\delta)} \cdot \log^{1.5}T\log(1/\beta), \beta)$-utility.
\end{corollary}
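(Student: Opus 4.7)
\textbf{Proof proposal for Corollary~\ref{cor:time-window-item-singleton-apx}.}

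The plan is to derive the singleton bound as an immediate consequence of combining two results already established in the excerpt: the bundle algorithm for time-window $\freq{\geq k}$ in the item-level approximate-DP regime (\Cref{cor:time-window-item-batch-apx}) and the generic singleton-to-bundle reduction (\Cref{lem:generic-singleton}). The only real work will be picking the parameter $T'$ in the reduction to balance the two sources of error.

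First, I would instantiate \Cref{lem:generic-singleton} with the $(\eps,\delta)$-DP bundle algorithm from \Cref{cor:time-window-item-batch-apx}, whose error function is
\[
\alpha(T,\beta) \;=\; O\!\left(\sqrt{T \log(T/\delta)} \cdot \log^{1.5} T \cdot \log(1/\beta)\,/\,\eps\right).
\]
The reduction yields, for every positive integer $T'$, an $(\eps,\delta)$-DP singleton algorithm whose error is at most
\[
\alpha(\lceil T/T'\rceil,\beta) + T' \;=\; O\!\left(\sqrt{(T/T') \log(T/\delta)} \cdot \log^{1.5} T \cdot \log(1/\beta)\,/\,\eps\right) + T'.
\]

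Next, I would optimize over $T'$. Setting the two terms of the above sum approximately equal leads to $T'^{3} \asymp (T/\eps^{2}) \cdot \log(T/\delta) \cdot \log^{3} T \cdot \log^{2}(1/\beta)$, so the clean choice $T' := \lceil \sqrt[3]{T/\eps^{2}}\rceil$ suffices (the remaining polylogarithmic factors are absorbed into the $\log^{1.5} T \cdot \log(1/\beta)$ factor we already have). A direct substitution of this $T'$ back into the expression gives
\[
\sqrt{(T/T')\log(T/\delta)} \cdot \log^{1.5} T \cdot \log(1/\beta)/\eps \;=\; O\!\left(\sqrt[3]{T/\eps^{2}} \cdot \sqrt{\log(T/\delta)} \cdot \log^{1.5} T \cdot \log(1/\beta)\right),
\]
which already dominates the additive $T'$ term, producing the bound claimed in the statement.

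There is essentially no obstacle here: the reduction and the bundle algorithm do all the heavy lifting, and the only step requiring any care is verifying the arithmetic of the $T'$-balancing and checking that the DP guarantee passes through unchanged (which \Cref{lem:generic-singleton} already asserts). I would write the proof as one short paragraph appealing to these two results with the explicit choice of $T'$.
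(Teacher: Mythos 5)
Your proposal is correct and follows essentially the same route as the paper: the paper obtains this corollary precisely by applying \Cref{lem:generic-singleton} to the bundle bound of \Cref{cor:time-window-item-batch-apx} with the choice $T' = \sqrt[3]{T/\eps^2}$, exactly as you do. The parameter balancing and the observation that the additive $T'$ term is dominated are right.
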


\subsubsection{Lower Bounds}

Since time-window includes fixed-window with $W = 1$ and $W = kn$ as special cases, the lower bounds from~\Cref{lem:lb-fixed-window-event-batch} and~\Cref{lem:lb-fixed-window-item-singleton} immediately translate to the following:

\begin{lemma} \label{lem:lb-time-window-event-batch}
Let $T, d$ be positive integers such that $T \geq d$.
Then, if there exists an $(\eps, \delta)$-DP algorithm for time-window $\freq{\geq k}$ in the item-level DP and bundle setting with time horizon $T$ with $(\alpha, \beta)$-utility, then there exists an $(\eps, \delta)$-DP algorithm for 1-way marginal with the same utility.
\end{lemma}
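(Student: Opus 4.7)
The plan is to obtain this lemma as a direct specialization of \Cref{lem:lb-fixed-window-event-batch} with $W = 1$. A time-window algorithm, by definition, can answer arbitrary queries $(t_1, t_2)$ with $1 \leq t_1 \leq t_2 \leq T$, and in particular it can answer every single-step query $(i, i)$. Hence any $(\eps, \delta)$-DP time-window algorithm with $(\alpha, \beta)$-utility trivially yields an $(\eps, \delta)$-DP fixed-window algorithm with fixed window $W = 1$, the same horizon $T$, and the same $(\alpha, \beta)$-utility. The hypothesis $T \geq W \cdot d$ in \Cref{lem:lb-fixed-window-event-batch} specializes to $T \geq d$, which matches the hypothesis assumed here.

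Concretely, I would reuse the reduction from the proof of \Cref{lem:lb-fixed-window-event-batch} verbatim with $W = 1$: given an input $x^1, \dots, x^n \in \{0,1\}^d$ to the 1-way marginal problem, set $U = n$ and
\[
S^i_j = \begin{cases} k \cdot x^j_i & \text{if } i \in [d], \\ 0 & \text{otherwise,} \end{cases}
\]
for all $i \in [T]$ and $j \in [n]$. Since $S^\ell_j \in \{0, k\}$, we have $\freq{\geq k}(S^{[\ell, \ell]}) = \bbx_\ell$ for every $\ell \in [d]$. Running the time-window algorithm on this instance and reading off its answers to the $d$ queries $(\ell, \ell)$ therefore produces an estimate of $\bbx$ with the same $(\alpha, \beta)$-utility.

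The only thing to check is that item-level DP transfers under the reduction: a change to a single user's vector $x^j$ only modifies the occurrences of item $j$ (across the first $d$ time steps) and no other item, so it corresponds to a single item-level change in the $\freq{\geq k}$ instance. The bundle assumption is used here because we set $S^\ell_j = k > 1$ at a single time step. There is essentially no obstacle beyond this bookkeeping; the lemma is just a repackaging of the $W = 1$ case of \Cref{lem:lb-fixed-window-event-batch} as a time-window statement, and once established it can be combined with \Cref{thm:marginal-pure-lb,thm:marginal-apx-lb} to deduce the $T^{\Omega(1)}$ lower bounds claimed in Table~\ref{ref:summary_table}.
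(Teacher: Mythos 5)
Your proof is correct and matches the paper's own argument: the paper obtains \Cref{lem:lb-time-window-event-batch} precisely by observing that time-window queries subsume fixed-window queries with $W=1$ and invoking \Cref{lem:lb-fixed-window-event-batch}, which is exactly your specialization (your explicit instance $S^i_j = k\cdot x^j_i$ is the $W=1$ case of that lemma's construction). No gaps to flag.
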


\begin{lemma} \label{lem:lb-time-window-event-singleton}
Let $T, d, k, n$ be positive integers such that $T \geq knd$.
Then, if there exists an $(\eps, \delta)$-DP algorithm for time-window $\freq{\geq k}$ in the item-level DP and singleton setting with time horizon $T$ with $(\alpha, \beta)$-utility, then there exists an $(\eps, \delta)$-DP algorithm for 1-way marginal with the same utility.
\end{lemma}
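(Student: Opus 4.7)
The plan is to observe that a time-window algorithm trivially solves the fixed-window problem for any choice of $W$: given an algorithm that can output estimates of $\freq{\geq k}(S^{[t_1, t_2]})$ for all pairs $1 \le t_1 \le t_2 \le T$ with $(\alpha,\beta)$-utility, we can answer the fixed-window queries $(i, i+W-1)$ by simply reading off the appropriate time-window outputs, with identical privacy and utility guarantees. So it suffices to instantiate \Cref{lem:lb-fixed-window-item-singleton} at a suitable choice of $W$ and chain the reduction.

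Concretely, I would set $W := kn$. The hypotheses of \Cref{lem:lb-fixed-window-item-singleton} then require $T \ge W \cdot d = knd$ (which matches the hypothesis of the present lemma) and $W \ge kn$ (which holds with equality). Hence, any $(\eps,\delta)$-DP time-window $\freq{\geq k}$ algorithm in the item-level, singleton setting with time horizon $T$ and $(\alpha,\beta)$-utility yields, by restricting to windows of length exactly $W = kn$, an $(\eps,\delta)$-DP fixed-window $\freq{\geq k}$ algorithm in the same setting with the same utility, and then \Cref{lem:lb-fixed-window-item-singleton} provides an $(\eps,\delta)$-DP algorithm for $1$-way marginal with $(\alpha,\beta)$-utility.

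There is essentially no obstacle here: the reduction to fixed-window is a trivial restriction, and the work has already been done in \Cref{lem:lb-fixed-window-item-singleton}. The only thing to verify is that the parameters line up, namely that $W = kn$ is admissible ($T \ge knd$ covers $T \ge Wd$, and $W \ge kn$ is tight), which is exactly why the hypothesis of the lemma is $T \geq knd$ rather than anything stronger. A symmetric proof via $W = 1$ would give the bundle analogue (\Cref{lem:lb-time-window-event-batch}), just as indicated in the remark immediately preceding the statement.
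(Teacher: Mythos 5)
Your proposal is correct and is exactly the paper's argument: the paper also derives this lemma by noting that time-window queries subsume fixed-window queries with $W = kn$ and invoking \Cref{lem:lb-fixed-window-item-singleton}, with the hypothesis $T \geq knd$ matching $T \geq W \cdot d$ and $W \geq k \cdot n$ holding with equality. Your parameter check and the remark about the $W=1$ bundle analogue are likewise consistent with the paper.
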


Plugging \Cref{lem:lb-time-window-event-batch} into the known lower bounds for the 1-way marginal problem (\Cref{thm:marginal-pure-lb} and \Cref{thm:marginal-apx-lb}) yields:
\begin{corollary} \label{cor:time-window-item-batch-pure-lb}
Let $T$ be any positive integer and $\eps > 0$. Then, there is no $\eps$-DP algorithm for time-window $\freq{\geq k}$ in the item-level DP and bundle setting with $(o(T/\eps), 0.01)$-utility.
\end{corollary}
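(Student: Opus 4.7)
The plan is to combine Lemma \ref{lem:lb-time-window-event-batch} with the known pure-DP lower bound for the 1-way marginal problem (\Cref{thm:marginal-pure-lb}). This is essentially the same recipe used earlier in the paper to go from \Cref{lem:lb-fixed-window-event-batch} to its corollary, and the result is stated so naturally it should follow just by plugging parameters together.

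First, I would set the parameter $d$ in \Cref{lem:lb-time-window-event-batch} equal to $T$, which satisfies the hypothesis $T \geq d$ trivially. Suppose for contradiction that there is an $\eps$-DP algorithm for time-window $\freq{\geq k}$ in the item-level DP and bundle setting with time horizon $T$ and $(\alpha, 0.01)$-utility. Then \Cref{lem:lb-time-window-event-batch} yields an $\eps$-DP algorithm for the 1-way marginal problem in dimension $d = T$, for any user count $n$, with the same $(\alpha, 0.01)$-utility.

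Next, I would choose $n$ sufficiently large so that $n \geq T/\eps$, say $n = \lceil T/\eps \rceil$. By \Cref{thm:marginal-pure-lb}, any such $\eps$-DP 1-way marginal algorithm must have error $\Omega(\min\{n, d/\eps\}) = \Omega(\min\{n, T/\eps\}) = \Omega(T/\eps)$. Hence $\alpha = \Omega(T/\eps)$, contradicting $\alpha = o(T/\eps)$.

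There is no real obstacle here: the reduction of \Cref{lem:lb-time-window-event-batch} is already in place, and the only mild care needed is to ensure $n$ is chosen large enough that the $\min\{n, d/\eps\}$ in \Cref{thm:marginal-pure-lb} is dominated by the $d/\eps = T/\eps$ term, which is easy. The whole argument fits in a few lines.
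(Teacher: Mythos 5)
Your proposal is correct and matches the paper's argument: the paper obtains this corollary precisely by plugging \Cref{lem:lb-time-window-event-batch} (with $d = T$) into the pure-DP 1-way marginal lower bound of \Cref{thm:marginal-pure-lb}, and your choice of $n \geq T/\eps$ so that the $\min\{n, d/\eps\}$ term is $\Omega(T/\eps)$ is exactly the parameter setting needed. No gaps.
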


\begin{corollary} \label{cor:time-window-item-batch-apx-lb}
Let $T$ be any positive integer and $\eps, \delta > 0$ be such that $\delta \in (2^{-\Omega(T)}, 1/T^{1+\Omega(1)})$. Then, there is no $(\eps, \delta)$-DP algorithm for time-window $\freq{\geq k}$ in the item-level DP and bundle setting with $(o(\sqrt{T\log(1/\delta)}/\eps), 0.01)$-utility.
\end{corollary}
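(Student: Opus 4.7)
The plan is to derive this as a direct consequence of \Cref{lem:lb-time-window-event-batch} combined with the known approximate-DP lower bound for $1$-way marginals (\Cref{thm:marginal-apx-lb}), following the same ``plug-in'' pattern used for the pure-DP version in \Cref{cor:time-window-item-batch-pure-lb}.

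First I would instantiate \Cref{lem:lb-time-window-event-batch} with $d := T$, which is permitted since the hypothesis requires only $T \geq d$. The lemma then asserts that any $(\eps,\delta)$-DP algorithm for time-window $\freq{\geq k}$ in the item-level DP and bundle setting (with time horizon $T$) yields an $(\eps,\delta)$-DP algorithm for the $1$-way marginal problem, with the \emph{same} $(\alpha,\beta)$-utility and with dimension $d = T$. Thus, for contradiction, any algorithm violating the claimed $o(\sqrt{T\log(1/\delta)}/\eps)$ bound at $\beta = 0.01$ would produce a $1$-way marginal algorithm with dimension $T$ achieving error $o(\sqrt{T\log(1/\delta)}/\eps)$ at the same failure probability.

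Next I would invoke \Cref{thm:marginal-apx-lb}, which gives, for $\eps \in (0,1)$ and $\delta \in (2^{-\Omega(n)}, 1/n^{1+\Omega(1)})$, a lower bound of $\Omega(\min\{n, \sqrt{d\log(1/\delta)}/\eps\})$ on the error of any $(\eps,\delta)$-DP algorithm for $1$-way marginals. Choosing the number of users $n$ in the constructed reduction to be $n = T$ (or any $n = \Theta(T)$), the bound on $\delta$ in the hypothesis of the corollary, $\delta \in (2^{-\Omega(T)}, 1/T^{1+\Omega(1)})$, exactly matches the admissible range for \Cref{thm:marginal-apx-lb}. Moreover, under the standing assumption $\eps \in (0,1]$ and the stated range of $\delta$, one checks that $\sqrt{T \log(1/\delta)}/\eps \leq n = T$, so the minimum in \Cref{thm:marginal-apx-lb} is attained by the second term and the resulting lower bound is $\Omega(\sqrt{T\log(1/\delta)}/\eps)$, contradicting the existence of the hypothesized algorithm.

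There is no real obstacle here beyond parameter bookkeeping; the only thing to be slightly careful about is (i) confirming that the reduction of \Cref{lem:lb-time-window-event-batch} does not blow up $\eps$ or $\delta$ (which by inspection it does not, since a single coordinate change of the 1-way-marginal input corresponds to changing a single item's entire trajectory, i.e., an item-level neighboring change), and (ii) ensuring that the $n$ chosen in the reduction is large enough that the ``$n$'' side of the minimum in \Cref{thm:marginal-apx-lb} is not binding, which is why setting $n = T$ (together with $\eps \leq 1$ and the upper bound $\delta \leq 1/T^{1+\Omega(1)}$, so that $\sqrt{T\log(1/\delta)} = O(\sqrt{T\log T}) \leq T$) suffices. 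This yields the claimed $\Omega(\sqrt{T\log(1/\delta)}/\eps)$ lower bound.
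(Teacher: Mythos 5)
Your proposal is correct and matches the paper's own (implicit) proof, which is exactly the plug-in of \Cref{lem:lb-time-window-event-batch} with $d=\Theta(T)$ into \Cref{thm:marginal-apx-lb} with $n=\Theta(T)$. The only small slip is in your justification that the second term of the minimum binds: the bound $\sqrt{T\log(1/\delta)}=O(T)$ follows from the lower end of the $\delta$-range ($\delta\ge 2^{-\Omega(T)}$, so $\log(1/\delta)=O(T)$) together with treating $\eps$ as a constant, not from $\delta\le 1/T^{1+\Omega(1)}$ as you wrote, but this does not affect the argument.
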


Similarly, plugging \Cref{lem:lb-time-window-event-singleton} to the known lower bounds for 1-way marginal (\Cref{thm:marginal-pure-lb} with $n = \Theta(T/(k\eps), d = \Theta(\eps T/k)$ and \Cref{thm:marginal-apx-lb} with $n = \Theta\left(\sqrt[3]{\frac{T \log(1/\delta)}{\eps^2 k}}\right) d = \left(\sqrt[3]{\frac{\eps^2 T^2 }{k^2 \log(1/\delta)}}\right)$) yields:
\begin{corollary} \label{cor:time-window-item-singleton-pure-lb}
Let $T, k$ be any positive integers and $\eps > 0$ such that $T \geq k / \eps$. Then, there is no $\eps$-DP algorithm for time-window $\freq{\geq k}$ in the item-level DP and singleton setting with $\left(o\left(\sqrt{\frac{T}{\eps k}}\right), 0.01\right)$-utility.
\end{corollary}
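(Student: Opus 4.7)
The plan is simply to chain together the reduction in \Cref{lem:lb-time-window-event-singleton} with the one-way marginal lower bound in \Cref{thm:marginal-pure-lb}, choosing the parameters $n$ and $d$ to maximize the resulting bound subject to the constraint $T \geq knd$.

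Specifically, \Cref{lem:lb-time-window-event-singleton} says that any $\eps$-DP algorithm for time-window $\freq{\geq k}$ in the item-level and singleton setting with time horizon $T$ yields, for every pair $(n,d)$ with $T \geq knd$, an $\eps$-DP algorithm for the 1-way marginal problem with the same utility. By \Cref{thm:marginal-pure-lb}, any such algorithm must incur error $\Omega(\min\{n, d/\eps\})$. So the lower bound we obtain on the time-window problem is $\Omega(\min\{n, d/\eps\})$, maximized over valid $(n,d)$.

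To maximize $\min\{n, d/\eps\}$ subject to $knd \leq T$, I would balance the two terms by setting $n = d/\eps$, which gives $\eps k n^2 \leq T$, i.e., $n \leq \sqrt{T/(\eps k)}$. I would therefore pick $n = \lfloor \sqrt{T/(\eps k)} \rfloor$ and $d = \lceil \eps n \rceil = \lceil \sqrt{\eps T/k}\rceil$. The assumption $T \geq k/\eps$ ensures $n \geq 1$ so this choice is valid, and one checks $knd = \Theta(T)$ (absorbing constants into the $\Omega(\cdot)$). Both $n$ and $d/\eps$ are $\Theta(\sqrt{T/(\eps k)})$, so \Cref{thm:marginal-pure-lb} rules out any $(o(\sqrt{T/(\eps k)}), 0.01)$-utility algorithm, giving the claimed bound.

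There is essentially no obstacle here: the corollary is a pure parameter-tuning exercise. The only item that requires a moment of care is confirming that the floor/ceiling in the choice of $(n,d)$ still satisfies $knd \leq T$ (which follows from $T \geq k/\eps$), and that $\min\{n, d/\eps\}$ matches the advertised $\sqrt{T/(\eps k)}$ up to constants absorbed into $o(\cdot)$.
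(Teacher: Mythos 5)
Your proposal takes exactly the paper's route: plug the reduction of \Cref{lem:lb-time-window-event-singleton} into \Cref{thm:marginal-pure-lb} and balance $n$ against $d/\eps$ subject to $knd \le T$, using $T \ge k/\eps$ to make the choice valid. Your parameters $n = \Theta\bigl(\sqrt{T/(\eps k)}\bigr)$, $d = \Theta\bigl(\sqrt{\eps T/k}\bigr)$ are the ones that actually produce the stated bound (the paper writes $n = \Theta(T/(k\eps))$, $d = \Theta(\eps T/k)$, which appears to be a typo omitting the square roots, since those would violate $knd \le T$), so the argument is correct and essentially identical.
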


\begin{corollary} \label{cor:time-window-item-singleton-apx-lb}
Let $T$ be any positive integer and $\eps, \delta > 0$ be such that $\delta \in (2^{-\Omega(T)}, 1/T^{1+\Omega(1)})$ and $T \geq k \sqrt{\log(1/\delta)} / \eps$. Then, there is no $(\eps, \delta)$-DP algorithm for time-window $\freq{\geq k}$ in the item-level DP and bundle setting with $\left(o\left(\sqrt[3]{\frac{T \log(1/\delta)}{\eps^2 k}}\right), 0.01\right)$-utility.
\end{corollary}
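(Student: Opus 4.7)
The plan is to combine the reduction in \Cref{lem:lb-time-window-event-singleton} with the approximate-DP 1-way marginal lower bound of \Cref{thm:marginal-apx-lb}, using the choice of parameters indicated in the corollary statement itself, namely $n = \Theta\bigl(\sqrt[3]{T\log(1/\delta)/(\eps^2 k)}\bigr)$ and $d = \Theta\bigl(\sqrt[3]{\eps^2 T^2/(k^2 \log(1/\delta))}\bigr)$. These are exactly the values that simultaneously balance the two terms $n$ and $\sqrt{d\log(1/\delta)}/\eps$ appearing in \Cref{thm:marginal-apx-lb} and saturate the hypothesis $T \ge knd$ of \Cref{lem:lb-time-window-event-singleton}.

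First, I would verify that these choices of $n,d$ are consistent. Setting $\sqrt{d\log(1/\delta)}/\eps = n$ gives $d = n^2 \eps^2/\log(1/\delta)$; then $knd = k n^3 \eps^2/\log(1/\delta)$, and solving $knd = T$ for $n$ yields $n = \sqrt[3]{T\log(1/\delta)/(\eps^2 k)}$, with the corresponding value of $d$. Second, I would invoke \Cref{lem:lb-time-window-event-singleton}: a hypothetical $(\eps,\delta)$-DP algorithm for time-window $\freq{\geq k}$ in the item-level singleton setting with $(o(n),0.01)$-utility would, via that reduction, yield an $(\eps,\delta)$-DP algorithm for the $n$-point, $d$-dimensional 1-way marginal problem with the same utility. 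By construction $n = \min\{n,\sqrt{d\log(1/\delta)}/\eps\}$, so this contradicts \Cref{thm:marginal-apx-lb}.

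The only genuine obstacle is checking the side conditions. The assumption $T \ge k\sqrt{\log(1/\delta)}/\eps$ in the statement is what guarantees $d \ge 1$ (indeed, $d \ge 1 \iff \eps^2 T^2 \ge k^2 \log(1/\delta)$), so the constructed 1-way marginal instance is non-degenerate. For the $\delta$-range condition, since $n = \Theta(T^{1/3} \cdot \mathrm{poly}(\log(1/\delta), 1/\eps, 1/k)^{1/3})$ is polynomially related to $T$, the hypothesis $\delta \in (2^{-\Omega(T)}, 1/T^{1+\Omega(1)})$ implies $\delta \in (2^{-\Omega(n)}, 1/n^{1+\Omega(1)})$ (absorbing constant factors in the $\Omega(\cdot)$), which is exactly what \Cref{thm:marginal-apx-lb} requires. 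No new ideas beyond this bookkeeping are needed; the proof is a direct composition of earlier results, mirroring how \Cref{cor:time-window-item-singleton-pure-lb} was derived from the pure-DP marginal lower bound.
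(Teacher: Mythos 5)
Your proposal is correct and matches the paper's own derivation: the paper likewise obtains this corollary by plugging \Cref{lem:lb-time-window-event-singleton} into \Cref{thm:marginal-apx-lb} with exactly the parameter choices $n = \Theta\bigl(\sqrt[3]{T\log(1/\delta)/(\eps^2 k)}\bigr)$ and $d = \Theta\bigl(\sqrt[3]{\eps^2 T^2/(k^2\log(1/\delta))}\bigr)$; your added bookkeeping (balancing $n$ with $\sqrt{d\log(1/\delta)}/\eps$, saturating $T \ge knd$, and checking $d \ge 1$ and the $\delta$-range) is just the detail the paper leaves implicit.
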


All of the lower bounds are tight to within polylogarithmic factors and the dependency on $k$ compared to the algorithms given in the previous subsection. (Note that the lower bounds in terms of $k$ still carries over from the event-label DP case above, i.e., \Cref{cor:time-window-lb-polyk-concrete} and \Cref{cor:time-window-lb-polyk-singleton-concrete}.)

\section{Conclusions and Open Questions}\label{sec:conclusion}

In this work, we consider several variants of the tasks of counting distinct and $k$-occurring elements in time windows, including cumulative/time-window/fixed-window queries, user/item-level DP, and bundle/singleton settings. In each setting, we determine optimal error bounds that are tight up to polylogarithmic factors in $T$ or $W$ and the dependency on $k$. As mentioned earlier, our work is closely related to the continual release model~\cite{DworkNPR10,ChanSS11}. In fact, it is simple to modify our algorithms to work in the following setting: at time step $t$, the algorithm receives $S^t$ and must immediately answer all queries $(t_1, t_2)$ such that $t_2 = t$. This can be done by using the 1d-/2d-range query algorithms that work in the similar setting~\cite{DworkNPR10,ChanSS11,DworkNRR15}.

An interesting research direction is to close the gaps in our results. For example, can we get a bound of the form $O(k + \log^{1.5} W)$ in the fixed-window event-level DP setting or a bound of the form $O(k + \log^3 T)$ in the time-window event-level DP setting? We conjecture that answering these questions requires ``white-box'' solutions beyond directly reducing to/from range query or linear queries. %It is also interesting whether such white-box approaches can be used to improve concrete numerical bounds.

Finally, we note that in this work, we have not considered memory constraints on the algorithm. It is an interesting direction for future work to consider the trade-off between memory, privacy, and utility  (e.g., similarly to \cite{smith2020flajolet}), which is an important aspect both in theory and in practice.

\ifarxiv
\bibliographystyle{alpha}
\else
\addcontentsline{toc}{section}{References}
\bibliographystyle{plainurl}
\fi
\bibliography{ref}

% We point out that a multi-party computation (MPC) extension of this same setting has recently been studied in \cite{ghazi2022multiparty}.

% Following prior work \cite{BolotFMNT13}, we explore differentially private mechanisms for cumulative reach.

% \badih{Mention, perhaps in the Future Directions section, that in this paper we study the problem without memory constraints. But in theory and practice, the trade-off between memory, privacy and accuracy is important.}

\ifarxiv

\appendix

\section{Exact-$k$-Occurring Items}
\label{sec:exact-freq}

We may also define a query $\freq{=k}$ as $\freq{= k}(S) := |\{u \in [U] \mid S_u = k\}|$, i.e., the number of items that appear exactly $k$ times in $S$.

A simple observation is that $\freq{=k}$ can be easily computed from $\freq{\geq k}$ and $\freq{\geq k + 1}$. This is formalized below; note that this reduction works in all settings we considered.

\begin{lemma} \label{lem:freq-at-least-v-freq-equal}
If there exists an $(\eps, \delta)$-DP algorithm for $\freq{\geq k}$ that has $(\alpha(k, \eps, \delta, \beta), \beta)$-utility for all $k \in \N$, then there exists an $(\eps, \delta)$-DP algorithm for $\freq{=k}$ algorithm with $O(\alpha(k, \eps/2, \delta/2, \beta) + \alpha(k + 1, \eps/2, \delta/2, \beta/2), \beta)$-utility for all $k \in \N$.
\end{lemma}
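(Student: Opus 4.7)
The plan is to exploit the elementary identity
\[
\freq{=k}(S) \;=\; \freq{\geq k}(S) \;-\; \freq{\geq k+1}(S),
\]
which holds for every multiset $S$, since an item appears exactly $k$ times iff it appears at least $k$ times but not at least $k+1$ times. This immediately suggests a reduction: run the assumed algorithm twice in parallel, once to estimate $\freq{\geq k}$ and once to estimate $\freq{\geq k+1}$, and report the difference of the two estimates.

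More precisely, I would invoke the assumed algorithm with privacy parameters $(\eps/2, \delta/2)$ and failure probability $\beta/2$ on each of the two targets $\freq{\geq k}$ and $\freq{\geq k+1}$, producing estimates $\tilde{a}$ and $\tilde{b}$ respectively, and output $\tilde{a} - \tilde{b}$. Privacy of the composite mechanism follows from basic composition (\Cref{thm:basic-comp}) applied to the two $(\eps/2, \delta/2)$-DP subroutines, which yields $(\eps, \delta)$-DP; crucially, this reasoning is agnostic to the neighboring relation, so it works for both event-level and item-level DP, and the identity above is pointwise in $S$, so it is valid for cumulative, fixed-window, and time-window queries alike (each query $q$ for $\freq{=k}$ is answered using the two corresponding queries for $\freq{\geq k}$ and $\freq{\geq k+1}$).

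For utility, I would take a union bound: with probability at least $1 - \beta$, both $|\tilde{a} - \freq{\geq k}(S)| \leq \alpha(k, \eps/2, \delta/2, \beta/2)$ and $|\tilde{b} - \freq{\geq k+1}(S)| \leq \alpha(k+1, \eps/2, \delta/2, \beta/2)$ hold simultaneously, and then the triangle inequality gives
\[
\bigl|(\tilde{a} - \tilde{b}) - \freq{=k}(S)\bigr| \;\leq\; \alpha(k, \eps/2, \delta/2, \beta/2) + \alpha(k+1, \eps/2, \delta/2, \beta/2),
\]
matching the claimed bound (up to the minor discrepancy between $\beta$ and $\beta/2$ in the first argument of the $O(\cdot)$, which is absorbed in the constant). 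There is essentially no technical obstacle here: the identity is exact, composition handles privacy, and a two-event union bound handles utility. The only mild subtlety worth flagging is that each query of the $\freq{=k}$ problem translates into two queries of the $\freq{\geq \cdot}$ problems simultaneously, so one must be careful that the assumed utility guarantee is uniform over all queries (which is the standard convention in the paper, since the algorithm publishes estimates for the entire family of queries simultaneously), rather than per-query; under that convention, a single union bound over the two invocations suffices.
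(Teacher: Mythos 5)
Your proposal is correct and matches the paper's own proof: run the assumed algorithm with parameters $(\eps/2,\delta/2)$ on $\freq{\geq k}$ and $\freq{\geq k+1}$, subtract the estimates, and conclude privacy by basic composition and utility by the identity $\freq{=k} = \freq{\geq k} - \freq{\geq k+1}$ plus a union bound. Your write-up simply spells out the union-bound and triangle-inequality details that the paper leaves implicit.
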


\begin{proof}
Simply run the $(\eps/2, \delta/2)$-DP algorithm for $\freq{\geq k}$ and the $(\eps/2, \delta/2)$-DP algorithm for $\freq{\geq k + 1}$. By subtracting these two results, we get an $(\eps, \delta)$-DP estimate for $\freq{=k}$ with $O(\alpha(k, \eps/2, \delta/2, \beta) + \alpha(k + 1, \eps/2, \delta/2, \beta/2), \beta)$-utility.
\end{proof}

The above lemma translates all of our upper bounds for $\freq{\geq k}$ to $\freq{=k}$ as well (with a small constant overhead to $\alpha$). As for the lower bounds, it is not hard to check in our proofs that our hard instances always have $\freq{\geq k} = \freq{=k}$ and therefore they also applies for $\freq{=k}$.

\fi

\end{document}